\DeclareMathOperator{\plurality}{\mathrm{Plurality}}
\theoremstyle{theorem}
\newtheorem{lemma}{Lemma}
\newtheorem{theorem}{Theorem}
\theoremstyle{definition}
\newtheorem{definition}{Definition}
\newtheorem{corollary}{Corollary}
\title{Replicating Electoral Success\thanks{Extended version of AAAI '25 paper. This work was performed while K.T.\ and T.N.\ were at Cornell University.}}
\author[1]{Kiran Tomlinson\thanks{kt@cs.cornell.edu}}
\author[2]{Tanvi Namjoshi}
\author[3]{Johan Ugander}
\author[4]{Jon Kleinberg}
\affil[1]{Microsoft Research}
\affil[2]{Princeton University}
\affil[3]{Stanford University}
\affil[4]{Cornell University}
\date{}
\begin{document}

\maketitle

\begin{abstract}
A core tension in the study of plurality elections is the clash between the classic Hotelling--Downs model, which predicts that two office-seeking candidates should position themselves at the median voter's policy, and the empirical observation that real-world democracies often have two major parties with divergent policies. 
Motivated in part by this tension and drawing from bounded rationality, we introduce a dynamic model of candidate positioning based on a simple behavioral heuristic: candidates imitate the policy of previous winners.
The resulting model is closely connected to evolutionary replicator dynamics and, despite its simplicity, exhibits complex behavior and contrasts considerably with existing modeling approaches. 
For uniformly-distributed voters, we prove in our model that when there are $k = 2$, $3$, or $4$ candidates per election, any symmetric candidate distribution converges over time to a concentration of candidates at the center. 
With $k \ge 5$ or more candidates per election, however, we prove that the candidate distribution does not converge to the center. For initial distributions of $k \ge 5$ candidates without any extreme candidates, we prove a stronger statement than non-convergence, showing that the density in an interval around the center goes to zero. 
As a matter of robustness, our conclusions are qualitatively unchanged (though require different analyses) if a small fraction of candidates are not winner-copiers and are instead positioned uniformly at random in each election. 
Beyond our theoretical analysis, we illustrate our results in extensive simulations; for five or more candidates, we find a tendency towards the emergence of two clusters, a mechanism suggestive of Duverger's Law, the empirical finding that plurality leads to two-party systems.
Our simulations also explore several variations of the model, including non-uniform voter distributions, other forms of noise, and replication with memory of earlier rounds of elections. 
In these simulated variants, we find the same general pattern: convergence to the center with four or fewer candidates, but not with five or more. 
Finally, we discuss the relationship between our replicator dynamics model and prior work on strategic equilibria of candidate positioning games.
\end{abstract}

\section{Introduction}

In a democracy, election outcomes determine the trajectory of public policy. A central question in the study of elections is therefore whether we can model which policies are electorally successful. However, elections are extremely complex, with layered interactions between voters, candidates, and the incentives that guide them. To understand the principles governing elections, we therefore need to pare down this complexity and begin with simple models. The literature around this topic traces its roots to \citet{hotelling1929stability} and \citet{downs1957economic}. In the Hotelling--Downs model, candidates compete for election in a one-dimensional policy space. Under the assumption that voters prefer candidates closer to them in policy space, two rational office-seeking candidates will adopt the policy of the median voter, since any other position receives strictly fewer votes. Thus, the central prediction of the Hotelling--Downs model is that we should expect candidates to espouse near-identical moderate policies; in economic contexts, this is often called the \emph{principle of minimum differentiation}~\cite{eaton1975principle,de1985principle}. However, this is not what we observe in modern democracies: countries using plurality often have two dominant parties with markedly different policies~\cite{poole1984polarization,grofman2004downs,riker1982two}. Decades of research have attempted to reconcile this observed policy divergence with the intuitive arguments of Hotelling and Downs~\cite{grofman2004downs,osborne1995spatial}, postulating additional factors like the threat of third-party entry~\cite{palfrey1984spatial} or policy- rather than office-motivated candidates~\cite{wittman1983candidate}. Subsequent research has also expanded beyond two-candidate analysis to consider $k$-candidate elections~\cite{cox1987electoral}.

  The majority of this work has continued under the traditional assumption that candidates are rational and able to make strategically optimal decisions. However, the growing literature on bounded rationality~\cite{simon1955behavioral,simon1979rational} and decision-making heuristics~\cite{tversky1974judgment}, as well as the complexity of elections, casts doubt on whether this is likely in practice. In  a notable exception to the literature on rational candidate positioning, \citet{bendor2011behavioral} argue that heuristics play a crucial role in electoral strategy:

\begin{displayquote}
Campaigns are of chess like complexity---worse, probably; instead of a fixed board, campaigns are fought out on stages that can change over time, and new players can enter the game. Hence, cognitive constraints (e.g., the inability to look far down the decision tree, to anticipate your opponent's response to your response to their response to your new ad) inevitably matter. [...] Thus, political campaigns, like military ones, are filled with trial and error. A theme is tried, goes badly (or seems to), and is dropped. The staff hurries to find a new one, which seems to work initially and then weakens. A third is tried, and then a fourth. [...] \emph{In short, there are good reasons for believing that the basic properties of experiential learning---becoming more likely to use something that has worked in the past and less likely to repeat something that has failed---hold in presidential campaigns.}~\cite[emphasis ours]{bendor2011behavioral}
\end{displayquote}

\begin{figure}[t]
 \includegraphics[width=\textwidth]{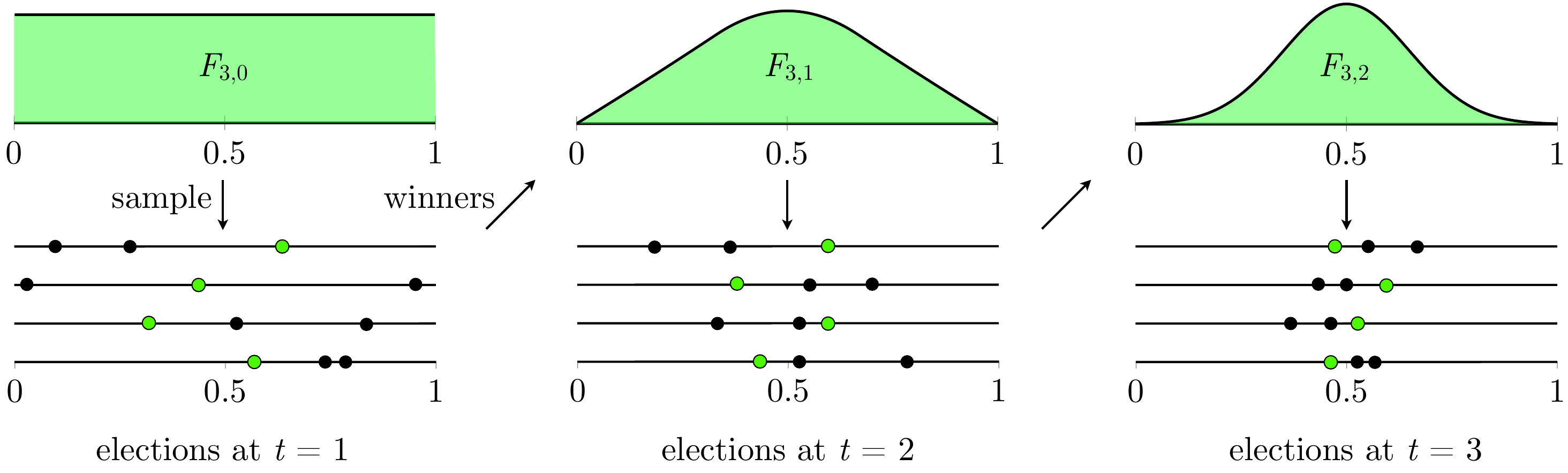}
  \caption{Replicator dynamics for candidate positioning with $k=3$ candidates per election. The top row shows the winner distributions $F_{k, t}$ for each generation $t$, starting from a uniform distribution at $t = 0$,  while the bottom row shows four example elections per generation. In each generation, candidates sample their positions from the winner distribution from the previous generation. Plurality winners (with voters uniform over $[0, 1]$) are indicated in green.}
  \label{fig:overview}
\end{figure}

\paragraph{Our model.} In this paper, we introduce a
model of candidate positioning based on the above heuristic:
candidates imitate success. We focus on plurality elections, where
each voter casts one vote and the candidate with the most votes wins.
We  assume voters have 1-Euclidean
preferences~\cite{coombs1950psychological,elkind2016preference}, where
voters and candidates occupy points in the unit interval $[0, 1]$ and
voters prefer closer candidates. To represent a large voting
population, our model uses a continuum of voters and continuous vote
shares rather than discrete counts. Diverging from prior work, we
model a large number of $k$-candidate elections that proceed in
generations rather than an individual election or election sequence.
In each generation, we assume that candidates copy the policy position
of a winner from the previous generation, a simple heuristic in line
with \citeauthor{bendor2011behavioral}'s suggestion that candidates
use strategies that worked in the past. This heuristic is also
supported by a wealth of political science research arguing that the
imitation of policies, especially electorally successful ones, is a
major feature of
politics~\cite{shipan2008mechanisms,bohmelt2016party,ezrow2021follow}.
As with voters, our model uses a continuous distribution of candidate
positions in each generation, which can be viewed as either capturing
the expected behavior of a finite number of elections or as the
infinite-election limit.

This simple assumption about
candidate behavior (sample a position from the distribution of winners
in the last election cycle) yields a mathematical model equivalent to
the well-studied \emph{replicator dynamics} from evolutionary
biology~\cite{taylor1978evolutionary,schuster1983replicator}, which
have also found widespread use in
economics~\cite{safarzynska2010evolutionary,nelson2018modern}. In the
classic replicator dynamics, $n$ strategies (or alleles) compete in a
population, increasing in prevalence at a rate proportional to their
average fitness in pairwise contests drawn from the current
population. Our model arises from taking such dynamics and moving to a
continuous strategy space with $k$-way interactions in discrete time
(i.e., $k$-candidate elections), treating the plurality win
probability as fitness; we therefore refer to it as \emph{replicator
dynamics for candidate positioning}. 

In summary, then, our model operates in a sequence of generations;
each generation involves a large number of identically distributed
elections, and the candidates in a given generation are drawn from
the distribution of winners of the previous generation's elections.
\Cref{fig:overview} provides a schematic 
visualization of the process with $k = 3$ candidates.
While our model is phrased in terms of a large population of
elections---just as the classic replicator dynamics models a
population of organisms---there is a deep connection between
replicator dynamics and reinforcement
learning~\cite{borgers1997learning,bloembergen2015evolutionary}, so
our conclusions are likely to generalize to models of individual-level
trial-and-error.

\begin{figure}[t]
  \centering
  \includegraphics[width=\textwidth]{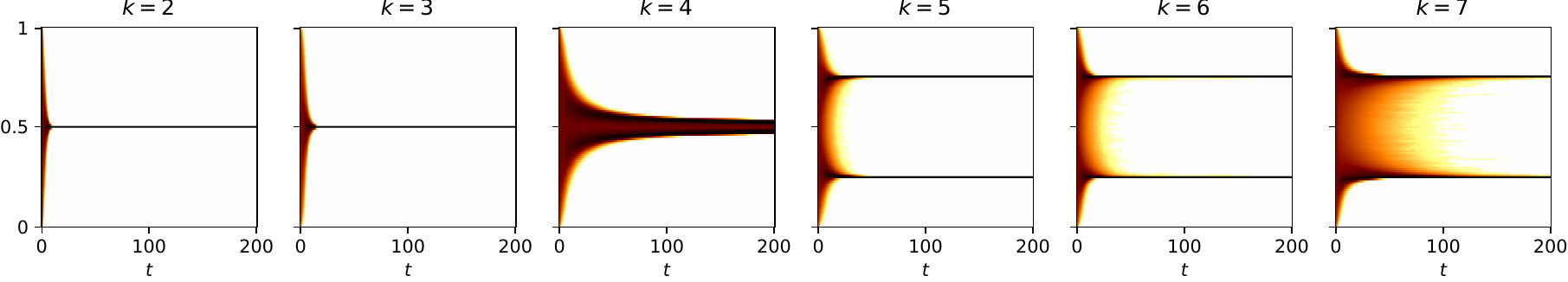}
  \caption{Replicator dynamics runs for $k = 2, \dots, 7$ and 200 generations. Each plot shows 50 runs layered on top of each other, where each run simulates 100,000 elections per generation. We also use \emph{enhanced symmetry}, a trick to keep the symmetry of the analytical model by reflecting copied points across $1/2$ (discussed further in \Cref{sec:simulations}). Darker regions indicate higher candidate density; we use a log-scaled colormap to make low-density regions visible. As our theory establishes, the candidate distribution converges to the center for $k = 2, 3, 4$, but does not for $k \ge 5$. The convergence is very fast for $k = 2$ and $3$, but much slower for $k = 4$. }
  \label{fig:main-sim}
\end{figure}

\paragraph{Our results.} Our main technical contributions characterize the long-run behavior of the replicator dynamics for different values of $k$, the number of candidates per election. We find a dramatic qualitative change in the dynamics as the number of candidates $k$ increases. For our analysis, we focus on the case in which the initial distribution of candidates is symmetric and has a continuous CDF and that voters are uniformly distributed over $[0, 1]$, but we find evidence in simulation that the same patterns hold with other symmetric voter distributions. When $k=2$, we prove that the candidate distribution converges to a point mass at $1/2$ under the replicator dynamics, just like rational candidates in the Hotelling--Downs model. However, we also prove in our model that the candidate distribution converges to the center for $k = 3$ and $4$, in stark contrast to three- and four-candidate extensions of the Hotelling--Downs argument~\cite{cox1987electoral}. Given the behavior for $k = 2, 3$, and $4$, one might be tempted to hypothesize that the replicator dynamics always cause the candidate distribution to converge to the center. Surprisingly, we prove that the pattern ends there: for any $k \ge 5$, we show that the candidate distribution does not converge to $1/2$. See \Cref{fig:main-sim} for simulations demonstrating the patterns that we characterize theoretically. These simulations reveal a tendency for candidate counts larger than $4$ to result in two distinct clusters of policies (around $1/4$ and $3/4$ with uniform voters). This is strongly reminiscent of \emph{Duverger's Law}~\cite{duverger1959political,riker1982two}, the observation that plurality elections tend towards two-party systems; it is striking that it emerges here from a model that does not include any explicit reward for clustering at points away from the center or any mechanisms like the threat of third-party candidates~\cite{bol2016endogenous}.

To strengthen this characterization of the long-run replicator dynamics, we show that our convergence results are robust to noise: even if a small fraction of candidates position themselves uniformly at random, we can still show (approximate) convergence to the center for $k=2, 3, 4$ and non-convergence for $k \ge 5$. While we are not able to theoretically derive the asymptotic distribution for $k \ge 5$ in general, we show that when the initial candidate distribution is supported only on $(1/4, 3/4)$, the candidate density in an interval around $1/2$ goes to 0. Additionally, we explore several variants of the model in simulation, including non-uniform voter distributions, noisy position-copying, memory of prior rounds of elections, and mixtures of candidate counts.\footnote{All of our simulation code and results are available at \url{https://github.com/tomlinsonk/plurality-replicator-dynamics}.} Across these variants, we observe the same general pattern: convergence to the center with up to four candidates, but not with five or more.  For candidate counts $k > 5$ we sometimes see complex and chaotic finite-sample effects in simulation. We conclude by relating our replicator dynamics model back to traditional analyses of Nash equilibria in the style of Hotelling and Downs. The close relationship between replicator dynamics fixed points  and Nash equilibria is well-known~\cite{hofbauer2003evolutionary}, but we argue that ignoring dynamics and focusing only on Nash equilibria leads to brittle conclusions. In particular, we show that different assumptions on voter behavior when candidates occupy the same points lead to dramatically different Nash equilibria than reported in prior work~\cite{cox1987electoral}; in contrast, this choice has no effect on our replicator dynamics results.

To summarize, our main finding is that a simple imitation heuristic can cause candidates to either converge to the median voter or to form two distinct parties, depending on how many candidates run in each election. Intuitively, this phenomenon is driven by the bogeyman of one-dimensional plurality elections: being flanked. If a candidate is stuck between two others, they lose votes from both the left and the right. When there are too many candidates all imitating previous moderate winners, only the leftmost or rightmost of them will avoid being flanked, making more extreme candidates more successful. However, with a small enough pool of opponents, the higher vote share a moderate can receive is worth the risk of ending up stuck between two others. This emerges naturally from our dynamics, without the need for strategic forethought. The surprising fact that falls out of our mathematical analysis is that when candidates are imitators rather than optimizers, the tipping point between the Hotelling--Downs centripetal force and the centrifugal force fueled by the problem of flanking occurs between four- and five-candidate elections.

\subsection{Related work}
Before diving into our theoretical analysis, we briefly summarize the literature in relevant areas.

\paragraph{One-shot candidate positioning games.}
Expanding on the two-candidate Hotelling--Downs foundation, subsequent research has explored higher-dimensional spaces~\cite{plott1967notion,irmen1998competition}, more than two candidates~\cite{cox1987electoral}, policy motivation~\cite{wittman1983candidate}, uncertainty about voter positions~\cite{calvert1985robustness}, and candidate valence (i.e., charisma or name recognition)~\cite{groseclose2001model,bruter2010uncertain}, among many other variations (see \citet{osborne1995spatial,kurella2017evolution} for surveys).\footnote{Hotelling framed the game in terms of two shops positioning themselves along a line (or the design of two competing goods along a single axis), while Downs applied the idea to plurality elections. The two motivations yield equivalent models, so some of the papers we cite use the language of facility location or product design rather than candidate positioning.} Some models allow a third-party candidate to enter the race after the established candidates select their positions, which can lead to non-central two-party equilibria~\cite{palfrey1984spatial,weber1992hierarchical,bol2016endogenous}. 

\paragraph{Dynamic models of candidate positioning.}
In addition to the work on one-shot games, there is also a literature on candidate positioning dynamics~\cite{duggan2017political}, although in contrast to our work, the focus of this literature has been on rational two-candidate contests. One notable early paper in this line of work studies a two-party system where the party which lost the previous election is allowed to reformulate its policy to maximize votes in the next election, which can yield predictable trajectories even in higher-dimensional policy spaces~\cite{kramer1977dynamical}. As in the one-shot literature described above, extensions of this model of two-party dynamics have added a variety of features, including policy motivation~\cite{wittman1977candidates,chappell1986policy}, forward-looking parties~\cite{rosenthal1982model,forand2014two,nunnari2017dynamic}, and---most closely related to our work---boundedly-rational candidates who are unable to exactly optimize their positions~\cite{kollman1992adaptive,kollman1998political,bendor2011behavioral}. Our paper is set apart from this prior research on electoral dynamics with bounded rationality in our replicator dynamics approach, and our success deriving analytical results for more than two candidates.
We are aware of one paper~\cite{laslier2016opportunist} combining a spatial model of elections and replicator dynamics, but the number of parties is fixed to two and the focus is instead on competition between office- and policy-motivated party members (``opportunists'' and ``militants''), where opportunists may defect to the other party.

\paragraph{Evolutionary game theory and replicator dynamics.}
Replicator dynamics~\cite{taylor1978evolutionary,schuster1983replicator,hofbauer2003evolutionary} were introduced to study the evolution of biological populations, but have since found much broader use.
Economists have used evolutionary models---including replicator dynamics---to understand investment behavior~\cite{blume1992evolution}, technological innovation~\cite{saviotti1995competition,safarzynska2011beyond}, 
and resource harvesting~\cite{noailly2003evolution}, among many other phenomena. See \citet{friedman1991evolutionary} for an introduction to evolutionary game theory from an economic perspective and \citet{nelson2018modern,safarzynska2010evolutionary} for surveys of evolutionary economics. Evolutionary models can even be justified without population-level evolution: models of individual-level learning can give rise to behavior equivalent to replicator dynamics~\cite{borgers1997learning}; see \citet{bloembergen2015evolutionary} for a survey of the connection between replicator dynamics and reinforcement learning. Evolutionary models are much less common in political science than in economics, but have been used to model the corruption of elected officials~\cite{accinelli2017controls}, coordination by voters~\cite{mebane2005partisan}, and party defection~\cite{laver2003evolution}. Extensions of the classical replicator dynamics have explored the various modifications found in our model, including multi-way interactions~\cite{gokhale2010evolutionary}, discrete time~\cite{losert1983dynamics}, and a continuous strategy space~\cite{oechssler2001evolutionary,cheung2016imitative}.

\paragraph{Elections with strategic voters.} Another line of research around strategic aspects of elections focuses instead on the strategic choices made by \emph{voters} rather than candidates~\cite{desmedt2010equilibria,thompson2013empirical,obraztsova2016trembling} (with some papers combining both voter and candidate strategy~\cite{feddersen1990rational,myerson1993theory}). Dynamics have featured prominently in the strategic voting literature~\cite{dekel2000sequential,callander2007bandwagons}---in particular, under the framework of \emph{iterative voting}~\cite{meir2010convergence,lev2012convergence,obraztsova2015convergence}, where voters are allowed to update their votes in successive rounds until they are satisfied. Intriguingly, this style of voter-dynamics analysis can also produce conclusions paralleling Duverger's Law, where two major candidates emerge, despite using a completely different approach to ours~\cite{meir2014local}. Evolutionary dynamics have also been applied to voter behavior to explain the paradox of voting (why do people vote when their probability of affecting the outcome is near zero?)~\cite{sieg1995evolutionary}.

\section{Replicator dynamics for candidate positioning}\label{sec:main-theory}

We now formally introduce our model. We consider a one-dimensional policy space represented by the unit interval $[0, 1]$. Candidates and voters reside at points in the interval. To model a large population of voters, we treat the voting population as a continuum; for our theoretical analysis, we assume voters are uniform over $[0, 1]$, but we later relax this assumption in simulation. We assume voters have 1-Euclidean preferences~\cite{elkind2016preference}---that is, they vote for the closest candidate. The \emph{vote share} of a candidate $i$ is the fraction of voters who vote for $i$. With uniform voters, the vote share of a candidate is equal to half the distance between the candidates to its left and right (a candidate adjacent to a boundary gets the entire vote share on its boundary side). Under plurality voting, the candidate with the largest vote share wins; in the case of tied maximum vote shares, the tie is broken uniformly at random. 

Our replicator dynamics model of candidate positioning supposes that elections proceed in generations $t = 1, 2, \dots$, with (infinitely) many elections per generation. We assume the number of candidates in each election is fixed at $k$ (later, we relax this assumption in simulation). The core idea of our model is that candidates in generation $t$ chose their policy positions by copying the position of a winner from the previous generation $t-1$. More formally, let $F_0$ be the initial candidate distribution and let $F_{k, t}$ denote the distribution of winner positions in generation $t$ with $k$ candidates per election. We define $F_{k, 0} = F_0$ for all $k$, although we typically write $F_0$ since the initial distribution does not depend on $k$. In generation $t$, each election consists of $k$ candidates with positions $X_{1,t}, \dots, X_{k, t}\sim F_{k, t-1}$. We use $F_{k, t}(x)$ to denote the CDF of the winner distribution in generation $t$ and $f_{k, t}(x)$ to denote the PDF. Let $\plurality(X_{1,t},\dots, X_{k, t})$ be the position of the plurality winner given candidate positions $X_{1,t},\dots, X_{k, t}$ and uniformly distributed voters.
\begin{definition}
Given an initial candidate distribution $F_0$ and a candidate count $k$, the \emph{replicator dynamics for candidate positioning} (under plurality with uniform 1-Euclidean voters) are, for all $t > 0$,
  \begin{align}
  &F_{k, t}(x) = \Pr( \plurality(X_{1,t},\dots, X_{k,t}) \le x),\label{eq:cdf-iteration}\\
  & X_{i,t}\sim F_{k, t-1},\ \forall i = 1, \dots, k.\notag
\end{align} 
Or, in terms of the PDF:
\begin{align}
  &f_{k, t}(x) = k \cdot \Pr(\plurality(x, X_{2,t}, \dots, X_{k,t}) = x) \cdot f_{k, t-1}(x).\label{eq:pdf-iteration}
\end{align}
\end{definition}

This model can be viewed through the lens of evolutionary replicator dynamics~\cite{taylor1978evolutionary,schuster1983replicator,hofbauer2003evolutionary}, although there are several differences from the classical case. In the classic replicator dynamics, there are $n$ discrete strategies, each of which increases in frequency proportionally to how well that strategy performs against the current population. This proportionality is exactly what \Cref{eq:pdf-iteration} captures: strategy $x$ increases in density proportional to its plurality win rate against the current population. 

The main question we study is how the candidate distribution evolves over time under the replicator dynamics. We focus on cases where $F_0$ is symmetric about $1/2$ and contains no point masses (i.e., the initial CDF $F_0(x)$ is continuous); we call such distributions \emph{symmetric} and \emph{atomless}. This ensures that the probability multiple candidates share the exact same point is 0, so we can ignore these cases for now. Since we assume $F_0$ is symmetric, all subsequent winner distributions are also symmetric by the symmetry of plurality with a uniform voter distribution---we lean heavily on this fact in our analysis. Some of our results require an additional assumptions on $F_0$. We say $F_0$ is \emph{positive near 1/2} if $F_0(x) < 1/2$ for all $x < 1/2$ (equivalently, $f_0(x) > 0$ in an interval around $1/2$); the symmetry of $F_0$ allows us to phrase definitions like this in terms of the left half of the unit interval, and it then applies equivalently to the right half as well.
We define $\mathcal F$ to be the set of all symmetric and atomless distributions over $[0, 1]$ and $\mathcal F ^+ \subset \mathcal F$ to be the subset of such distributions which are also positive near $1/2$.

In this section, we prove our main result piece-by-piece.

\begin{theorem}\label{thm:main}
  Let $F_0 \in \mathcal F^+$. For $k \in \{2, 3, 4\}$, the candidate distribution converges to a point mass at $1/2$ under the replicator dynamics. In contrast, for $k \ge 5$, the candidate distribution does not converge to a point mass at $1/2$.
\end{theorem}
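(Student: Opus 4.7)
The plan hinges on the density iteration
\[
f_{k,t+1}(x) = k \, w_{k,t}(x) \, f_{k,t}(x), \qquad w_{k,t}(x) := \Pr(\plurality(x, X_2, \dots, X_k) = x),
\]
where $X_2, \dots, X_k \sim F_{k,t}$. Since $\int f_{k,t+1} = 1$ forces $\mathbb{E}_{X \sim F_{k,t}}[w_{k,t}(X)] = 1/k$, the sign of $w_{k,t}(1/2) - 1/k$ controls whether density is pulled toward or pushed away from the center. The geometric mechanism driving the phase transition is \emph{flanking}: a candidate at $1/2$ wins comfortably whenever all $k-1$ other candidates happen to fall on a single side of $1/2$, since the center candidate is then the extreme candidate on its side and collects at least half of the electorate. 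By symmetry of $F_{k,t}$, this event has probability exactly $2 \cdot (1/2)^{k-1} = 2^{2-k}$, and the critical product $k \cdot 2^{2-k}$ equals $2, \tfrac{3}{2}, 1, \tfrac{5}{8}, \ldots$ for $k = 2, 3, 4, 5, \ldots$---strictly above $1$ for $k \le 3$, exactly $1$ at $k = 4$, and strictly below $1$ for $k \ge 5$. This numerical dichotomy is the algebraic engine of the theorem.

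For $k = 2$, the winner is simply the candidate closer to $1/2$, so setting $D_t = |X_t - 1/2|$, the CDF evolves as $G_{t+1}(d) = 1 - (1 - G_t(d))^2$; the pointwise iteration is attracted to $1$ on $(0, 1]$, and since $F_0 \in \mathcal{F}^+$ gives $G_0(d) > 0$ for all small $d > 0$, we obtain $G_t(d) \to 1$ and hence convergence to the point mass at $1/2$. For $k = 3$, I would compute $w_{3,t}(1/2)$ by partitioning on how the other two candidates split across the two sides of $1/2$: when both lie on the same side (probability $1/2$ by symmetry), the candidate at $1/2$ is an extreme candidate and wins outright, so $w_{3,t}(1/2) \ge 1/2 > 1/3$ uniformly in $t$. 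I would then combine this strict density growth at the center with a monotone-contraction argument on the tails---using the exact form of $w_{3,t}(x)$ to show that $F_{k,t}(1/2 - \epsilon)$ is eventually decreasing for every $\epsilon > 0$---to lift pointwise density growth at $1/2$ into full CDF convergence. The case $k = 4$ sits exactly at the boundary $k \cdot 2^{2-k} = 1$, so the center-density approach provides no strict slack in the concentration limit; instead I would track a Lyapunov functional such as $\int (x-1/2)^2 \, dF_{k,t}$ and exploit the exact form of $w_{4,t}(x)$ at generic $x$ (not only at $1/2$) to extract a strict decrease away from the point mass.

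For $k \ge 5$, I would argue non-convergence by contradiction. If $F_{k,t} \to \delta_{1/2}$, then for any small $\epsilon > 0$ and any $\eta > 0$, eventually at least $1 - \eta$ of the mass lies in $[1/2 - \epsilon, 1/2 + \epsilon]$; in this regime the extremal candidates collect segments of length close to $1/2$ while non-extremal candidates are flanked into segments of length $O(\epsilon)$, giving $w_{k,t}(1/2) \le 2^{2-k} + o(1) < 1/k$. This forces $f_{k,t}(1/2)$ to contract by a factor strictly below $1$ at each subsequent step, whereas convergence to $\delta_{1/2}$ demands density blow-up at the center---a contradiction. The main obstacles are: (i) the boundary case $k = 4$, which has no strict slack at $1/2$ and requires a Lyapunov or stochastic-dominance argument leveraging the global shape of $w_{4,t}$; (ii) lifting pointwise density growth at $1/2$ into global CDF convergence for $k = 3, 4$, which needs a uniform handle on tail mass rather than just a local growth rate; and (iii) making the $k \ge 5$ contradiction robust to possibly oscillatory behavior of $F_{k,t}$, for instance by passing to subsequential weak limits of $F_{k,t}$ in a suitable topology rather than assuming monotone convergence.
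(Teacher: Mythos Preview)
Your density-based attack on $k\ge 5$ has a genuine gap. You argue that once mass concentrates near $1/2$, the factor $k\,w_{k,t}(1/2)\to k\cdot 2^{2-k}<1$ forces $f_{k,t}(1/2)$ to contract geometrically, and then assert that ``convergence to $\delta_{1/2}$ demands density blow-up at the center.'' That last assertion is false: weak convergence of absolutely continuous measures to a point mass does \emph{not} require the density at that point to diverge. Indeed, the paper's own Theorem~\ref{thm:limited-support-density} exhibits exactly the phenomenon you need to rule out: for $k=4$ with $F_0$ supported on $(1/4,3/4)$, the density at $1/2$ satisfies $f_{4,t}(1/2)=f_0(1/2)$ for all $t$, yet Theorem~\ref{thm:k-4-convergence} shows $F_{4,t}\to\delta_{1/2}$. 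The mass is carried in by two spikes approaching $1/2$ from the sides, leaving the density at the exact center untouched. Your contraction of $f_{k,t}(1/2)$ is therefore perfectly compatible with convergence to the center, and no contradiction follows. The paper sidesteps this by working entirely at the CDF level: it fixes $x^*=F_{k,t^*}^{-1}(1/4)$ once mass in $[0,1/4]$ is small, and shows via the flanking observation (Lemma~\ref{lemma:1/4-3/4}) that $F_{k,t}(x^*)\ge 1/4$ for all $t\ge t^*$, which directly contradicts $F_{k,t}(x^*)\to 0$.

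Your $k=2$ argument is equivalent to the paper's. For $k=3$ your sketch (``use the exact form of $w_{3,t}(x)$ to show $F_{k,t}(1/2-\epsilon)$ eventually decreases'') is, once carried out, essentially the paper's case analysis yielding $F_{3,t}(x)\le \tfrac34 F_{3,t-1}(x)+F_{3,t-1}(x)^3$; the density-at-$1/2$ observation is motivation but not the operative step. For $k=4$ your proposed second-moment Lyapunov route is genuinely different from the paper's two-stage CDF case analysis (Lemma~\ref{lemma:k-4-1/3-bound} plus the extra case in Theorem~\ref{thm:k-4-convergence}), but you have not shown it works, and the borderline nature of $k=4$---where $k\cdot 2^{2-k}=1$ exactly and convergence is only polynomial in the support width---makes a clean variance-contraction unlikely without a comparably delicate geometric input. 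You correctly flag (i)--(iii) as obstacles; the $k\ge 5$ gap above is the one that is fatal rather than merely unfinished.
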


\Cref{thm:main} follows from \Cref{thm:k-2-convergence,thm:k-3-convergence,thm:k-4-convergence,thm:large-k-no-convergence}.  Our results for $k \in \{2, 3, 4\}$ give fine-grained characterizations of the dynamics, which imply convergence to the center: for $k = 2$, we derive a closed form for the CDF at generation $t$ (\Cref{thm:k-2-convergence}), while for $k = 3$ and $4$ we derive closed-form bounds for the CDF (\Cref{thm:k-3-convergence,thm:k-4-convergence}). The negative portion of \Cref{thm:main} offers less insight into the dynamics for $k \ge 5$, only showing non-convergence to the center (\Cref{thm:large-k-no-convergence}), but in \Cref{sec:1/4-3/4} we prove a stronger result in the special case where $F_0$ has no extreme candidates. All proofs omitted from this section for the sake of readability can be found in \Cref{app:main-proofs}.

\subsection{$k = 2$}

Two-candidate plurality with symmetric voters is simple: whichever candidate is closer to $1/2$ has the larger vote share and wins. This simplicity allows us to fully characterize the dynamics with $k=2$. In particular, we derive a closed form for the CDF $F_{2, t}(x)$. 
\begin{theorem}\label{thm:k-2-convergence}
  Let $F_0 \in \mathcal F $. For all $x < 1/2$ and $t \ge 0$,
  $
    F_{2, t}(x) = \left[2 \cdot F_{0}(x)\right]^{2^t}/2.
  $
\end{theorem}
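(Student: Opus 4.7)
The plan is to derive a one-step recurrence $F_{2,t}(x) = 2 F_{2,t-1}(x)^2$ for $x < 1/2$ and then close the induction to obtain the stated formula.

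First, I would analyze the winner event carefully. Fix $x < 1/2$ and consider two candidates $X_1, X_2 \sim F_{2,t-1}$. With uniform voters on $[0,1]$ and two candidates, the candidate whose position is closer to $1/2$ wins (ties occur only on the symmetric event $X_1 + X_2 = 1$, which has probability zero because $F_{2,t-1}$ is atomless by symmetry-preservation from $F_0 \in \mathcal F$). The event $\{\text{winner} \le x\}$ therefore requires that the winning candidate lies at distance at least $1/2 - x$ from $1/2$; since the loser is even farther, both candidates must lie outside the open interval $(x, 1-x)$, i.e., in $[0,x] \cup [1-x,1]$.

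Next, I would use the symmetry of $F_{2,t-1}$ (inherited inductively from $F_0$ by symmetry of plurality with uniform voters) to evaluate the relevant probabilities. Each $X_i$ lies outside $(x,1-x)$ with probability
\begin{equation*}
F_{2,t-1}(x) + \bigl(1 - F_{2,t-1}(1-x)\bigr) = 2 F_{2,t-1}(x),
\end{equation*}
using $F_{2,t-1}(1-x) = 1 - F_{2,t-1}(x)$. By independence, the probability that both lie outside $(x,1-x)$ is $\bigl[2 F_{2,t-1}(x)\bigr]^2$. Conditional on this event, I would argue that the winner is $\le x$ (rather than $\ge 1-x$) with probability exactly $1/2$: the joint conditional law of $(X_1, X_2)$ on $([0,x]\cup[1-x,1])^2$ is invariant under the reflection $y \mapsto 1-y$ applied to each coordinate (again by symmetry of $F_{2,t-1}$), and this reflection swaps $\{\text{winner} \le x\}$ with $\{\text{winner} \ge 1-x\}$; the two events partition the conditioning event up to a null set, so each has conditional probability $1/2$. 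Combining,
\begin{equation*}
F_{2,t}(x) = \tfrac{1}{2}\bigl[2 F_{2,t-1}(x)\bigr]^2 = 2 F_{2,t-1}(x)^2.
\end{equation*}

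Finally, I would close the induction. The base case $t=0$ is immediate: $[2F_0(x)]^{2^0}/2 = F_0(x) = F_{2,0}(x)$. For the inductive step, applying the recurrence and the hypothesis,
\begin{equation*}
F_{2,t}(x) = 2 \left(\frac{[2 F_0(x)]^{2^{t-1}}}{2}\right)^{\!2} = \frac{[2 F_0(x)]^{2^t}}{2}.
\end{equation*}
The main obstacle is really only the symmetry argument establishing the conditional $1/2$; once that is in hand, the recurrence and induction are routine. Note that the formula does not require $F_0 \in \mathcal F^+$, only $F_0 \in \mathcal F$, matching the statement. Convergence to a point mass at $1/2$ then follows as a corollary: for any fixed $x < 1/2$, $2F_0(x) < 1$ (strictly, since $F_0$ is atomless and symmetric with $F_0(1/2)=1/2$, so $F_0(x)<1/2$), hence $F_{2,t}(x) \to 0$; symmetric reasoning for $x > 1/2$ gives $F_{2,t}(x) \to 1$.
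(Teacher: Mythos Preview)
Your proof is correct and follows essentially the same approach as the paper: derive the recurrence $F_{2,t}(x) = 2F_{2,t-1}(x)^2$ by observing that the winner lies outside $(x,1-x)$ iff both candidates do, halve by symmetry, and close by induction. One small correction to your closing aside about the corollary: atomlessness and symmetry alone do \emph{not} guarantee $F_0(x) < 1/2$ for all $x < 1/2$ (e.g., $F_0$ uniform on $[0,1/4]\cup[3/4,1]$ has $F_0(x)=1/2$ on $[1/4,1/2)$); this is exactly why the paper introduces the extra ``positive near $1/2$'' condition $\mathcal F^+$ for the convergence corollary.
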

\begin{proof}
  Let $x < 1/2$. Since the candidate closer to $1/2$ wins with $k=2$, $\plurality(X_{1,t}, X_{2,t})\notin (x, 1-x)$ if and only if both $X_{1,t} \notin (x, 1-x)$ and $X_{2,t} \notin (x, 1-x)$, which occurs with probability $(2\cdot  F_{2, t-1}(x))^2$. By symmetry, we then have $F_{2, t}(x) = \Pr(\plurality(X_{1,t}, X_{2,t}) \le x ) = (2\cdot  F_{2, t-1}(x))^2 / 2 = 2 \cdot F_{2, t-1}(x)^2$. We can now prove the claim by induction on $t$. For the base case $t=0$, $(2\cdot F_{0}(x))^{2^0}/2 = F_{0}(x)$. For the inductive case $t \ge 1$, applying the inductive hypothesis yields:
  \begin{align*}
   F_{2, t}(x) &= 2 \cdot F_{2, t-1}(x) ^2 = 2 \left[(2\cdot F_{0}(x))^{2^{t-1}}/2\right]^2= \left[2\cdot F_{0}(x)\right]^{2^{t}}/2.\qedhere
    \end{align*} 
  \end{proof}
This result shows that for $k = 2$, the CDF at any point $x<1/2$ with $F_{2, 0}(x) < 1/2$ rapidly goes to 0---that is (apart from degenerate initial distributions) the candidate distribution converges toward a point mass at $1/2$. 
\begin{corollary}
  Let $F_0  \in \mathcal F ^+$. For all $x < 1/2$, $\lim_{t \rightarrow \infty} F_{2,t}(x) = 0$.
\end{corollary}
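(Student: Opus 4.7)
The plan is to simply chain together the closed form from \Cref{thm:k-2-convergence} with the definition of $\mathcal{F}^+$. The theorem gives us the explicit expression $F_{2,t}(x) = [2\cdot F_0(x)]^{2^t}/2$ for every $x < 1/2$, so the corollary reduces to showing that the base $2 \cdot F_0(x)$ lies strictly in $[0,1)$ whenever $x < 1/2$ and $F_0 \in \mathcal{F}^+$; then raising to the doubly-exponential power $2^t$ forces the quantity to $0$.

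First I would invoke the hypothesis $F_0 \in \mathcal{F}^+$. By the definition of $\mathcal F^+$ recalled just before the theorem, we have $F_0(x) < 1/2$ for all $x < 1/2$, and since $F_0$ is a CDF we also have $F_0(x) \ge 0$. Therefore $0 \le 2 F_0(x) < 1$ for every fixed $x < 1/2$.

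Next I would substitute into the closed form: $F_{2,t}(x) = [2 F_0(x)]^{2^t}/2$. Since $2 F_0(x) \in [0,1)$ is a fixed real number strictly less than $1$ (with $x$ fixed), and $2^t \to \infty$, the standard fact that $c^n \to 0$ for $c \in [0,1)$ as $n \to \infty$ gives $[2 F_0(x)]^{2^t} \to 0$, so $F_{2,t}(x) \to 0$. There is essentially no obstacle here: the entire content of the corollary lies in the closed form already proved in \Cref{thm:k-2-convergence}, and the role of the hypothesis $F_0 \in \mathcal F^+$ is precisely to rule out the degenerate boundary case $2 F_0(x) = 1$ which would leave $F_{2,t}(x) \equiv 1/2$. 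The only thing worth flagging is that convergence is pointwise (in fact doubly-exponentially fast), which is exactly what is claimed.
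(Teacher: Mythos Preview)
Your proposal is correct and matches the paper's approach exactly: the paper treats this corollary as immediate from the closed form in \Cref{thm:k-2-convergence}, noting that any $x<1/2$ with $F_0(x)<1/2$ has its CDF value go to $0$, which is precisely the argument you spell out.
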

Note that by symmetry, it follows from such a statement that for all $x>1/2$, $\lim_{t \rightarrow \infty} F_{2,t}(x)=1$.

\subsection{$k = 3$}

For $k = 3$, plurality becomes more complex: the winner need not be the closest to $1/2$ (for instance, consider candidates at positioned at $1/3, 1/2,$ and $2/3$). Nonetheless, we can still show that the candidate distribution converges to the center. To do so, we find an upper bound on $F_{3, t}(x)$ which goes to 0. The idea behind the proof is to enumerate cases where a candidate in an inner interval $(x, 1-x)$ wins and add up the probability of these cases, as a function of $F_{k, t-1}(x)$. For example, if there are two candidates in $[0, x)$ and one in $(1/2, 1-x)$, then the candidate in $(1/2, 1-x)$ gets vote share greater than $1/2$ and wins; this case occurs with probability $\binom{3}{2} F_{3, t-1} (x)^2 \cdot (1/2 - F_{3, t-1} (x))$. By symmetry, we can then transform this lower bound on the probability the winner is inside $(x, 1-x)$ into an upper bound on $F_{3, t}(x)$, the probability that the winner is in $[0, x]$. 

\begin{restatable}{theorem}{kthree}
\label{thm:k-3-convergence}
  Let $F_0  \in \mathcal F$. For all $x < 1/2$ and $t > 0$,
  \begin{equation}
    F_{3, t}(x) \le  3/4 \cdot F_{3, t-1}(x) + F_{3, t-1}(x)^3.\label{eq:k-3-map}
  \end{equation} 
This can be written as a looser closed form
    \begin{equation}
    F_{3, t}(x) \le F_{0}(x) \cdot \left[ 3/4 + F_{0}(x)^2\right]^t.\label{eq:k-3-closed-form}
  \end{equation} 
\end{restatable}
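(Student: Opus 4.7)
The plan is to derive the one-step recurrence \eqref{eq:k-3-map} by lower-bounding $\Pr(\plurality \in M)$ where $M = (x, 1-x)$, then using symmetry to convert this into the stated upper bound on $F_{3,t}(x) = \tfrac{1}{2}(1 - \Pr(\plurality \in M))$. The closed form \eqref{eq:k-3-closed-form} will then follow by a short monotonicity argument iterating the recurrence.

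First, I would partition $[0, 1]$ into four regions $L = [0, x]$, $M_L = (x, 1/2]$, $M_R = (1/2, 1-x)$, $R = [1-x, 1]$. Writing $p = F_{3, t-1}(x)$ and $q = 1/2 - p$, symmetry and atomlessness of $F_{3, t-1}$ give region probabilities $p, q, q, p$. I would then enumerate the region-count tuples $(N_L, N_{M_L}, N_{M_R}, N_R)$ summing to $3$---a multinomial experiment---and for each tuple decide whether the plurality winner is forced to lie in $M$.

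The key claim is that a middle winner is \emph{guaranteed} in the following configurations (each together with its left--right reflection): (i) the all-middle configurations $(0, 3, 0, 0)$, $(0, 2, 1, 0)$, $(0, 1, 2, 0)$, which are trivial; (ii) the configurations $(2, 1, 0, 0)$ and $(2, 0, 1, 0)$, where the unique middle candidate has vote share strictly greater than $1/2$ because $L_2 + M < 1$ (using $L_2 \le x$ and $M < 1-x$); and (iii) the flanking configurations $(1, 1, 1, 0)$ and $(1, 2, 0, 0)$, where the lone extreme $L$ strictly loses to the rightmost middle candidate because the head-to-head inequality reduces to $L + 2 X_{\mathrm{mid}} + X_{\mathrm{right}} \ge 2$, which is ruled out by $L \le x < 1/2$ together with the open right boundaries $M_L \le 1/2$, $M_R < 1-x$. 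I would explicitly check that configurations $(1, 0, 2, 0)$ and $(1, 1, 0, 1)$ must be \emph{excluded}, since the extreme candidate can win there (e.g. $L \approx x$ with two clustered candidates just past $1/2$). Summing the included probabilities gives
\[
\Pr(\plurality \in M) \;\ge\; 8 q^3 + 18 p q^2 + 12 p^2 q \;=\; 1 - \tfrac{3}{2} p - 2 p^3,
\]
where the algebraic simplification substitutes $q = 1/2 - p$. Halving the complementary probability yields $F_{3, t}(x) \le \tfrac{3}{4}p + p^3$, which is \eqref{eq:k-3-map}.

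For the closed form, I would note that $g(p) := \tfrac{3}{4} p + p^3 \le p$ on $[0, 1/2]$ (equivalent to $p^2 \le 1/4$), so $t \mapsto F_{3, t}(x)$ is monotone nonincreasing, and in particular $F_{3, t-1}(x) \le F_0(x)$ for every $t \ge 1$. Then \eqref{eq:k-3-map} relaxes to $F_{3, t}(x) \le F_{3, t-1}(x) \cdot [3/4 + F_0(x)^2]$, and iterating from $t$ down to $0$ yields \eqref{eq:k-3-closed-form}. The main obstacle is the case enumeration in paragraph three: the two flanking configurations $(1, 1, 1, 0)$ and $(1, 2, 0, 0)$ are exactly what push the coefficient of $p$ in the lower bound from the easy value $-3$ (counting only the trivially-middle cases) down to the sharper $-3/2$, so correctly classifying them---using the strict region boundaries to convert a weak extreme-win inequality into a strict loss---is essential for the bound.
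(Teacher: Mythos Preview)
Your proof is correct and follows essentially the same approach as the paper. The paper partitions into the same four regions and enumerates the same winning configurations, only grouping them slightly differently (it bundles your cases $(0,3,0,0)$, $(2,1,0,0)$, $(1,2,0,0)$ into a single ``all three on one side of $1/2$'' case), arriving at the identical lower bound $1 - \tfrac{3}{2}p - 2p^3$; your derivation of the closed form via monotonicity of $g(p)=\tfrac34 p + p^3$ on $[0,1/2]$ is likewise the same argument the paper carries out by simultaneous induction.
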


This result reveals that the candidate distribution for $k=3$ also converges rapidly to the center. In particular, \eqref{eq:k-3-closed-form} shows that the CDF at any point $x$ with $F_0(x) < 1/2$ decays exponentially towards 0 in $t$. This can also be seen by analyzing the cubic iterated map suggested by the upper bound \eqref{eq:k-3-map}, which converges to a stable fixed point at 0 for all initial values in $[0, 1/2)$.

\begin{corollary}
  Let  $F_0  \in \mathcal F ^+$. For all $x < 1/2$, $\lim_{t \rightarrow \infty} F_{3,t}(x) = 0$.
\end{corollary}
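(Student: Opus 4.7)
The plan is to partition $[0,1]$ into four regions using the reflection symmetry about $1/2$: outer-left $L_o = [0,x)$, inner-left $L_i = (x, 1/2)$, inner-right $R_i = (1/2, 1-x)$, and outer-right $R_o = (1-x, 1]$. Writing $p = F_{3,t-1}(x)$ and $q = 1/2 - p$, the symmetry and atomlessness of $F_{3,t-1}$ make the probabilities that a single candidate falls into these four regions equal to $p, q, q, p$, with boundary events occurring with probability zero. The winner distribution in generation $t$ is again symmetric about $1/2$, so $2 F_{3,t}(x) = 1 - \Pr(\text{winner} \in (x, 1-x))$, and the upper bound on $F_{3,t}(x)$ will follow from a lower bound on the probability that the winner lies in the inner interval $(x, 1-x)$.

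Next I would enumerate the multinomial compositions $(n_{L_o}, n_{L_i}, n_{R_i}, n_{R_o})$ of the three candidates across the four regions and identify those for which the winner is provably interior. Three families of compositions suffice. First, all cases with $n_{L_o} = n_{R_o} = 0$ place every candidate in the inner interval and contribute total probability $(2q)^3 = 8q^3$. Second, the compositions $(2,1,0,0), (2,0,1,0), (0,1,0,2), (0,0,1,2)$ (total probability $12 p^2 q$) together with $(1,2,0,0)$ and $(0,0,2,1)$ (total probability $6pq^2$) each have some inner candidate with vote share strictly greater than $1/2$; for instance in $(2,0,1,0)$ the right-inner candidate $c_3$ satisfies $1 - (c_2+c_3)/2 > 1 - (x + (1-x))/2 = 1/2$, which is the example highlighted in the text. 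Third, the mixed-flank compositions $(1,1,1,0)$ and $(0,1,1,1)$ (total probability $12 p q^2$) need a direct vote-share calculation: in $(1,1,1,0)$ with $c_1 \in L_o, c_2 \in L_i, c_3 \in R_i$, the strict bounds $c_1 < x$, $c_2 < 1/2$, $c_3 < 1-x$ force $c_1 + 2c_2 + c_3 < 2$, which is exactly the condition for $(c_1+c_2)/2 < 1 - (c_2+c_3)/2$, so $c_1$ cannot beat $c_3$ and the winner must be $c_2$ or $c_3$, both of which are interior.

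Summing these contributions and substituting $q = 1/2 - p$ yields
\[\Pr(\text{winner} \in (x,1-x)) \ge 8q^3 + 12 p^2 q + 18 p q^2 = 1 - \tfrac{3}{2} p - 2 p^3,\]
so $F_{3,t}(x) \le (3/4)\,F_{3,t-1}(x) + F_{3,t-1}(x)^3$, which is \eqref{eq:k-3-map}. For the closed form \eqref{eq:k-3-closed-form}, I would factor the recurrence as $F_{3,t}(x) \le F_{3,t-1}(x)(3/4 + F_{3,t-1}(x)^2)$, inductively verify that $F_{3,t}(x) \le F_0(x) \le 1/2$ using $3/4 + F_0(x)^2 \le 1$, and then iterate to obtain $F_{3,t}(x) \le F_0(x)(3/4 + F_0(x)^2)^t$.

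The main obstacle is the case enumeration itself: it is tempting but wrong to conclude that every composition with at least two interior candidates forces an interior winner. The composition $(1,0,2,0)$ is a counterexample---when the two right-inner candidates cluster together near $1-x$ they split each other's vote, and the lone leftmost candidate in $L_o$ can capture the plurality (taking $c_1 = 0.05, c_2 = 0.85, c_3 = 0.89$ with $x = 0.1$ works). The algebraic identity $c_1 + 2c_2 + c_3 < 2$ that saves the mixed-flank case, together with careful multinomial bookkeeping so that the coefficients $3/2$ and $2$ emerge exactly, is where the substantive content of the argument lives.
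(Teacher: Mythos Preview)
Your proof is correct and follows essentially the same approach as the paper: both enumerate the same twelve compositions of three candidates across the four regions $[0,x),(x,1/2),(1/2,1-x),(1-x,1]$, show an inner candidate wins in each, and sum to obtain $\Pr(\text{winner inner}) \ge 1 - \tfrac{3}{2}p - 2p^3$, yielding \eqref{eq:k-3-map} and then \eqref{eq:k-3-closed-form} by the same inductive factoring. The only cosmetic difference is that the paper bundles some of your compositions together (e.g., its ``all three in $[0,1/2)$ with at least one inner'' combines your $(0,3,0,0),(1,2,0,0),(2,1,0,0)$), and your explicit discussion of why $(1,0,2,0)$ fails is a nice addition the paper omits.
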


\subsection{$k = 4$}
As with $k=3$, we derive an upper bound on the CDF which converges to 0. However, the bound suggests convergence is much slower for $k=4$ than for $2$ or $3$ (which we will see later confirmed in simulation). The proof follows the same case-enumeration strategy as $k = 3$, but simple cases  only show that the CDF is non-increasing in $t$ for $x \in (1/3, 1/2)$, giving us the following lemma.

\begin{restatable}{lemma}{kfourlemma}
\label{lemma:k-4-1/3-bound}
  Let  $F_0  \in \mathcal F$. For all $x \in (1/3, 1/2)$ and $t \ge 0$, $F_{4, t}(x) \le F_{4, 0}(x)$.
\end{restatable}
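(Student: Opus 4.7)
The lemma claims $F_{4,t}(x) \le F_{4,0}(x)$ for $x \in (1/3, 1/2)$ and all $t \ge 0$. The natural plan is to prove the one-step monotonicity $F_{4,t}(x) \le F_{4,t-1}(x)$ and iterate; the base case $t = 0$ is trivial. Because $F_{4,t-1}$ is symmetric about $1/2$ and the plurality dynamics with uniform voters preserve symmetry, $F_{4,t}$ is also symmetric, so $F_{4,t}(x) \le p := F_{4,t-1}(x)$ is equivalent to $\Pr(W \in M) \ge 1 - 2p$, where $M = (x, 1-x)$. The task therefore reduces to showing that a winner falls into the middle band $M$ at least as often as a uniformly drawn candidate does.

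Following the case-enumeration template from the $k = 3$ proof, I would partition the 4 candidates by their counts $(a, b, c)$ across $L = [0, x]$, $M = (x, 1-x)$, and $R = [1-x, 1]$, and, when finer distinctions are needed, split $M$ into $B = (x, 1/2)$ and $C = (1/2, 1-x)$. For each configuration I would compute the sorted vote shares $V(y_i) = (y_{i+1} - y_{i-1})/2$ (with $y_0 := 0$ and $y_5 := 1$) and identify those in which a middle candidate must win. For $x > 1/3$, the simple cases are $(0, 4, 0)$ (trivially); $(3, 1, 0)$ and $(0, 1, 3)$ (the lone middle candidate has vote share $1 - (y_3+y_4)/2 \ge 1/2$ because its free side extends to the boundary of the interval); and $(2, 2, 0)$ and $(0, 2, 2)$ (the extreme middle candidate has vote share at least $x > 1/3$, which exceeds the $\le x$ bound on outer-region candidates and the $\le (1-x)/2 < 1/3$ bound on interior middle candidates). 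Summing the probabilities of these configurations yields a lower bound on $\Pr(W \in M)$.

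The main obstacle is closing the gap to $1 - 2p$: after the simple cases, the residual configurations $(1, 3, 0), (0, 3, 1), (2, 1, 1), (1, 1, 2), (1, 2, 1), (3, 0, 1), (1, 0, 3),$ and $(2, 0, 2)$ must still be handled. For the zero-middle configurations like $(3, 0, 1)$ and $(2, 0, 2)$, the reflection symmetry $L \leftrightarrow R$ gives $\Pr(W \in L) = \Pr(W \in R)$ within each mirror pair, pinning their contribution to $\Pr(W \in L)$ to exactly half their probability. For the remaining configurations, I would pair each with its reflection (e.g., $(1, 3, 0)$ with $(0, 3, 1)$, and $(2, 1, 1)$ with $(1, 1, 2)$) and apply reflection symmetry to express the pair's joint $\Pr(W \in L) + \Pr(W \in R)$ contribution as $\Pr(\text{config})\,(1 - \Pr(W \in M \mid \text{config}))$. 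For $(1,3,0)$ and $(0,3,1)$, the rightmost (respectively leftmost) middle candidate beats the other middle candidates for $x > 1/3$, reducing the $M$-versus-outer comparison to whether $y_1+y_2+y_3+y_4 < 2$; by the reflection symmetry, such sum-threshold probabilities pair up to deliver the remaining contribution needed. The threshold $x > 1/3$ is essential throughout: it is precisely the point at which $x > (1-x)/2$ begins to hold, giving the middle candidate enough of an edge to dominate in the simple cases, and it is likewise what powers the sum-based comparisons in the mixed configurations.
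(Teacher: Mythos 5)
Your overall strategy is sound and your ``simple cases'' are all correct, but there is a genuine gap at exactly the load-bearing step. With $p = F_{4,t-1}(x)$ and $q = 1/2 - p$, your simple cases $(0,4,0)$, $(3,1,0){+}(0,1,3)$, $(2,2,0){+}(0,2,2)$ contribute $16q^4 + 16p^3q + 48p^2q^2$ toward the target $\Pr(W\in M) \ge 2q = 16q^4 + 16p^3q + 48p^2q^2 + 48pq^3$, leaving a deficit of exactly $48pq^3$. The only residual configuration carrying a $pq^3$ monomial is $(1,3,0){+}(0,3,1)$, whose total probability is $64pq^3$; the $(2,1,1)$-type and $(1,2,1)$ configurations are $O(p^2)$ and cannot cover a deficit of order $p$ as $p \to 0$. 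So your argument requires $\Pr(W \in M \mid (1,3,0)) \ge 3/4$ with \emph{no slack}. You correctly reduce this to the event $y_1+y_2+y_3+y_4 < 2$, but your claim that ``by the reflection symmetry, such sum-threshold probabilities pair up to deliver the remaining contribution needed'' does not prove it: reflection only shows $\Pr(W\in M \mid (1,3,0)) = \Pr(W\in M \mid (0,3,1))$ (both equal $\Pr(S<2\mid(1,3,0))$), and gives no lower bound on that common value. The needed bound is true, but it is a nontrivial anti-concentration statement: writing the three $M$-candidates as $Z_i = 1/2 + V_i$ with $V_i$ i.i.d.\ symmetric about $0$ on $(-(1/2-x), 1/2-x)$, one must show $\Pr(V_1+V_2+V_3 > 1/2 - Y) \le 1/4$ for $Y \le x$, which follows from $\Pr(|V_1+V_2+V_3| \le \max_i |V_i|) \ge 1/2$ (condition on the magnitudes and check that for each of the four sign patterns of $(V_2,V_3)$ at least one sign of $V_1$ lands the sum in $[-|V_1|,|V_1|] \subseteq [-1/2+x, 1/2-x]$). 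Without some such argument the proof does not close, and since the constant $3/4$ is tight in your accounting, nothing cruder will do.

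For comparison, the paper sidesteps this entirely by refining the case decomposition with the half-point $1/2$: it splits $M$ into $(x,1/2)$ and $(1/2,1-x)$ and observes that in the sub-cases $(1,\{3B\},0)$ and $(1,\{2B,1C\},0)$ the inner candidate wins surely (someone in the right half gets vote share above $1/2$, or the argument from the $k=3$ Case~4), while in the sub-case $(0,\{2B,1C\},1)$ an exchangeability argument on the order statistics of the three non-extreme candidates gives a win probability of exactly $2/3$; these contributions sum to precisely the required $48pq^3$. So the paper trades your single tight anti-concentration inequality for a finer but entirely elementary sub-case count. Your route is viable and arguably more systematic, but you must actually supply the $3/4$ lemma.
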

By using \Cref{lemma:k-4-1/3-bound}, we can strengthen the case analysis with one additional case that tips the recurrence from breaking even to shrinking exponentially towards 0. However, the base of the exponential depends very strongly on $x$, increasing rapidly towards 1 near $1/2$. 

\begin{restatable}{theorem}{kfour}
\label{thm:k-4-convergence}
Let $F_0  \in \mathcal F$. For all $x \in (1/3, 1/2)$ and $t \ge 0$,
\begin{equation}\label{eq:k-4-bound}
F_{4, t}(x) \le F_{0}(x) \cdot \left[1 -  4  (1/2 - F_{0}(x/3 + 1/3))^3 \right]^t.  
\end{equation}
\end{restatable}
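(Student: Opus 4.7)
The plan is to prove \eqref{eq:k-4-bound} by induction on $t$, reducing the stated closed-form bound to a per-step recurrence of the form
\[
F_{4, t}(x) \le F_{4, t-1}(x) \cdot \bigl[1 - 4(1/2 - F_0(x/3 + 1/3))^3\bigr],
\]
which, iterated against the trivial base case $F_{4,0}(x) = F_0(x)$, yields the theorem. As in \Cref{thm:k-3-convergence}, I will proceed by case enumeration: using the symmetry of $F_{4, t-1}$ inherited from $F_0 \in \mathcal{F}$, bounding $F_{4, t}(x)$ from above is equivalent to lower bounding the inner-winner probability $\Pr(\plurality(X_{1,t},\dots,X_{4,t}) \in (x, 1-x))$, since $2F_{4,t}(x) = 1 - \Pr(\text{inner winner})$ (the tie contribution is zero because $F_0$ is atomless, so atomlessness is preserved under \eqref{eq:pdf-iteration}).

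The key new ingredient, beyond what is already implicit in the proof of \Cref{lemma:k-4-1/3-bound}, is a single configuration that tips the break-even bound $F_{4, t}(x) \le F_{4, t-1}(x)$ into strict contraction. Let $A$ denote the event that exactly one of the four candidates lies in $[0, x]$ and the other three lie in $((x+1)/3, 1/2]$. In $A$ the sorted positions satisfy $y_0 \le x < (x+1)/3 < y_1 < y_2 < y_3 \le 1/2$, so the rightmost candidate $y_3$ has vote share $1 - (y_2 + y_3)/2 > 1/2$ and therefore wins; since $y_3 \in ((x+1)/3, 1/2) \subset (x, 1-x)$, this is an inner winner. The mirror event $A'$ (one candidate in $[1-x, 1]$, three in $[1/2, (2-x)/3)$) behaves symmetrically, and $A,A'$ are disjoint because $A$ confines all candidates to $[0, 1/2]$ while $A'$ confines them to $[1/2, 1]$. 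Using the multinomial probability of $A$ and applying \Cref{lemma:k-4-1/3-bound} at $(x+1)/3 \in (4/9, 1/2) \subset (1/3, 1/2)$ to bound $F_{4, t-1}((x+1)/3) \le F_0((x+1)/3)$, each of $A$, $A'$ has probability at least $4 F_{4, t-1}(x) \bigl(1/2 - F_0((x+1)/3)\bigr)^3$.

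Combining this fresh contribution with the baseline $\Pr(\text{inner winner}) \ge 1 - 2 F_{4, t-1}(x)$ furnished by the case analysis behind \Cref{lemma:k-4-1/3-bound} yields
\[
\Pr(\text{inner winner}) \ge 1 - 2 F_{4, t-1}(x) + 8 F_{4, t-1}(x) \bigl(1/2 - F_0((x+1)/3)\bigr)^3,
\]
which rearranges to the desired per-step recurrence and closes the induction.

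The main obstacle is justifying that $A$ and $A'$ can be added on top of the $1 - 2 F_{4, t-1}(x)$ baseline rather than being double-counted against the configurations used to establish \Cref{lemma:k-4-1/3-bound}. The cleanest way to resolve this is to re-derive the baseline inside the same case enumeration, carving out $A$ and $A'$ as their own cases: both events are extremal in that they force all four candidates into a single half of $[0, 1]$, cleanly distinguishing them from the more balanced configurations that produce the baseline. The remaining verifications—that $y_3$ wins in $A$ (a one-line vote-share computation) and that iterating the recurrence yields the stated closed form—are routine.
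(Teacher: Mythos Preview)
Your approach has a genuine gap: the extra event $A$ you introduce (one candidate in $[0,x]$, three in $((x+1)/3,1/2]$) is \emph{entirely contained} in Case~1 of the proof of \Cref{lemma:k-4-1/3-bound}, which is precisely the configuration ``all four candidates in $[0,1/2)$'' (and its mirror). Your own diagnosis---that $A$ and $A'$ ``force all four candidates into a single half of $[0,1]$, cleanly distinguishing them from the more balanced configurations that produce the baseline''---is exactly backwards: the ``all in one half'' configuration is one of the five cases summing to $1-2p$, and within that case the inner-winner probability is already counted \emph{exactly} (the rightmost candidate wins whenever at least one lies in $(x,1/2)$). So carving $A$ out of Case~1 and adding it back is a zero-sum operation, and the claimed strict improvement $1-2p + 8p\beta^3$ does not follow.

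The paper's fix is to place the lone outer candidate on the \emph{opposite} side from the three clustered inner candidates: the new case has three candidates in $(x/3+1/3,1/2)$ and one in $(1-x,1]$ (plus the mirror). This configuration is disjoint from every case in \Cref{lemma:k-4-1/3-bound}---in particular it is not covered by Case~2, which places the lone right candidate in $(1/2,1-x)$ rather than $(1-x,1]$. The vote-share check is slightly less trivial than yours (the leftmost inner candidate gets share exceeding $x/3+1/3$, while the outer candidate gets less than $x/3+1/3$), but it goes through, and the probability computation yields the same $8p(1/2-F_0(x/3+1/3))^3$ you were aiming for. Everything else in your plan---applying \Cref{lemma:k-4-1/3-bound} at $(x+1)/3$, the per-step recurrence, and the induction---is correct once the extra case is chosen to be genuinely new.
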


Note that $x/3 + 1/3$ is the point two-thirds of the way from $x$ to $1/2$. As long as $F_{0}(x/3 + 1/3) < 1/2$, which is true for any $x<1/2$ if $F_0$ is positive near $1/2$, this result shows that the CDF left of $1/2$ decays to 0 as $t$ grows. That is, the candidate distribution converges to the center again.

\begin{corollary}
  Let $F_0  \in \mathcal F ^+$. For all $x < 1/2$, $\lim_{t\rightarrow \infty}F_{4, t}(x) = 0$.
\end{corollary}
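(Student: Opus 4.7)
I would proceed by induction on $t$, reducing the claim to the one-step contraction $F_{4,t}(x) \le F_{4,t-1}(x) \cdot [1 - 4(1/2 - G)^3]$ where $G := F_0(x/3+1/3)$; iterating this from the base case $F_{4,0}(x) = F_0(x)$ then yields \eqref{eq:k-4-bound}. Using the symmetry of $F_{4,t-1}$ about $1/2$, the contraction is equivalent to the lower bound $\Pr(W \in (x, 1-x)) \ge (1-2F) + 8F(1/2-G)^3$, where $W$ is the plurality winner of four i.i.d.\ draws from $F_{4,t-1}$ and $F := F_{4,t-1}(x)$. The baseline $\Pr(W \in (x, 1-x)) \ge 1 - 2F$ is supplied by \Cref{lemma:k-4-1/3-bound} via the identity $2 F_{4,t}(x) = 1 - \Pr(W \in (x, 1-x))$. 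Crucially, since $x \in (1/3, 1/2)$ implies $x/3 + 1/3 \in (4/9, 1/2) \subset (1/3, 1/2)$, \Cref{lemma:k-4-1/3-bound} also gives the auxiliary estimate $F_{4,t-1}(x/3+1/3) \le G$, which is what converts the baseline from ``break-even'' into a strict contraction.

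The one additional event that delivers the extra $8F(1/2-G)^3$ is $E_L$: exactly one candidate lies in $[0, x]$ and the other three lie in the interval $(1/2,\; 2/3 - x/3)$. Writing the lone candidate as $a_0 \in [0,x]$ and the sorted cluster as $y_1 < y_2 < y_3$, the vote share of $a_0$ is at most $(a_0 + y_1)/2 \le (x + 2/3 - x/3)/2 = 1/3 + x/3$, while the vote share of $y_3$ is at least $1 - (y_2 + y_3)/2 > 1 - (2/3 - x/3) = 1/3 + x/3$. Hence $a_0$ is strictly outpolled by $y_3$ and cannot win, so the winner must lie in the cluster, which sits inside $(x, 1-x)$ since $x < 1/2$ forces $2/3 - x/3 < 1 - x$. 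The probability of $E_L$ equals $4 F \cdot [F_{4,t-1}(2/3-x/3) - 1/2]^3 = 4 F \cdot [1/2 - F_{4,t-1}(x/3+1/3)]^3 \ge 4 F (1/2 - G)^3$, using symmetry of $F_{4,t-1}$ and \Cref{lemma:k-4-1/3-bound}. The mirror event $E_R$ (one candidate in $[1-x,1]$, three in $(x/3+1/3, 1/2)$) contributes the same by symmetry, for a total extra probability of $8 F (1/2 - G)^3$.

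The main obstacle is the bookkeeping needed to combine this extra contribution with the baseline without double-counting. Since both $E_L$ and $E_R$ force the winner into $(x, 1-x)$ while being disjoint both from each other and from any configuration the Lemma~1 enumeration used to ``charge'' winners to $[0,x] \cup [1-x,1]$, the cleanest execution is to replay the Lemma~1 case analysis and tack on $E_L, E_R$ as two previously-unused cases, so their probabilities strictly strengthen the bound. The geometric subtlety in choosing the cluster interval $(1/2, \; 2/3 - x/3)$---wide enough that three candidates pack in with non-negligible probability, yet narrow enough that the rightmost cluster candidate still beats any flanker in $[0,x]$---is precisely what produces the threshold point $x/3 + 1/3$ in the final bound, and is also the reason the assumption $x > 1/3$ is essential (it ensures $2/3 - x/3 < 1 - x$, placing the cluster inside the target region).
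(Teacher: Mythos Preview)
Your proposal is correct and follows essentially the same approach as the paper's proof of \Cref{thm:k-4-convergence} (from which the corollary is immediate): both augment the baseline from \Cref{lemma:k-4-1/3-bound} with the same extra disjoint event---three candidates clustered in $(x/3+1/3,\,1/2)$ and one outer candidate in $[1-x,1]$, together with its mirror---and both invoke \Cref{lemma:k-4-1/3-bound} at the point $x/3+1/3$ to bound the cluster probability by $(1/2-G)^3$. One small slip worth noting: the containment $2/3-x/3 < 1-x$ is equivalent to $x<1/2$, not $x>1/3$; the hypothesis $x>1/3$ is instead what makes Case~4 of the baseline lemma go through.
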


\subsection{$k \ge 5$}
In contrast to $k=2, 3, 4$, we now show that for any larger $k$, the candidate distribution does not converge to the center. The proof is based on the following observation.

\begin{lemma}\label{lemma:1/4-3/4}
  For any $k$, if all candidates are in $(1/4, 3/4)$, then only the left- or rightmost candidate can win with uniform voters.
\end{lemma}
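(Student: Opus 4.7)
The plan is to compute the vote shares of every candidate directly and show that any interior candidate is strictly beaten by both the leftmost and the rightmost candidate, so only those two extremes can achieve the maximum.

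First I would fix notation: since candidates are distinct with probability one under our symmetric atomless assumption (and the claim is about a single realization anyway), sort them as $c_1 < c_2 < \dots < c_k$, all lying in $(1/4, 3/4)$. Under uniform 1-Euclidean voters, the vote share of the leftmost candidate is $(c_1+c_2)/2$, the vote share of the rightmost is $1-(c_{k-1}+c_k)/2$, and the vote share of an interior candidate $c_i$ (with $1<i<k$) is $(c_{i+1}-c_{i-1})/2$.

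Next I would plug in the hypothesis $c_i \in (1/4, 3/4)$ for every $i$. The leftmost share satisfies $(c_1+c_2)/2 > (1/4+1/4)/2 = 1/4$, and symmetrically the rightmost share satisfies $1-(c_{k-1}+c_k)/2 > 1 - 3/4 = 1/4$. For any interior $c_i$, however, $c_{i-1} > 1/4$ and $c_{i+1} < 3/4$ force $(c_{i+1}-c_{i-1})/2 < 1/4$. So every interior candidate has vote share strictly less than $1/4$, while both extremes have vote share strictly greater than $1/4$; hence no interior candidate can be a plurality winner, and the claim follows.

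There is no real obstacle here — the argument is just a direct inequality chase on the elementary formulas for vote shares. The only mild subtlety is making sure the strict inequalities come from the open interval $(1/4, 3/4)$ rather than a closed one: if a candidate were allowed to sit exactly at $1/4$ or $3/4$, an interior candidate could in principle tie an extreme, so the openness of the interval is doing all the work in guaranteeing strict domination of the interior by the extremes.
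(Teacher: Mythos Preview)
Your proof is correct and is essentially identical to the paper's own argument: both compute that any interior candidate has vote share strictly less than $1/4$ (because its two neighbors are within distance $1/2$ of each other), while each extreme candidate has vote share strictly greater than $1/4$. The paper's write-up is just slightly terser, but the inequality chase and the role of the open interval are exactly the same.
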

\begin{proof}
 Suppose all candidates are in $(1/4, 3/4)$. Any candidate between two others gets vote share less than $(1/2)/2 = 1/4$, since no two candidates are distance $1/2$ or greater apart. Meanwhile, the left- and rightmost candidates each get vote share $> 1/4$.
\end{proof}
Intuitively, if the candidate distribution starts converging to the center, then all candidates will likely be inside $(1/4, 3/4)$, at which point only the most extreme candidates can win. When $k$ is sufficiently large (i.e., $\ge 5$), the left- and rightmost candidates are likely on opposite sides and farther from $1/2$ than the average candidate. This results in a centrifugal force preventing further progress towards the center. Formally, we prove the following theorem.
\begin{restatable}{theorem}{largek}
\label{thm:large-k-no-convergence}
  Let $F_0  \in \mathcal F$. For any $k \ge 5$, there exists some $x < 1/2$ such that $\lim_{t \rightarrow \infty} F_{k, t}(x) \ne 0 $. That is, the candidate distribution does not converge to a point mass at $1/2$.
\end{restatable}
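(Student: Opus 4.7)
The plan is to prove the statement by contradiction. Suppose $F_{k,t}(x) \to 0$ for every $x < 1/2$; by symmetry of $F_{k,t}$ and the bounded support $[0,1]$, this is equivalent to $\sigma_t^2 := \int (x - 1/2)^2 \, dF_{k,t}(x) \to 0$. Under this hypothesis $F_{k,t}(1/4) \to 0$, so the event $A$ that all $k$ candidates of an election lie in $(1/4, 3/4)$ satisfies $\Pr(A) \ge (1 - 2 F_{k,t}(1/4))^k \to 1$. By \Cref{lemma:1/4-3/4}, on $A$ the winner $W$ equals either the leftmost candidate $L = X_{(1)}$ or the rightmost $R = X_{(k)}$. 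Moreover, for $k \ge 4$ the leftmost wins iff $L + X_{(2)} + X_{(k-1)} + R > 2$; in the concentration regime $X_{(2)}, X_{(k-1)} \approx 1/2$, so this condition reduces to $R - 1/2 > 1/2 - L$. Therefore, up to lower-order corrections, the winner is the \emph{less} extreme of $\{L, R\}$ and $(W - 1/2)^2 = \min\bigl((L - 1/2)^2,\, (R - 1/2)^2\bigr)$.

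The crux of the proof is then the following variance-preservation inequality: for every $k \ge 5$ and every symmetric atomless $F$ supported in a narrow interval around $1/2$,
\[
E\bigl[\min((L - 1/2)^2,\, (R - 1/2)^2)\bigr] \;\ge\; \operatorname{Var}(F).
\]
Combined with careful control over the contribution from $A^c$, this yields $\sigma_{t+1}^2 \ge \sigma_t^2 (1 - o(1))$ as $\sigma_t \to 0$, contradicting $\sigma_t^2 \to 0$ (recall $\sigma_0^2 > 0$ because $F_0$ is atomless). The inequality can be verified by direct computation for uniform $F$ on $[1/2 - w, 1/2 + w]$: both sides equal $w^2/3$ at $k = 5$ (tight equality), while at $k = 6$ the left-hand side grows to $11 w^2/28 > w^2/3$, strictly exceeding the variance. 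The inequality is also tight for symmetric two-point distributions $\tfrac{1}{2}\delta_{1/2 - s} + \tfrac{1}{2}\delta_{1/2 + s}$, which are fixed points of the extreme-winner dynamics for $k \ge 4$.

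The main obstacle is establishing the inequality for every symmetric atomless $F$, not just special parametric families. Since the bound is already tight at $k = 5$ for both uniform and symmetric two-point distributions, any proof has to be sensitive both to the constraint $k \ge 5$ and to the atomless hypothesis. My planned approach is to write $E[\min^2]$ as a double integral against the joint density of $(L, R)$, substitute the quantile function of $F$, and apply a rearrangement or Schur-type argument that exploits the $k - 2 \ge 3$ middle order statistics. A secondary challenge is controlling the $A^c$ contribution to $\sigma_{t+1}^2$ even though atomless $F_{k,t}$ could in principle place continuous mass near $\pm 1/4$; handling this requires a refined Chebyshev-type argument for atomless symmetric distributions. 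If the analytic route proves too delicate, a fallback is to replace $\sigma_t^2$ by a quantile-based functional (e.g.\ $F_{k,t}^{-1}(1/k)$) and control it directly via the pointwise replicator update \eqref{eq:cdf-iteration}, arguing that the $1/k$-quantile cannot converge to $1/2$ under the dynamics for $k \ge 5$.
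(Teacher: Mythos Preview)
Your variance-preservation inequality is false in general, so the main route cannot succeed. Consider a symmetric atomless $F$ supported well inside $(1/4,3/4)$ that mixes a heavy central component with light symmetric ``tails'':
\[
F \;=\; (1-\epsilon)\,\mathrm{Unif}\bigl[\tfrac12-\delta,\,\tfrac12+\delta\bigr]
\;+\;\tfrac{\epsilon}{2}\,\mathrm{Unif}\bigl[\tfrac12-M-\delta,\,\tfrac12-M+\delta\bigr]
\;+\;\tfrac{\epsilon}{2}\,\mathrm{Unif}\bigl[\tfrac12+M-\delta,\,\tfrac12+M+\delta\bigr],
\]
with $0<\delta\ll M$ and $\epsilon$ small. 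Then $\operatorname{Var}(F)\approx \epsilon M^2$. But $\min\bigl((L-\tfrac12)^2,(R-\tfrac12)^2\bigr)$ exceeds $\delta^2$ only when \emph{both} extremes land in the tails, an event of probability $O(\epsilon^2)$; hence $E[\min]=O(\delta^2+\epsilon^2 M^2)\ll \epsilon M^2$. In the dominant ``exactly one tail sample'' case you yourself identify the winner as the less-extreme extreme---which here is the central one---so $E[(W-\tfrac12)^2]$ is equally small. The uniform and two-point families are the \emph{best} cases for your inequality, not the worst; no rearrangement or Schur argument can establish a false bound, and the tightness you observe at $k=5$ is a warning sign rather than encouragement. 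The underlying reason variance is the wrong functional is that one step of the dynamics can contract a heavy-tailed distribution dramatically (the less-extreme extreme ignores tails), even though the CDF at a fixed quantile barely moves.

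Your fallback is exactly the right idea and is essentially the paper's proof. The paper tracks the CDF at a fixed quantile point: assuming convergence, it chooses $t^*$ with $F_{k,t}(1/4)\le\alpha$ for all $t\ge t^*$ and a small explicit $\alpha$, sets $x^*=F_{k,t^*}^{-1}(1/4)$, and derives (via \Cref{lemma:1/4-3/4} and inclusion--exclusion) a pointwise lower bound $F_{k,t}(x)\ge p^k+(1-2\alpha)^k/2-(1-\alpha-p)^k+(1-2p)^k/2$ with $p=F_{k,t-1}(x)$. Plugging in $p=1/4$ shows inductively that $F_{k,t}(x^*)\ge 1/4$ for all $t\ge t^*$, a contradiction. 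The quantile level $1/4$ (rather than $1/k$) is what makes the arithmetic close for $k=5$; your instinct to abandon integrated moments for a pointwise CDF argument is precisely what is needed.
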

More specifically, our proof assumes for a contradiction that the distribution converges to $1/2$, so at some generation $t^*$, the probability mass left of $1/4$ must be less than some small $\alpha$. We then show that the CDF can never decrease at $F_{k, t^*}^{-1}(1/4)$ after generation $t^*$, since all candidates will likely be inside $(1/4, 3/4)$, causing only the most extreme candidates to win. This contradicts convergence to the center.

\section{Replicator dynamics with noise}\label{sec:noisy}
So far, we have assumed that all candidates copy winner positions from the previous generation. We now show that our results still hold in approximate forms if some of the candidates violate this behavior and instead position themselves uniformly at random. This demonstrates a way in which the convergence of the model is robust to alternative specifications. 

\begin{definition}
Given an initial candidate distribution $F_0$, a candidate count $k$, and a noise level $\epsilon \in (0, 1]$, the \emph{replicator dynamics for candidate positioning with $\epsilon$-uniform noise} (under plurality with uniform 1-Euclidean voters) are, for all $t > 0$, 
  \begin{align}
  &F_{k, t}^\epsilon(x) = \Pr( \plurality(X_{1,t}^\epsilon,\dots, X_{k,t}^\epsilon) \le x),\label{eq:noisy-iteration}\\
  &F_{k,0}^\epsilon = F_0\notag\\
  & X_{i,t}^\epsilon\sim \begin{cases}
    \text{Uniform}(0, 1) &\text{w.p.\ $\epsilon$},\\
    F_{k, t-1}^\epsilon &\text{w.p.\ $1-\epsilon$}.
  \end{cases}\notag
\end{align} 
\end{definition} 

As in the noiseless case, we show that the candidate distribution  converges to the center under the dynamics with $\epsilon$-uniform noise for $k = 2, 3, 4$ but do not for $k \ge 5$. However, since $\epsilon$-uniform noise introduces non-central candidates at every $t$, we need to relax the convergence requirement. The idea behind our notion of approximate convergence is that if we make the noise sufficiently small, then the distribution should get arbitrarily close to a point mass at $1/2$. That is, the CDF at any point $x < 1/2$ eventually goes below any positive threshold $c$, for sufficiently small $\epsilon>0$. 

\begin{definition}
Let $F_0 \in \mathcal F$. The candidate distribution \emph{approximately converges to the center} under the replicator dynamics with $\epsilon$-uniform noise if for all $x \in [0, 1/2)$ and $c > 0$, there exists some $\epsilon_\text{max} > 0$ such that if $\epsilon \in (0, \epsilon_\text{max}]$, then
$\limsup_{t \rightarrow \infty}F_{k, t}^\epsilon(x) < c$. 
\end{definition}

We now give the analogue of our main result with $\epsilon$-uniform noise. One additional benefit of adding noise is that we no longer need to assume $F_0$ is positive near $1/2$.

\begin{theorem}\label{thm:main-noisy}
  Let $F_0 \in \mathcal F$. For $k \in \{2, 3, 4\}$, the candidate distribution approximately converges to the center under replicator dynamics with $\epsilon$-uniform noise. In contrast, for all $k \ge 5$, the candidate distribution does not approximately converge to the center.
\end{theorem}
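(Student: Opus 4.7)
The plan is to split the proof into the positive direction ($k \in \{2,3,4\}$) and the negative direction ($k \ge 5$), and in each case adapt the corresponding noiseless argument by noting that with $\epsilon$-uniform noise each candidate at generation $t$ is drawn from the sampling distribution $G^\epsilon_{k,t}(x) := \epsilon\,x + (1-\epsilon)\,F^\epsilon_{k,t-1}(x)$ rather than from $F^\epsilon_{k,t-1}$. Since the noiseless case analyses use only interval probabilities of the form ``a candidate falls in $[0,x]$,'' this substitution preserves the combinatorial structure and simply replaces every $F_{k,t-1}$ by $G^\epsilon_{k,t}$.

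For $k \in \{2,3,4\}$, this yields the perturbed recurrences $F^\epsilon_{2,t}(x) = 2(G^\epsilon_{2,t}(x))^2$ (an equality), $F^\epsilon_{3,t}(x) \le \tfrac34 G^\epsilon_{3,t}(x) + (G^\epsilon_{3,t}(x))^3$, and an analogous inequality for $k=4$ derived from \Cref{thm:k-4-convergence}. Fixing $x < 1/2$, each right-hand side is a scalar iteration $y_t \le \phi_k(\epsilon x + (1-\epsilon)y_{t-1})$ whose $\epsilon = 0$ limit has $y = 0$ as an attracting fixed point with contraction slope $\phi_k'(0) < 1$ (specifically $0$, $3/4$, $1/2$ respectively for $k = 2,3,4$). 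A standard perturbation argument gives a unique small fixed point $y^*_k(\epsilon, x)$ of the perturbed map, with $y^*_k(\epsilon, x) \to 0$ as $\epsilon \to 0$; the true iterates are equal to (for $k=2$) or dominated by the scalar iterates, so $\limsup_t F^\epsilon_{k,t}(x) \le y^*_k(\epsilon,x)$ can be made smaller than any prescribed $c > 0$ by taking $\epsilon$ small, establishing approximate convergence.

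For $k \ge 5$, I mirror the contradiction argument of \Cref{thm:large-k-no-convergence}. Assume approximate convergence; choosing $\alpha > 0$ small and applying the definition at $y = 1/4$, there is $\epsilon > 0$ and a time $t^*$ with $F^\epsilon_{k,t}(1/4) < \alpha$ for all $t \ge t^*$. Set $x^* := (F^\epsilon_{k,t^*})^{-1}(1/4) \in (1/4, 1/2)$ and prove by induction on $t \ge t^*$ that $F^\epsilon_{k,t}(x^*) \ge 1/4$. For the inductive step, \Cref{lemma:1/4-3/4} applies to the high-probability event that all $k$ candidates lie in $(1/4, 3/4)$, making the winner the leftmost or rightmost; inclusion--exclusion on the events ``leftmost $\in (x^*, 1-x^*)$'' and ``rightmost $\in (x^*, 1-x^*)$,'' combined with the symmetry of $G^\epsilon$ about $1/2$, then gives
$$\Pr(\text{winner}\in(x^*,1-x^*)) \;\le\; 2\bigl[(1-g^*)^k - (g^*)^k\bigr] \;-\; (1-2g^*)^k \;+\; \eta(\alpha,\epsilon),$$
where $g^* \ge 1/4$ is the value at $x^*$ of the sampling distribution for the next generation (by the inductive hypothesis and $x^* > 1/4$) and $\eta(\alpha,\epsilon) \to 0$ as $\alpha,\epsilon \to 0$. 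The symmetry identity $2F^\epsilon_{k,t+1}(x^*) + \Pr(\text{winner}\in(x^*,1-x^*)) = 1$ then lower-bounds $F^\epsilon_{k,t+1}(x^*)$. A one-variable calculation shows that $f_k(\gamma) := 2[(3/4-\gamma)^k - (1/4+\gamma)^k] - (1/2-2\gamma)^k$ is strictly decreasing on $[0, 1/4]$ by convexity of $x \mapsto x^{k-1}$, so the worst case occurs at $\gamma = g^* - 1/4 = 0$, where $f_k(0) = 2(3/4)^k - 2(1/4)^k - (1/2)^k$. The crucial arithmetic fact is that $f_k(0) < 1/2$ exactly when $k \ge 5$ (e.g., $f_5(0) = 452/1024$ versus $f_4(0) = 144/256$), so for $\alpha, \epsilon$ small enough that $\eta$ falls below the slack we close the induction with $F^\epsilon_{k,t+1}(x^*) > 1/4$; iterating gives $\liminf_t F^\epsilon_{k,t}(x^*) \ge 1/4$, and invoking approximate convergence at the specific pair $(y,c) = (x^*, 1/8)$ produces the contradiction.

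The main obstacles are (a) controlling the error term $\eta(\alpha,\epsilon)$ uniformly, which aggregates the $1 - (1 - 2 G^\epsilon_{k,t}(1/4))^k$ contribution from configurations with a candidate outside $(1/4, 3/4)$ together with the $O(\epsilon)$ offset between $G^\epsilon$ and $F^\epsilon$, and showing it can be made smaller than the slack $1/2 - f_k(0) > 0$ for $k \ge 5$; and (b) the fact that $x^* = x^*(\epsilon)$ depends on $\epsilon$, which forces the final step to invoke the universally quantified approximate-convergence statement at the $\epsilon$-dependent point $y = x^*(\epsilon) < 1/2$ for the chosen $\epsilon$. The monotonicity of $f_k$, which identifies $\gamma = 0$ as the worst case, and the perturbation analysis for $k \in \{2,3,4\}$ are clean standard computations by comparison.
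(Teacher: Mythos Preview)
Your overall strategy matches the paper's exactly: substitute the noisy sampling CDF $G^\epsilon_{k,t}(x)=\epsilon x+(1-\epsilon)F^\epsilon_{k,t-1}(x)$ into the noiseless case analyses for $k\in\{2,3,4\}$, and for $k\ge5$ rerun the contradiction argument of \Cref{thm:large-k-no-convergence} with small error terms for candidates landing outside $(1/4,3/4)$. Your $f_k$ computation and its monotonicity are a clean repackaging of the paper's numerical inequality; your obstacle~(b) is also present in the paper's published proof, which picks $\epsilon<\min\{\epsilon^*_{\max},\epsilon'_{\max},\dots\}$ with $\epsilon'_{\max}$ depending on $z(\epsilon)$ in the same circular way.

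The substantive gap is your $k=4$ step. The bound from \Cref{thm:k-4-convergence} is $F_{4,t}(x)\le F_{4,t-1}(x)\bigl[1-4\beta_t^3\bigr]$ with $\beta_t=\tfrac12-F_{4,t-1}(x/3+1/3)$: the contraction factor depends on the CDF at the \emph{different} point $x/3+1/3$, not on $y_{t-1}=F_{4,t-1}(x)$, so this is not a scalar iteration $y_t\le\phi_4(\epsilon x+(1-\epsilon)y_{t-1})$ for any fixed $\phi_4$. Without further input $\beta_t$ could drift to $0$ and the slope to $1$, and then your perturbation argument yields no small fixed point. The paper closes this with a separate lemma (\Cref{lemma:k-4-1/3-bound-noisy}) proving $F^\epsilon_{4,t}(x/3+1/3)\le\max\{x/3+1/3,\,F_0(x/3+1/3)\}$ for all $t$, which fixes a uniform $\beta>0$ and only then produces a genuine linear map with slope $(1-\epsilon)(1-4\beta^3)<1$ and fixed point $O(\epsilon/\beta^3)$. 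Relatedly, your asserted $\phi_4'(0)=1/2$ is incorrect: the slope is $1-4\beta^3$, which can be arbitrarily close to $1$ as $x\uparrow 1/2$.
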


\Cref{thm:main-noisy} follows from \Cref{thm:k-2-noisy-convergence,thm:k-3-noisy-convergence,thm:k-4-noisy-convergence,thm:large-k-noisy-no-convergence}. See \Cref{app:noisy-proofs} for proofs omitted from this section.

\subsection{$k = 2$}
We first show that the replicator dynamics with $\epsilon$-uniform noise approximately converge to the center with two candidates. In fact, we can exactly characterize the limiting candidate distribution for $k=2$. As before with $k=2$, whichever candidate is closer to $1/2$ wins, but now these candidates can either be winner-copiers or randomly positioned. The idea behind the proof is to find an iterated map for $F_{2, t}^\epsilon(x)$ and find the stable fixed point it converges to for $x < 1/2$, which we show is smaller than $\epsilon$. 

\begin{theorem}\label{thm:k-2-noisy-convergence}
  Let $F_0  \in \mathcal F$. For any $\epsilon \in (0, 1)$ and $x  \in [0,  1/2)$ with $\epsilon$-uniform noise,
  \begin{equation}
    \lim_{t \rightarrow \infty} F_{2, t}^\epsilon(x) = \frac{1 - 4x \epsilon(1-\epsilon) -  \sqrt{1 - 8 \epsilon x (1 - \epsilon )}}{4 (1-\epsilon)^2} \le \epsilon.
  \end{equation}
\end{theorem}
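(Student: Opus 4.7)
The plan is to collapse the $\epsilon$-noisy dynamics at a single point $x \in [0, 1/2)$ into a one-dimensional iterated map and solve for its stable fixed point. Write $p_t := F_{2,t}^\epsilon(x)$. Since with $k=2$ the plurality winner is whichever candidate is closer to $1/2$, and since $F_{2,t-1}^\epsilon$ is symmetric by induction (the uniform noise and the initial distribution both are), the event that the winner lies in $[0,x]$ is exactly half of the event that both candidates lie outside $(x, 1-x)$. A single draw $X_{i,t}^\epsilon$ lies in $[0,x]$ with probability $\epsilon x + (1-\epsilon)p_{t-1}$, and by symmetry has the same probability of lying in $[1-x,1]$, so
\begin{equation}
p_t \;=\; 2\bigl[\epsilon x + (1-\epsilon) p_{t-1}\bigr]^2 \;=:\; g(p_{t-1}). \notag
\end{equation}

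Next I would locate the fixed points. Setting $p = g(p)$ and substituting $u = \epsilon x + (1-\epsilon)p$ yields $p = 2u^2$ and $u = \epsilon x + 2(1-\epsilon)u^2$, so $u$ solves
\begin{equation}
2(1-\epsilon) u^2 - u + \epsilon x \;=\; 0, \notag
\end{equation}
with roots $u_\pm = \bigl(1 \pm \sqrt{1 - 8\epsilon x(1-\epsilon)}\bigr)/(4(1-\epsilon))$. Taking the smaller root $u_-$ and expanding $(1-\sqrt{D})^2 = 2 - 8\epsilon x(1-\epsilon) - 2\sqrt{D}$ with $D = 1 - 8\epsilon x(1-\epsilon)$, one recovers $p^* := 2u_-^2$ exactly as written in the theorem. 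For the dynamical step, I would note that $g$ is increasing and convex, that $g$ maps $[0, 1/2]$ into itself since $g(1/2) = 2[\epsilon(x - 1/2) + 1/2]^2 < 1/2$ for $x<1/2$, and that the larger root satisfies $p_+ = 2u_+^2 > 1/2$ throughout $x \in [0, 1/2)$ (a short check using the inequality $\sqrt{D} > |1-2\epsilon|$). Consequently, on $[0, 1/2]$ the convex function $g(p) - p$ is positive at $0$, negative at $1/2$, and vanishes only at $p^*$; so the iterates starting from $p_0 = F_0(x) \in [0, 1/2]$ move monotonically toward $p^*$ from whichever side they begin.

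Finally, for the inequality $p^* \le \epsilon$, it suffices to show $g(\epsilon) \le \epsilon$, since then $g(p) - p$ is positive at $0$ and non-positive at $\epsilon$, forcing its first zero $p^*$ into $(0, \epsilon]$. The condition $g(\epsilon) \le \epsilon$ simplifies to $2\epsilon(x + 1 - \epsilon)^2 \le 1$; for $x \le 1/2$ this is implied by $h(\epsilon) := 2\epsilon(3/2 - \epsilon)^2 \le 1$, and a one-variable optimization shows $h$ attains its maximum $1$ on $[0, 1]$ at $\epsilon = 1/2$. The main obstacle is the global dynamical step rather than the algebra: one has to rule out convergence to the other fixed point $p_+$ for arbitrary initial values $F_0(x) \in [0, 1/2]$, which hinges on the strict bound $p_+ > 1/2$ whenever $x < 1/2$; this is cleanest for $\epsilon$ bounded away from $1/2$, and slightly delicate in a neighborhood of $\epsilon = 1/2$, $x = 1/2$ where $p_+$ approaches $1/2$ and the upper fixed point is barely expelled from the domain.
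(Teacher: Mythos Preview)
Your proof is correct and follows the same high-level strategy as the paper: reduce the dynamics at a fixed $x$ to the one-dimensional quadratic map $p_t = 2[\epsilon x + (1-\epsilon)p_{t-1}]^2$, locate its two fixed points, show the iterates from any $p_0 \in [0,1/2]$ converge to the smaller one, and bound that fixed point by $\epsilon$.

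Where you differ is in the technical execution, and your choices are cleaner than the paper's. The paper (in its Lemma~5) works directly with the expanded quadratic $2p^2(1-\epsilon)^2 + 4px\epsilon(1-\epsilon) + 2x^2\epsilon^2$, applies the quadratic formula in $p$, computes the derivative at each fixed point to establish stability/instability, and verifies $p_1^* \le \epsilon$ by a page of algebra on $\epsilon - p_1^*$ (taking $\partial/\partial x$, then case-splitting on $\epsilon \lessgtr 1/2$). Your substitution $u = \epsilon x + (1-\epsilon)p$ turns the fixed-point equation into the tidier $2(1-\epsilon)u^2 - u + \epsilon x = 0$; your convexity/monotonicity argument replaces the derivative-based stability check; and your observation that $g(\epsilon) \le \epsilon$ reduces the bound $p^* \le \epsilon$ to the one-variable maximization of $2\epsilon(3/2-\epsilon)^2$, avoiding the paper's case analysis entirely. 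Your worry at the end about the boundary case $\epsilon \approx 1/2$, $x \to 1/2$ is unfounded: the inequality $\sqrt{D} > |1-2\epsilon|$ is equivalent to $x < 1/2$ for every $\epsilon \in (0,1)$, so $p_+ > 1/2$ holds strictly throughout the stated range, and the theorem is pointwise in $x$ so no uniformity is needed.
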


\begin{proof}
  Let $x < 1/2$ and $p = F_{2, t-1}^\epsilon(x)$. Each candidate in generation $t$ is drawn from $F_{2, t-1}^\epsilon$ w.p.\ $(1-\epsilon)$ and from $\text{Uniform}(0, 1)$ w.p.\ $\epsilon$ (call such candidates \emph{uniform)}. Uniform candidates fall outside $(x, 1-x)$ w.p.\ $2x$, while non-uniform candidates fall outside $(x, 1-x)$ w.p.\ $2p$ by symmetry.  A winner in generation $t$ is not in $(x, 1-x)$ if and only if both candidates fall outside this interval, which thus occurs with probability
  \begin{align*}
    \Pr(X_{1,t}^\epsilon \notin (x, 1-x), X_{2,t}^\epsilon \notin (x, 1-x))&= \underbrace{(1-\epsilon)^2(2p)^2}_\text{neither uniform} + \underbrace{2 \epsilon (1-\epsilon)(2x) (2p)}_\text{one uniform} + \underbrace{\epsilon^2 (2x)^2}_\text{both uniform}\\
    &= 4p^2 (1-\epsilon)^2 + 8px\epsilon(1-\epsilon) + 4x^2 \epsilon^2.
  \end{align*}
  By symmetry, we then have
  \begin{align}
    F_{2, t}^\epsilon(x) &= \Pr(X_{1,t}^\epsilon \notin (x, 1-x), X_{2,t}^\epsilon \notin (x, 1-x)) /2\notag\\
    &= 2p^2 (1-\epsilon)^2 + 4px\epsilon(1-\epsilon) + 2x^2 \epsilon^2.\label{eq:k-2-noisy-map}
  \end{align}
  The claim then follows from the following technical lemma, proved in \Cref{app:noisy-proofs}.
  \begin{restatable}{lemma}{ktwonoisylemma}
\label{lemma:k-2-noisy-map}
  For all initial $p \in [0, 1/2]$, $\epsilon \in (0, 1)$, and $x \in [0, 1/2)$, the quadratic iterated map
  $
    p' = 2p^2 (1-\epsilon)^2 + 4px\epsilon(1-\epsilon) + 2x^2 \epsilon^2
  $
  converges to the fixed point 
  $
  p^* =   \frac{1 - 4x \epsilon(1-\epsilon) -  \sqrt{1 - 8 \epsilon x (1 - \epsilon )}}{4 (1-\epsilon)^2} \le \epsilon.
  $
\end{restatable}
 
  \end{proof}
  This result implies approximate convergence to the center: we can simply take $\epsilon_\text{max} < c$ and we then have $\lim_{t \rightarrow \infty}F_{2, t}^\epsilon(x) \le \epsilon_\text{max}< c$. 
  \begin{corollary}
    Let $F_0  \in \mathcal F$. For $k = 2$, the candidate distribution approximately converges to the center under replicator dynamics with $\epsilon$-uniform noise.
  \end{corollary}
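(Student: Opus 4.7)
The plan is to treat $g(p) := 2(1-\epsilon)^2 p^2 + 4\epsilon(1-\epsilon)x p + 2\epsilon^2 x^2$ as an upward-opening parabola in $p$ and invoke a standard monotone-convergence argument for one-dimensional discrete dynamical systems. First I would observe that $g'(p) = 4(1-\epsilon)^2 p + 4\epsilon(1-\epsilon)x \ge 0$ on $[0,\infty)$, so $g$ is monotone nondecreasing on the relevant domain. Next I would solve $g(p)=p$: this is the quadratic $2(1-\epsilon)^2 p^2 + [4\epsilon(1-\epsilon)x - 1]p + 2\epsilon^2 x^2 = 0$ whose discriminant simplifies cleanly to $1 - 8\epsilon x(1-\epsilon)$, which is strictly positive because $\epsilon(1-\epsilon)\le 1/4$ and $x<1/2$ force $8\epsilon x(1-\epsilon) < 1$. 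The two real roots are the claimed $p^*$ and a larger root $p^{**}$.

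Then I would show that $[0,1/2]$ is forward-invariant and contains exactly one fixed point. Since $g(0) = 2\epsilon^2 x^2 \ge 0$ and $g$ is nondecreasing, it is enough to bound $g(1/2)$. Computing $g(1/2) = (1-\epsilon)^2/2 + 2\epsilon(1-\epsilon)x + 2\epsilon^2 x^2$ and using $x<1/2$, I would conclude
\[
g(1/2) < \tfrac{(1-\epsilon)^2 + 2\epsilon(1-\epsilon) + \epsilon^2}{2} = \tfrac{1}{2}.
\]
Because the convex function $g(p)-p$ is positive at $0$ and negative at $1/2$, one root $p^*$ lies in $(0,1/2)$ and the other $p^{**}$ lies in $(1/2,\infty)$. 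Thus on the invariant interval $[0,1/2]$ only $p^*$ is fixed, $g(p)>p$ for $p<p^*$, and $g(p)<p$ for $p\in(p^*,1/2]$. Monotonicity of $g$ then makes every orbit $p_0,g(p_0),g(g(p_0)),\dots$ a bounded monotone sequence, which must converge, and the limit has to be the unique fixed point $p^*$.

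Finally I would verify the bound $p^* \le \epsilon$. The inequality is equivalent to
\[
1 - 4x\epsilon(1-\epsilon) - 4\epsilon(1-\epsilon)^2 \le \sqrt{1 - 8\epsilon x(1-\epsilon)},
\]
which is trivially true when the left side is negative, and otherwise may be squared. Writing $a = 4x\epsilon(1-\epsilon)$ and $b = 4\epsilon(1-\epsilon)^2$, the squared inequality reduces to $(a+b)^2 \le 2b$, which after substitution becomes the clean assertion $2\epsilon(x+1-\epsilon)^2 \le 1$. The left side is increasing in $x$ on $[0,1/2]$, so it suffices to check $x = 1/2$, giving $2\epsilon(3/2 - \epsilon)^2$; a quick derivative calculation shows this expression is maximized on $[0,1]$ at $\epsilon = 1/2$ with value exactly $1$, establishing the claim (with strict inequality whenever $x<1/2$).

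The main obstacle is the last step: the inequality $p^* \le \epsilon$ looks harmless but is actually tight at the corner $(\epsilon,x)=(1/2,1/2)$, so the algebraic bound must be handled carefully, taking account of the sign before squaring and using the strict bound $x<1/2$ to rule out equality. The dynamical-systems portion, by contrast, is routine once convexity, monotonicity, and invariance of $[0,1/2]$ are in hand.
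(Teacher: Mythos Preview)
Your proof is correct and follows the same overall strategy as the paper: analyze the quadratic map $g$, locate its fixed points, show convergence to the smaller one $p^*$, and verify $p^*\le\epsilon$ (from which the corollary follows by taking $\epsilon_{\max}<c$). The technical execution, however, differs in two places and is arguably cleaner than the paper's.

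For convergence, the paper computes the derivative of $g$ at each fixed point to establish local stability/instability, and separately checks $p_2^*>1/2$ via a case split on $\epsilon\lessgtr 1/2$. You instead use the global monotonicity of $g$ together with the invariance of $[0,1/2]$ (from $g(0)\ge 0$ and $g(1/2)<1/2$) and convexity of $g(p)-p$ to pin down a unique fixed point in $[0,1/2]$ and obtain convergence by a bounded-monotone-sequence argument. This sidesteps the explicit stability computations and the case analysis entirely.

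For the bound $p^*\le\epsilon$, the paper manipulates $\epsilon-p_1^*$ directly, takes a derivative in $x$, and again splits on $\epsilon\lessgtr 1/2$ with explicit polynomial factorizations. Your substitution $a=4x\epsilon(1-\epsilon)$, $b=4\epsilon(1-\epsilon)^2$ reduces the squared inequality to $(a+b)^2\le 2b$, i.e.\ $2\epsilon(x+1-\epsilon)^2\le 1$, which you dispatch by a single one-variable maximization in $\epsilon$. This is shorter and avoids the case split. One minor point: at $x=0$ you have $g(0)=0$ rather than $g(0)>0$, so $p^*=0$; the argument still goes through with $p^*\in[0,1/2)$ rather than $(0,1/2)$.
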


\subsection{$k=3$}
We now show approximate convergence to the center for $k=3$ with $\epsilon$-uniform noise. As in the noiseless case, we cannot fully characterize the limiting distribution, but we are able to bound it for $\epsilon < 1/3$. The proof repeats the case analysis from the proof of \Cref{thm:k-3-convergence} but with $\epsilon$-uniform candidates, which yields a cubic iterated map. We then bound the attracting fixed point of this map, as we did for $k = 2$. 

\begin{restatable}{theorem}{kthreenoisy}
\label{thm:k-3-noisy-convergence}
  Let  $F_0  \in \mathcal F$. For any $\epsilon \in (0,  1/3)$ and $x \in [0, 1/2)$, $\limsup_{t \rightarrow \infty} F_{3,t}^\epsilon (x) \le 1.5\epsilon$.
\end{restatable}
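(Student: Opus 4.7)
The plan is to adapt the case-enumeration proof of \Cref{thm:k-3-convergence} to the noisy setting, deriving a cubic iterated map whose attracting fixed point in $[0, 1/2]$ we then bound by $1.5\epsilon$, in the spirit of the $k=2$ proof of \Cref{thm:k-2-noisy-convergence}.

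First I would derive the noisy analog of the iterated bound. Fix $x \in [0, 1/2)$ and let $\tilde p := (1-\epsilon) F_{3,t-1}^\epsilon(x) + \epsilon x$ denote the probability that a single candidate in generation $t$ lies in $[0, x]$; by symmetry of $F_{3,t-1}^\epsilon$, this is also the probability of lying in $[1-x, 1]$. The case enumeration in \Cref{thm:k-3-convergence} depends only on how many candidates fall in each of the four intervals $[0, x]$, $(x, 1/2)$, $(1/2, 1-x)$, $(1-x, 1]$, and concludes which region contains the winner regardless of exact positions inside each interval. Since candidates remain mutually independent under $\epsilon$-uniform noise, replacing $F_{3, t-1}(x)$ with $\tilde p$ throughout yields
\[
F_{3, t}^\epsilon(x) \;\le\; g(F_{3, t-1}^\epsilon(x)), \qquad g(p) := \tfrac{3}{4}\,\tilde p + \tilde p^{\,3}, \qquad \tilde p = (1-\epsilon)p + \epsilon x.
\]

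Next I would analyze the scalar iteration $q_0 = F_0(x),\ q_{t+1} = g(q_t)$, which dominates $F_{3,t}^\epsilon(x)$ since $g$ is monotone increasing (its derivative in $p$ is $(1-\epsilon)(3/4 + 3\tilde p^{\,2}) > 0$). The crux is the inequality $g(1.5\epsilon) \le 1.5\epsilon$. Setting $p = 1.5\epsilon$ and using $x < 1/2$ gives $\tilde p \le \epsilon(2 - 1.5\epsilon)$, so after substituting and dividing through by $\epsilon^2$ the bound reduces to $\epsilon(2 - 1.5\epsilon)^3 \le 9/8$. A short calculus argument---differentiating yields $(2-1.5\epsilon)^2(2-6\epsilon)$, positive on $[0, 1/3)$---shows the left side is increasing in $\epsilon$ on $[0, 1/3]$ and attains exactly $9/8$ at $\epsilon = 1/3$. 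Hence the bound holds, strictly for $\epsilon < 1/3$.

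To promote this single-point estimate to convergence, I would observe that $\phi(p) := p - g(p)$ is concave, since $\phi''(p) = -6\,\tilde p\,(1-\epsilon)^2 \le 0$. Combined with $\phi(1.5\epsilon) \ge 0$ (from the previous step) and $\phi(1/2) > 0$ (since $\tilde p < 1/2$ when $x < 1/2$, forcing $g(1/2) < 3/8 + 1/8 = 1/2$), concavity gives $\phi \ge 0$ throughout $[1.5\epsilon, 1/2]$, ruling out any interior fixed point of $g$ in that range. Consequently any iterate $q_t \in (1.5\epsilon, 1/2]$ strictly decreases, and once $q_t \le 1.5\epsilon$ it stays there forever by monotonicity of $g$ together with $g(1.5\epsilon) \le 1.5\epsilon$. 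Either way, $\limsup_t q_t \le 1.5\epsilon$, and therefore $\limsup_t F_{3,t}^\epsilon(x) \le 1.5\epsilon$.

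The main obstacle is the fixed-point bound: the condition $\epsilon < 1/3$ is essentially sharp, and crude estimates do not suffice. For instance, $\tilde p \le 2\epsilon$ gives only $g(1.5\epsilon) \le 1.5\epsilon + 8\epsilon^3 > 1.5\epsilon$, so the argument must carefully retain the quadratic correction $-1.5\epsilon^2$ in $\tilde p = \epsilon(2 - 1.5\epsilon)$, whose $3/4$-weighted contribution $-9\epsilon^2/8$ is what cancels the cubic term $\epsilon^3(2-1.5\epsilon)^3$. The tightness at $\epsilon = 1/3$, where $\tilde p$ and $g(1.5\epsilon)$ both become $1/2$, explains why the theorem degrades precisely at this threshold.
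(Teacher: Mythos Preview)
Your proof is correct and takes a genuinely different route from the paper's. Both arguments start identically, substituting $\tilde p = (1-\epsilon)p + \epsilon x$ into the noiseless bound from \Cref{thm:k-3-convergence} to obtain the cubic recurrence $F_{3,t}^\epsilon(x) \le \tfrac{3}{4}\tilde p + \tilde p^{\,3}$. From there the paper (in its Lemma on the cubic iterated map) replaces $x$ by $1/2$, solves for all three fixed points of the resulting cubic explicitly via the cubic formula, and then spends a page of algebra showing that the relevant root $p_1^* = \tfrac{1}{4}\sqrt{(1+15\epsilon)/(1-\epsilon)^3} - (1+2\epsilon)/[4(1-\epsilon)]$ satisfies $p_1^* \le 1.5\epsilon$ and is stable, before invoking \Cref{lemma:bounded-convergence}. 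You sidestep all of this by checking the single scalar inequality $g(1.5\epsilon) \le 1.5\epsilon$ directly---which reduces cleanly to $\epsilon(2-1.5\epsilon)^3 \le 9/8$, tight at $\epsilon = 1/3$---and then use the concavity of $p \mapsto p - g(p)$ together with $g(1/2) < 1/2$ to exclude any fixed point in $(1.5\epsilon, 1/2]$. Your argument is more elementary (no cubic formula, no root analysis) and makes the role of the $\epsilon < 1/3$ threshold more transparent; the paper's approach, by contrast, yields the exact value of the attracting fixed point rather than just the bound $1.5\epsilon$, which is marginally more informative but not needed for the stated theorem.
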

As with $k=2$, this shows that for any $c > 0$, we can pick a small enough $\epsilon$ (i.e., $\epsilon < \min\{1/3, 2/3 \cdot c\}$) so that $\limsup_{t \rightarrow \infty} F_{3, t}^\epsilon(x) < c$.

\begin{corollary}
      Let $F_0  \in \mathcal F$. For $k = 3$, the candidate distribution approximately converges to the center under replicator dynamics with $\epsilon$-uniform noise.
      \end{corollary}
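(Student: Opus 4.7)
The plan is to parallel the proof of \Cref{thm:k-2-noisy-convergence}: translate the case enumeration from the proof of \Cref{thm:k-3-convergence} into the mixed-candidate setting to obtain a one-dimensional iterated upper bound on $F^\epsilon_{3, t}(x)$, then bound the attracting fixed point of that bound map. Fix $x \in [0, 1/2)$, let $p = F^\epsilon_{3, t-1}(x)$, and set $q := (1-\epsilon)\, p + \epsilon\, x$. By the symmetry of $F^\epsilon_{3, t-1}$, any single candidate in generation $t$ lies in $[0, x]$ with probability $q$, in $[1-x, 1]$ with probability $q$, and in each of $(x, 1/2)$ and $(1/2, 1-x)$ with probability $1/2 - q$.

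Every case used in the proof of \Cref{thm:k-3-convergence} certifies a winner in $(x, 1-x)$ purely from which of these four intervals each candidate occupies, and this certification is insensitive to whether the candidate was a copier or a uniform draw. So the same case enumeration carries over verbatim with $q$ in place of $F_{3, t-1}(x)$, yielding
\begin{equation*}
F^\epsilon_{3, t}(x) \;\le\; \tfrac{3}{4}\, q + q^3 \;=:\; g(p), \qquad g(p) = \tfrac{3}{4}\bigl((1-\epsilon)\, p + \epsilon\, x\bigr) + \bigl((1-\epsilon)\, p + \epsilon\, x\bigr)^3.
\end{equation*}

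Next I would analyze the cubic map $g$. It is continuous, monotone increasing, and convex in $p$, and satisfies $g(1/2) \le 1/2$, so it maps $[0, 1/2]$ into itself. Convexity of $g(p) - p$ together with $g(0) > 0$ and $g(1/2) < 1/2$ (for $x < 1/2$) forces a unique fixed point $p^* \in (0, 1/2)$, to which iterates from any starting point in $[0, 1/2]$ (in particular $F_0(x)$) converge monotonically; hence $\limsup_{t \to \infty} F^\epsilon_{3, t}(x) \le p^*$. Since $g$ is also increasing in $x$, the uniform bound $p^* \le 1.5\epsilon$ reduces to checking $g(1.5\epsilon) \le 1.5\epsilon$ at the extreme $x = 1/2$. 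After cancelling the linear terms, this becomes the sharp inequality
\begin{equation*}
\epsilon\,(2 - 1.5\epsilon)^3 \;\le\; \tfrac{9}{8},
\end{equation*}
whose derivative $(2 - 1.5\epsilon)^2(2 - 6\epsilon)$ is nonnegative on $[0, 1/3]$ and vanishes at $\epsilon = 1/3$, so equality holds exactly at $\epsilon = 1/3$ and is strict below.

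The main obstacle is the sharpness of this final step: the constant $1.5$ and the cutoff $\epsilon = 1/3$ saturate simultaneously at the tangency, so any slack introduced upstream---either by using a weaker case enumeration than in the noiseless proof or by handling the $x$-dependence of $g$ less carefully---would degrade either the constant or the admissible range of $\epsilon$. The other ingredients, namely transferring the case analysis and verifying the structural properties of $g$, are mechanical and follow the template already established by the noiseless $k = 3$ proof and the noisy $k = 2$ proof.
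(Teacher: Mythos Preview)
Your proposal is correct and arrives at the same bound $\limsup_{t\to\infty} F^\epsilon_{3,t}(x)\le 1.5\epsilon$ for $\epsilon\in(0,1/3)$ that the paper establishes in \Cref{thm:k-3-noisy-convergence}, from which the corollary follows exactly as stated there.

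The derivation of the recurrence is identical to the paper's: both substitute $q=(1-\epsilon)p+\epsilon x$ into the noiseless $k=3$ bound. The paper then immediately passes to the worst case $x=1/2$ and, in \Cref{lemma:k-3-noisy-map}, applies the cubic formula to write the three fixed points of the resulting map explicitly; it then carries out a lengthy algebraic verification that the relevant root $p_1^*$ is nonnegative, stable, and bounded by $1.5\epsilon$. You instead exploit the convexity of $g$ to establish existence, uniqueness, and attraction of $p^*$ without ever computing it, and certify $p^*\le 1.5\epsilon$ by the single sign check $g(1.5\epsilon)\le 1.5\epsilon$ at the extremal $x=1/2$, which reduces to the one-variable inequality $\epsilon(2-1.5\epsilon)^3\le 9/8$. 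This is a genuinely cleaner route: it sidesteps the cubic formula and the associated page of radical manipulations entirely. The paper's approach does yield an explicit expression for the limiting CDF bound as a function of $\epsilon$, but that expression is not used elsewhere, so nothing is lost. One small point worth making explicit in a write-up is the passage from the inequality $F^\epsilon_{3,t}(x)\le g(F^\epsilon_{3,t-1}(x))$ to $\limsup_t F^\epsilon_{3,t}(x)\le p^*$; this uses the monotonicity of $g$ exactly as in the paper's \Cref{lemma:bounded-convergence}, and you have that monotonicity available.
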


\subsection{$k = 4$}
As with two and three candidates, we can also show approximate convergence to the center for replicator dynamics with $\epsilon$-uniform noise and four candidates. However, for $k=4$, the bound on  $\epsilon$ required for convergence depends on the point's distance from $1/2$---just as the convergence rate did in the noiseless case. We begin with the noisy analogue of \Cref{lemma:k-4-1/3-bound}. 

\begin{restatable}{lemma}{kfourlemmanoisy}
\label{lemma:k-4-1/3-bound-noisy}
 Let $F_0  \in \mathcal F$. With $\epsilon$-uniform noise, for any $\epsilon \in (0, 1]$, $x \in (1/3, 1/2)$, and $t > 0$, 
    \begin{align*}
      F_{4, t}^\epsilon(x) \le \epsilon x +  (1 - \epsilon) F_{4, t-1}^\epsilon(x).
    \end{align*}
    Thus, $ F_{4, t}^\epsilon(x) \le \max\{x, F_{4, 0}^\epsilon(x)\}$.
\end{restatable}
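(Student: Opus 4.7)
The plan is to reinterpret one step of the noisy dynamics as one step of the \emph{noiseless} dynamics applied to a suitably mixed candidate distribution, and then quote \Cref{lemma:k-4-1/3-bound}. Write $G_t$ for the distribution with CDF
$$G_t(x) := (1-\epsilon)\,F^\epsilon_{4,t-1}(x) + \epsilon\, x.$$
By the definition of $\epsilon$-uniform noise, the four candidates in generation $t$ are i.i.d.\ samples from $G_t$, so $F^\epsilon_{4,t}$ equals the plurality-winner CDF after a single step of noiseless dynamics starting from $G_t$.

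I would then check by an easy induction on $t$ that $G_t \in \mathcal F$. Symmetry about $1/2$ is preserved because $\mathrm{Uniform}(0,1)$ is symmetric, mixtures of symmetric distributions are symmetric, and plurality with uniform voters preserves symmetry; atomlessness is preserved because four i.i.d.\ atomless candidates produce an atomless vote-share profile and hence an atomless winner distribution. With $G_t \in \mathcal F$ in hand, instantiate \Cref{lemma:k-4-1/3-bound} with initial distribution $G_t$ at step $1$. This gives, for every $x \in (1/3, 1/2)$,
$$F^\epsilon_{4,t}(x) \;\le\; G_t(x) \;=\; \epsilon x + (1-\epsilon)\,F^\epsilon_{4,t-1}(x),$$
which is the first inequality claimed.

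The second inequality then follows by induction on $t$. The base case is trivial. For the inductive step, suppose $F^\epsilon_{4,t-1}(x) \le \max\{x, F^\epsilon_{4,0}(x)\}$; the first inequality yields
$$F^\epsilon_{4,t}(x) \;\le\; \epsilon x + (1-\epsilon)\max\{x,\,F^\epsilon_{4,0}(x)\} \;\le\; \max\{x,\,F^\epsilon_{4,0}(x)\},$$
since any convex combination of $x$ and a quantity at least $x$ is at most the larger quantity.

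The only delicate point, and the main obstacle worth flagging, is making sure that the proof of \Cref{lemma:k-4-1/3-bound} in fact yields a \emph{one-step} bound $F_{4,1}(x) \le F_{4,0}(x)$ that depends on $F_0$ only through membership in $\mathcal F$—this is what lets us legitimately apply it with $F_0$ replaced by the data-dependent mixture $G_t$. Given that the noiseless lemma is proved by a case enumeration on the positions of four i.i.d.\ draws from a symmetric atomless distribution, this should be immediate from inspecting its proof; no extra structural assumption on the initial distribution is used.
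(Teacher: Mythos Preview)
Your proof is correct and is essentially the same as the paper's, just packaged more abstractly. The paper simply notes that with $\epsilon$-uniform noise one has $\Pr(X_{i,t}^\epsilon \le x) = \epsilon x + (1-\epsilon)F_{4,t-1}^\epsilon(x)$ and then re-runs the case analysis of \Cref{lemma:k-4-1/3-bound} with this quantity substituted for $p$; your observation that the noisy step is exactly a noiseless step from the mixture $G_t \in \mathcal F$, so that \Cref{lemma:k-4-1/3-bound} can be invoked directly at step~1, is precisely the same argument stated at a higher level of abstraction, and your induction for the second claim matches the paper's.
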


We can then apply the same strategy as we did in the noiseless case and analyze the resulting iterated map as we have done for $k = 2$ and $3$.

\begin{restatable}{theorem}{kfournoisy}
\label{thm:k-4-noisy-convergence}
Let $F_0  \in \mathcal F$. For any $\epsilon \in (0, 1]$ and $x \in (1/3, 1/2)$, let $\beta = 1/2 - \epsilon(x/3 + 1/3) - (1-\epsilon)\max\{x/3 + 1/3, F_{0}(x/3 + 1/3) \}$.  
Then $\beta \in (0, 1/2]$ and $\limsup_{t\rightarrow \infty}F_{4, t}^\epsilon(x) \le \frac{1}{8\beta^3} \epsilon $.
\end{restatable}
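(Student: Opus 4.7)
The plan is to extend the case-analysis approach of Theorem \ref{thm:k-4-convergence} to the noisy setting, producing a linear iterated map whose attracting fixed point is exactly the claimed bound. First, I would verify $\beta \in (0, 1/2]$. The upper bound is immediate since both subtracted terms are non-negative. For the lower bound, $x < 1/2$ gives $x/3 + 1/3 < 1/2$. If the $\max$ is attained by $x/3+1/3$, then the whole subtracted expression equals $x/3+1/3 < 1/2$ and $\beta > 0$. If instead $F_0(x/3+1/3)$ attains the $\max$, then $F_0(x/3+1/3) \le 1/2$ by symmetry and atomlessness, and the strict inequality $\epsilon(x/3+1/3) < \epsilon/2$ provides the slack.

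Next, I would mirror the case enumeration behind Theorem \ref{thm:k-4-convergence}, but with each candidate drawn from the noisy mixture rather than directly from $F_{4,t-1}$. The central geometric observation in the noiseless proof is that when three of the four candidates land in the slivers $(x/3+1/3,\,1/2) \cup (1/2,\, 2/3-x/3)$ appropriately arranged, the plurality winner is forced outside $(x, 1-x)$. In the noisy setting, the per-candidate probability of landing in one of these slivers is
\[ \epsilon\bigl(1/2 - (x/3+1/3)\bigr) + (1-\epsilon)\bigl(1/2 - F_{4,t-1}^\epsilon(x/3+1/3)\bigr), \]
and invoking Lemma \ref{lemma:k-4-1/3-bound-noisy} to uniformly bound $F_{4,t-1}^\epsilon(x/3+1/3) \le \max\{x/3+1/3,\,F_0(x/3+1/3)\}$, this quantity is at least $\beta$ for every $t$.

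Combining the new ``three in the sliver'' case (which contributes a decay term scaling as $\beta^3$ against the probability the winner lies in $[0, x]$) with the at-most-$\epsilon$ additive contribution from uniform-noise candidates landing in $[0, x]$, I expect the bookkeeping to yield a linear recurrence of the form
\[ F_{4,t}^\epsilon(x) \le (1 - 8\beta^3)\, F_{4,t-1}^\epsilon(x) + \epsilon, \]
possibly with the constants distributed differently between the decay factor and the noise term but with the ratio preserved. Since $8\beta^3 \in (0, 1]$, iterating converges monotonically from any starting value to the unique fixed point $\epsilon/(8\beta^3)$, giving the $\limsup$ bound.

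The main obstacle will be the careful bookkeeping in the noisy case enumeration: each candidate in each region contributes a two-part probability (noise vs.\ copied), and the cross-terms must collapse precisely so that the decay coefficient equals $8\beta^3$ and the noise coefficient is exactly $\epsilon$ (not a larger multiple). A subtlety is that Lemma \ref{lemma:k-4-1/3-bound-noisy} must be used uniformly across all prior generations rather than at a single generation so that $\beta$ is a constant of the recurrence and the iteration genuinely closes into the linear form above. Once the recurrence is established, solving for its fixed point is routine, analogous to the $k=2$ step via Lemma \ref{lemma:k-2-noisy-map}.
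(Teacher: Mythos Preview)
Your overall strategy matches the paper's: verify the range of $\beta$, port the extra case from Theorem~\ref{thm:k-4-convergence} to the noisy setting, invoke Lemma~\ref{lemma:k-4-1/3-bound-noisy} to make $\beta$ time-independent, derive a linear recurrence, and bound its fixed point via Lemma~\ref{lemma:bounded-convergence}.

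Two details are misstated, however. First, the geometry is inverted: the extra case has three candidates in the \emph{single} sliver $(x/3+1/3,\,1/2)$ and the fourth candidate in $(1-x,\,1]$ (plus the mirror case), and the leftmost sliver candidate beats the lone outer candidate---so the winner is forced \emph{inside} $(x,1-x)$, not outside. Second, because that fourth candidate must land in $(1-x,1]$, which has probability $\epsilon x + (1-\epsilon)p$, the extra case contributes $8\beta^3[\epsilon x + (1-\epsilon)p]$ to the inner-win probability; after halving for symmetry and combining with the baseline bound from Lemma~\ref{lemma:k-4-1/3-bound-noisy}, the recurrence the paper obtains is
\[
F_{4,t}^\epsilon(x) \le [\epsilon x + (1-\epsilon)p]\,(1-4\beta^3),
\]
not $(1-8\beta^3)p + \epsilon$. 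Its fixed point $\dfrac{\epsilon x(1-4\beta^3)}{1-(1-\epsilon)(1-4\beta^3)}$ is then bounded by $\dfrac{\epsilon/2}{4\beta^3} = \dfrac{\epsilon}{8\beta^3}$ using $x<1/2$ in the numerator and $1-\epsilon \le 1$ in the denominator. Your hedge that the constants might redistribute was well-placed, but the specific form you wrote would not emerge from the bookkeeping.
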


As long as we make $\epsilon$ sufficiently small (relative to $8 \beta ^3$), the CDF at $x < 1/2$ eventually goes below any desired 
threshold $c$---although the closer $x$ is to $1/2$, the smaller $\beta$ becomes, and likewise the required $\epsilon$. 

\begin{corollary}
  Let $F_0  \in \mathcal F$. For $k = 4$, the candidate distribution approximately converges to the center under replicator dynamics with $\epsilon$-uniform noise.
\end{corollary}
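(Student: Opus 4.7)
The plan is to adapt the proof strategy of the noiseless \Cref{thm:k-4-convergence}, carrying every step through the mixture distribution induced by $\epsilon$-uniform noise, and then analyzing the resulting linear iterated map to extract the fixed-point bound $\epsilon/(8\beta^3)$.

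First I would verify $\beta \in (0, 1/2]$. Since $x \in (1/3, 1/2)$, we have $x/3 + 1/3 \in (4/9, 1/2)$, so $x/3 + 1/3 < 1/2$; and since $F_0$ is symmetric and atomless, $F_0(x/3+1/3) \le F_0(1/2) = 1/2$. This gives the upper bound $\beta \le 1/2$ immediately. For positivity, split on whether $F_0(x/3+1/3) = 1/2$: if it is strictly less, then the max is strictly below $1/2$ and $\beta > 0$; if it equals $1/2$, then $\beta = \epsilon(1/6 - x/3) > 0$ since $x < 1/2$. Next I would rerun the case-enumeration argument from the proof of \Cref{thm:k-4-convergence}, replacing each candidate's CDF with the marginal mixture CDF $\tilde F_{t-1}(y) := \epsilon y + (1-\epsilon) F^\epsilon_{4, t-1}(y)$, which is the distribution of each generation-$t$ candidate under $\epsilon$-uniform noise. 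Since candidates are still i.i.d.\ (now from the mixture), the combinatorial case analysis goes through verbatim and yields the analogous one-step recurrence
\begin{equation*}
F^\epsilon_{4, t}(x) \le \tilde F_{t-1}(x)\bigl(1 - 4 (1/2 - \tilde F_{t-1}(x/3+1/3))^3\bigr).
\end{equation*}
Applying \Cref{lemma:k-4-1/3-bound-noisy} to bound $F^\epsilon_{4, t-1}(x/3+1/3) \le \max\{x/3+1/3, F_0(x/3+1/3)\}$ uniformly in $t$ gives $1/2 - \tilde F_{t-1}(x/3+1/3) \ge \beta$, so writing $q_t := F^\epsilon_{4, t}(x)$, the recurrence simplifies to
\begin{equation*}
q_t \le \bigl(\epsilon x + (1-\epsilon) q_{t-1}\bigr)(1 - 4\beta^3).
\end{equation*}

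This is a linear iterated map of the form $q_t \le a q_{t-1} + b$ with $a = (1-\epsilon)(1-4\beta^3)$ and $b = \epsilon x(1-4\beta^3)$. Since $\beta \le 1/2$ gives $4\beta^3 \le 1/2 < 1$ and $\epsilon > 0$, we have $a < 1$, so $\limsup_t q_t \le b/(1-a)$. Using the algebraic identity $1 - (1-\epsilon)(1-4\beta^3) = 4\beta^3 + \epsilon(1-4\beta^3) \ge 4\beta^3$ together with $x < 1/2$ and $(1-4\beta^3) \le 1$, this simplifies to $\limsup_t q_t \le \epsilon/(8\beta^3)$, as claimed.

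The main obstacle I expect is the case-enumeration step itself: carefully tracking the probability of the key four-candidate configuration under the mixture distribution while preserving the counting argument that produces the $4\beta^3$ factor. However, because independence across candidates is retained and the case analysis only invokes the marginal CDF of each candidate, the work should reduce to mechanical substitution of $\tilde F_{t-1}$ for $F_{4, t-1}$ throughout, paralleling the relationship between the noiseless and noisy proofs for $k=2$ and $k=3$. The subsequent fixed-point analysis is routine once the recurrence is in hand.
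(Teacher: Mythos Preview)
Your proposal is correct and takes essentially the same route as the paper: you are effectively reproving \Cref{thm:k-4-noisy-convergence} (the bound $\limsup_t F_{4,t}^\epsilon(x) \le \epsilon/(8\beta^3)$), from which the Corollary follows by choosing $\epsilon$ small relative to $8\beta^3$. The paper likewise substitutes the mixture CDF into the case enumeration from \Cref{lemma:k-4-1/3-bound} and \Cref{thm:k-4-convergence}, invokes \Cref{lemma:k-4-1/3-bound-noisy} to bound the $(x/3+1/3)$ term uniformly in $t$, and then analyzes the resulting linear iterated map (via \Cref{lemma:bounded-convergence} rather than your direct contraction argument, but to the same effect).
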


\subsection{$k \ge 5$}
Now that we have seen approximate convergence to the center for $k = 2, 3, 4$, we prove that this does not happen for any higher $k$. The argument uses the same idea as in \Cref{thm:large-k-no-convergence} ($k\ge 5$ without noise): that when all candidates are in $(1/4, 3/4)$, only the left- and rightmost candidates can win. As long as we make $\epsilon$ sufficiently small, the exact same approach applies, albeit with some added care to account for randomly positioned candidates. 

\begin{restatable}{theorem}{largeknoisy}
\label{thm:large-k-noisy-no-convergence}
 Let $F_0  \in \mathcal F$. For $k \ge 5$, the candidate distribution does not approximately converge to the center under replicator dynamics with $\epsilon$-uniform noise. 
\end{restatable}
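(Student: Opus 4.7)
My plan is to mimic the noiseless argument of Theorem \ref{thm:large-k-no-convergence}, with extra care for the $\epsilon$-uniform noise. I assume approximate convergence for contradiction, apply the definition at $x = 1/4$ with a small $c > 0$ to obtain $\epsilon_{\max} > 0$ such that for every $\epsilon \in (0, \epsilon_{\max}]$ there is some $t^*$ with $F_{k,t}^\epsilon(1/4) < c$ for all $t \ge t^*$. Fix such an $\epsilon$, to be taken small enough. Each sampled candidate at generation $t+1 \ge t^*+1$ then satisfies $\Pr(X_{i,t+1}^\epsilon \le 1/4) = (1-\epsilon) F_{k,t}^\epsilon(1/4) + \epsilon/4 \le c + \epsilon/4$, so by symmetry the event $A_t$ that all $k$ candidates lie in $(1/4, 3/4)$ has probability at least $(1 - 2c - \epsilon/2)^k$, which can be made arbitrarily close to $1$.

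By Lemma \ref{lemma:1/4-3/4}, under $A_t$ only the leftmost or rightmost candidate can win, and each does so with probability $1/2$ by symmetry of the mixture $(1-\epsilon)F_{k,t}^\epsilon + \epsilon\cdot\text{Uniform}$ about $1/2$. Adapting the symmetry-based case analysis from the noiseless proof, for any $y \in (1/4, 1/2)$ I obtain
\begin{equation*}
F_{k,t+1}^\epsilon(y) \ge \tfrac{1}{2}\bigl[(1 - 2P(1/4))^k - (1 - 2P(y))^k\bigr],
\end{equation*}
where $P(x) = (1-\epsilon) F_{k,t}^\epsilon(x) + \epsilon x$. I then choose $y^* \in (1/4, 1/2)$ with $F_{k,t^*}^\epsilon(y^*) = 1/4$, which exists because the uniform-noise component makes $F_{k,t^*}^\epsilon$ continuous and it rises from $<c$ at $x=1/4$ to $1/2$ at $x=1/2$. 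A direct calculation gives $F_{k,t^*+1}^\epsilon(y^*) \ge \tfrac{1}{2}(1 - 2^{-k}) - O(c + \epsilon)$, which exceeds $1/4$ for $k \ge 5$ once $c, \epsilon$ are small enough. An inductive argument then propagates the invariant $F_{k,t}^\epsilon(y^*) \ge 1/4 - O(\epsilon)$ for every $t \ge t^*$. Applying approximate convergence a second time at the specific point $y^* < 1/2$ with a threshold $c' < 1/4 - O(\epsilon)$ furnishes the desired contradiction.

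The main obstacle is closing the induction: maintaining $F_{k,t}^\epsilon(1/4) < c$ for every subsequent generation, so that the ``all in $(1/4, 3/4)$'' regime persists. The naive bound $F_{k,t+1}^\epsilon(1/4) \le \Pr(A_t^c) \le 2k\Pr(X_{i,t+1}^\epsilon \le 1/4)$ allows geometric growth by factor $2k$ per step, which would invalidate the argument within a handful of generations; closing this requires either taking $c$ exponentially smaller than any unsafe threshold, or refining the upper bound by noting that conditional on $A_t^c$ the winner is typically the extremal candidate on the side of the lone outlier, so the mass added to $[0, 1/4]$ is much smaller than the worst case $\Pr(A_t^c)$. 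A secondary subtlety is that $y^*$ depends on $\epsilon$ (via $t^*$ and $F_{k,t^*}^\epsilon$), so the final contradiction step requires choosing $\epsilon$ small enough to also satisfy approximate convergence at this $\epsilon$-dependent point; this is harmless in principle since approximate convergence is quantified over all fixed $y \in [0, 1/2)$.
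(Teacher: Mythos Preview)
Your overall strategy matches the paper's, but you have misidentified the main obstacle. The bound $F_{k,t}^\epsilon(1/4) < c$ for every $t \ge t^*$ is not something you need to maintain inductively---it comes for free from the contradiction hypothesis. By definition, $\limsup_t F_{k,t}^\epsilon(1/4) < c$ means there is some $t^*$ with $\sup_{t \ge t^*} F_{k,t}^\epsilon(1/4) < c$, so the ``all candidates in $(1/4,3/4)$'' regime persists automatically for every generation $t \ge t^*$. This is precisely the point of arguing by contradiction from approximate convergence: the hypothesis itself does the work of keeping mass out of $[0,1/4]$ forever, and your proposed fixes (exponentially small $c$, refined bounds on the winner location under $A_t^c$) are unnecessary.

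Separately, your displayed lower bound $F_{k,t+1}^\epsilon(y) \ge \tfrac{1}{2}\bigl[(1-2P(1/4))^k - (1-2P(y))^k\bigr]$ is not valid: under $A_t$ the winner can lie in $(y,1-y)$ even when not all candidates do (e.g., the leftmost is just right of $y$ and wins, while the rightmost sits in $(1-y,3/4)$), so you have overcounted the outer-winner event. The paper instead substitutes the noisy single-candidate CDF $P(\cdot) = \epsilon(\cdot) + (1-\epsilon)F_{k,t}^\epsilon(\cdot)$ directly into the inclusion--exclusion bound~\eqref{eq:large-k-outer-lower-bound} from the noiseless proof and checks that the resulting four-term expression exceeds $1/4$ at $P(y^*)\approx 1/4$ for explicit constants $c,\epsilon < \tfrac{1}{3}\bigl[1-(125/128)^{1/k}\bigr]$. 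Your ``secondary subtlety'' about the $\epsilon$-dependence of $y^*$ is real; the paper glosses over it in the same way, writing $\epsilon < \min\{\epsilon_{\max}^*, \epsilon_{\max}',\ldots\}$ with $\epsilon_{\max}'$ defined via the $\epsilon$-dependent point $z$.
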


\section{Positive results for $k \ge 5$ with no extreme candidates}\label{sec:1/4-3/4}
In the previous sections, our results for $k \ge 5$ have been negative, showing the candidate distribution does not converge to the center, but without indicating what the distribution converges to instead. While simulations indicate a tendency towards a two-spike equilibrium, we have not been able to theoretically characterize the limiting distribution in general for $k\ge 5$, either with or without noise. 
However, \Cref{lemma:1/4-3/4} enables us to analyze the dynamics for $k \ge 5$ (without noise) in the special case that $F_0$ has no extreme candidates, with support only on $(1/4,3/4)$. In this setting, the dynamics are much simpler, as only the left- and rightmost candidates can win. 
The same type of argument we used before for $k \ge 5$ then provides a positive result, showing that the candidate distribution converges to one with zero mass in an interval around $1/2$. In contrast, our central convergence results for $k \in \{2, 3, 4\}$ still hold in this special case.
Proofs of results in this section can be found in \Cref{app:1/4-3/4-proofs}.
\begin{restatable}{theorem}{boundedsupp}
\label{thm:1/4-3/4}
  Suppose $F_0 \in \mathcal F$ is supported on $(1/4, 3/4)$. Let $\ell = [1-\sqrt{3/7}]/2 = 0.172\dots$. For  $k \ge 5$ and $x \in (F_{0}^{-1}(\ell), 1/2)$, $\lim_{t \rightarrow \infty} F_{k, t}(x) = 1/2.$
\end{restatable}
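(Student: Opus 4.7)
The plan is to use the assumption $\mathrm{supp}(F_0) \subseteq (1/4, 3/4)$ to invoke \Cref{lemma:1/4-3/4} at every generation: since winners in each generation are drawn from the previous candidate distribution, $\mathrm{supp}(F_{k,t}) \subseteq (1/4, 3/4)$ for all $t$, and so in every election only the leftmost or rightmost candidate can win. Fix $x \in (F_0^{-1}(\ell), 1/2)$ and set $p_t = F_{k,t}(x)$, $\delta_t = 1/2 - p_t$, and $m_t = 1 - 2 p_t = 2\delta_t$; by symmetry, $m_t$ equals the mass of $F_{k,t}$ in the middle interval $(x, 1-x)$. The hypothesis translates to $p_0 > \ell$, equivalently $\delta_0 < 1/2 - \ell = \sqrt{3/28}$ via the identity $7\ell^2 - 7\ell + 1 = 0$ that $\ell$ satisfies by definition.

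The key step is a one-step upper bound on $m_{t+1}$. Let $L, M, R$ count the candidates in $[0,x]$, $(x,1-x)$, $[1-x,1]$ respectively in a single election. By symmetry of $F_{k,t}$, $m_{t+1} = 2\Pr(\min \in (x,1-x),\ \min \text{ wins})$, and the event $\{\min \in (x,1-x)\}$ is exactly $\{L=0, M \ge 1\}$. I would split this into three cases: (a) $R = 0$, i.e., all candidates in middle, where the induced symmetry gives $\Pr(\min \text{ wins}) = 1/2$; (b) $R \ge 2$, where $X_{(1)} + X_{(2)} > 2x$ and $X_{(k-1)} + X_{(k)} \ge 2(1-x)$ yield the strict inequality $X_{(1)} + X_{(2)} + X_{(k-1)} + X_{(k)} > 2$, forcing the min to win; and (c) $R = 1$, $M = k-1$, where the inequality need not hold and I bound $\Pr(\min \text{ wins}) \le 1$. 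Summing the three contributions gives
\[
  m_{t+1} \le 2\bigl[(1-p_t)^k - p_t^k\bigr] - (1-2p_t)^k = 2 f(\delta_t) - (2\delta_t)^k,
\]
where $f(\delta) = (1/2 + \delta)^k - (1/2 - \delta)^k$.

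Next I would analyze the contraction ratio. For $k = 5$ the binomial expansion yields $m_{t+1}/m_t \le 5/8 + 5\delta_t^2 - 14 \delta_t^4$. Setting $v = \delta_t^2$, the quadratic $14 v^2 - 5v + 3/8$ factors with roots $v = 3/28$ and $v = 1/4$, so the ratio is $\le 1$ exactly when $\delta_t \le \sqrt{3/28} = 1/2 - \ell$. For $k \ge 6$ the $(2\delta_t)^k$ correction is more negative and the coefficient of $\delta_t^2$ in $f(\delta_t)/\delta_t$ is smaller, and a direct check shows the ratio is strictly less than $1$ throughout $[0, \sqrt{3/28}]$. Hence the interval $[0, \sqrt{3/28})$ is forward invariant for every $k \ge 5$, so $\delta_t$ is non-increasing and converges to some $\delta^* \in [0, \delta_0]$. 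If $\delta^* > 0$, then $\delta^* < \sqrt{3/28}$ strictly, so the ratio bound is $\le c < 1$ on a neighborhood of $\delta^*$, forcing $m_t \to 0$ geometrically---a contradiction. Therefore $\delta^* = 0$, equivalently $F_{k,t}(x) \to 1/2$.

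The main obstacle is case (c): when $R = 1$ and $M = k-1$, the vote-share inequality is not automatic, and no clean reflection symmetry controls $\Pr(\min \text{ wins})$ because one extreme side has $k-1$ candidates and the other has just $1$. Conceding the trivial bound $\Pr(\min \text{ wins}) \le 1$ is precisely what introduces the slack that determines $\ell$, and the algebraic identity $7\ell^2 - 7\ell + 1 = 0$ is exactly the condition making the resulting ratio equal to $1$ at $\delta = \sqrt{3/28}$ for the worst case $k = 5$; verifying that the uniform threshold still gives strict contraction for all $k \ge 6$ is a separate but routine check.
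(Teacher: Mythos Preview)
Your approach is essentially the paper's: the inequality you derive,
\[
m_{t+1} \le 2\bigl[(1-p_t)^k - p_t^k\bigr] - (1-2p_t)^k,
\]
is algebraically identical to the paper's lower bound $p_{t+1} \ge 1/2 + p_t^k - (1-p_t)^k + (1-2p_t)^k/2$ (just rewritten in terms of $m = 1-2p$), and your extraction of the threshold $\ell$ from the $k=5$ factorization is exactly what the paper does. Your direct $(L,M,R)$ case split---with the clean observation that the minimum must win whenever $R \ge 2$---is a nice self-contained derivation; the paper instead obtains the same bound by specializing its earlier estimate from the proof of \Cref{thm:large-k-no-convergence} with $\alpha = 0$.

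There is one genuine gap: the extension to $k \ge 6$. You assert that ``the $(2\delta_t)^k$ correction is more negative and the coefficient of $\delta_t^2$ in $f(\delta_t)/\delta_t$ is smaller,'' and call the verification a ``routine check.'' Neither heuristic actually holds---the $\delta^2$ coefficient $2\binom{k}{3}(1/2)^{k-3}$ equals $5$ for both $k=5$ and $k=6$, and $(2\delta)^k$ is not monotone in $k$ across the whole interval---so the term-by-term comparison you sketch does not go through. The paper proves the needed monotonicity-in-$k$ of the map as a separate lemma, showing
\[
\tfrac{1}{2}(1-2p)^k \log(1-2p) - (1-p)^k \log(1-p) + p^k \log p > 0
\]
for all $p \in (0,1/2)$ and $k \ge 3$, via an integral representation of the logarithm followed by an induction on $k$. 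This is the most technical part of the whole argument, not a routine check; the structure of your proof is right, but the step you flag as easy is precisely where the real work lies.
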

When $F_0$ is Uniform$(1/4, 3/4)$, note that $F_0^{-1}(0.172\dots) = 0.336\dots$, so \Cref{thm:1/4-3/4} implies that as $t \rightarrow \infty$, the candidate density in $[0.34, 0.66]$ goes to 0 for $k \ge 5$. With no extreme candidates, we can also precisely characterize the density of the candidate distribution at $1/2$ using a simple argument. Since only the left- or rightmost candidates can win, a candidate $i$ at $1/2$ only wins if the other candidates are all on the left or on the right. By symmetry, this occurs with probability $2 \cdot (1/2)^{k - 1} = (1/2)^{k - 2}$. Accounting for the $k$-fold symmetry in choosing candidate $i$ and applying an inductive argument based on \Cref{eq:pdf-iteration} then gives the following result.

\begin{restatable}{theorem}{boundedsuppdensity}
\label{thm:limited-support-density}
Suppose $F_0 \in \mathcal F$ is supported on $(1/4, 3/4)$. For any $k \ge 2$ and $t \ge 0$,
  \begin{equation}\label{eq:limited-support-density}
    f_{k, t}(1/2) = f_{0}(1/2) \cdot \left[k (1/2)^{k-2}\right]^t.
  \end{equation}
\end{restatable}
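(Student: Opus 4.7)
The plan is to carry out a direct induction on $t$ using the density recurrence \eqref{eq:pdf-iteration}, with \Cref{lemma:1/4-3/4} supplying the key win-probability calculation at $x=1/2$. The essential preliminary step is to verify that three structural invariants of $F_{k,t-1}$ are preserved by the dynamics: (i) support contained in $(1/4,3/4)$, (ii) symmetry about $1/2$, and (iii) atomlessness. Support containment holds because every candidate in generation $t$ copies some winner from generation $t-1$, so the support of $F_{k,t}$ is contained in the support of $F_{k,t-1}$, and hence in $(1/4,3/4)$ by induction. Symmetry is preserved by the symmetry of plurality under uniform voters, as already noted in the paper. Atomlessness is preserved because, starting from an atomless $F_{k,t-1}$, the probability of ties among any subset of the $k$ sampled candidates is zero, so the winner's location is an almost-everywhere-defined function of a continuous random vector and inherits atomlessness.

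With these invariants in hand, the win probability at $x=1/2$ can be computed in one line. By \Cref{lemma:1/4-3/4}, since all $k-1$ of $X_{2,t},\dots,X_{k,t}$ almost surely lie in $(1/4,3/4)$, a candidate placed at $1/2$ wins the plurality election if and only if it is either the leftmost or the rightmost of the $k$ candidates, i.e.\ all of $X_{2,t},\dots,X_{k,t}$ lie strictly above $1/2$ or all lie strictly below $1/2$. By symmetry and atomlessness of $F_{k,t-1}$, each of these two events has probability $(1/2)^{k-1}$, and for $k\ge 2$ they are disjoint, so
\[
\Pr\bigl(\plurality(1/2,X_{2,t},\dots,X_{k,t})=1/2\bigr) = 2\cdot(1/2)^{k-1} = (1/2)^{k-2}.
\]

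The induction is then immediate. The base case $t=0$ gives $f_{k,0}(1/2)=f_0(1/2)$, matching \eqref{eq:limited-support-density}. For the inductive step, substitute the above win probability into \eqref{eq:pdf-iteration} at $x=1/2$ to obtain
\[
f_{k,t}(1/2) = k\cdot (1/2)^{k-2}\cdot f_{k,t-1}(1/2),
\]
and apply the inductive hypothesis $f_{k,t-1}(1/2) = f_0(1/2)\cdot[k(1/2)^{k-2}]^{t-1}$ to conclude.

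I expect no serious obstacle: the win-probability computation follows in one step from \Cref{lemma:1/4-3/4} together with symmetry and atomlessness, and the induction is a line of algebra. The only subtlety worth being careful about is confirming that the three invariants really propagate through the dynamics; in particular, verifying that $F_{k,t}$ is atomless (so that the events ``all others above $1/2$'' and ``all others below $1/2$'' genuinely cover the winning cases up to a null set, and so that \eqref{eq:pdf-iteration} can be applied pointwise at $x=1/2$) is the one place where a careful reader might want more than a sentence of justification.
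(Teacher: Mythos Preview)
Your proposal is correct and follows essentially the same approach as the paper: use \Cref{lemma:1/4-3/4} together with symmetry and atomlessness to show $\Pr(\plurality(1/2,X_{2,t},\dots,X_{k,t})=1/2)=(1/2)^{k-2}$, plug this into \eqref{eq:pdf-iteration}, and induct on $t$. The only difference is that you are more explicit than the paper about verifying that support in $(1/4,3/4)$, symmetry, and atomlessness propagate through the dynamics, which the paper invokes without further comment.
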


\begin{table}[t]
\centering
\caption{Base of the exponential from \Cref{thm:limited-support-density} for small $k$.}
\label{tab:limited-support-exponent-base}
  \begin{tabular}{lccccc}
  \toprule
  $k$ & 2 & 3 & 4 & 5 & 6\\
  \midrule
  $k (1/2)^{k-2}$ & 2 & $3/2$ & 1 & $5/8$ & $3/8$\\
  \bottomrule
\end{tabular}
\end{table}

With support on $(1/4, 3/4)$, the behavior of the density at $1/2$ therefore depends on whether $k (1/2)^{k-2}$ is smaller or larger than 1. This quantity is smaller than $1$ for $k \ge 5$, larger than $1$ for $k = 2, 3$ and equal to $1$ for $k = 4$ (see \Cref{tab:limited-support-exponent-base}). The larger $k$ is, the more rapidly the density at $1/2$ goes to 0. This simple argument reveals a mechanism driving the $k < 5$ vs $k \ge 5$ divide: $(1/2)^{k -2}$ is exactly the probability that a central candidate is not flanked. This probability decreases rapidly with $k$ and is counterbalanced at first by the increasing number of candidates $k$ who can be at the center---by as soon as $k \ge 5$, the exponentially low probability of being the left- or rightmost candidate at the center becomes too small.
 
\Cref{thm:limited-support-density} is particularly interesting for $k=4$, since we know the distribution converges to a point mass at $1/2$, but the density at $1/2$ stays constant at $f_0(1/2)$ when $F_0$ is supported on $(1/4, 3/4)$. These seemingly contradictory facts are both possible since the distribution converges by accumulating more and more mass in two spikes on each side of $1/2$ that approach the center arbitrarily closely. \Cref{thm:limited-support-density} thus highlights how $k=4$ is a marginal tipping point which just barely converges to the center---a phenomenon also hinted at by the marginal nature of our $k=4$ case analysis in \Cref{lemma:k-4-1/3-bound} and \Cref{thm:k-4-convergence}: the analysis in \Cref{lemma:k-4-1/3-bound} just breaks even, with the low-probability case in \Cref{thm:k-4-convergence} needed to tip the scales. As we saw in \Cref{fig:main-sim}, this manifests in simulation as slow convergence to the center for $k=4$.

\section{Simulations}\label{sec:simulations}

\begin{figure}[t]
  \centering
  \includegraphics[width=\textwidth]{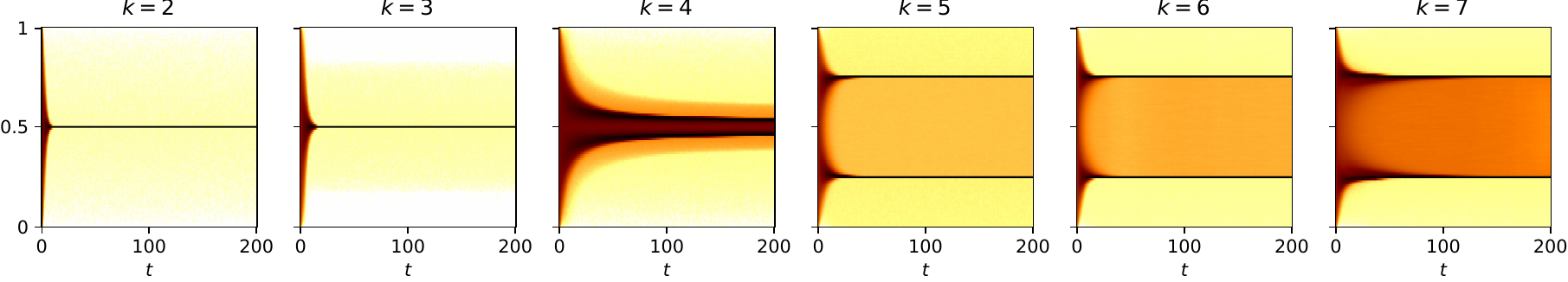}
  \caption{  Replicator dynamics runs with $0.01$-uniform noise for $k = 2, \dots, 7$ and 200 generations, using enhanced symmetry, 50 trials per plot, and 100,000 elections per generation. The behavior is qualitatively identical to the model without noise (\Cref{fig:main-sim}). }
  \label{fig:small-k-symmetry}
\end{figure}

\begin{figure}[t]
  \centering
  \includegraphics[width=\textwidth]{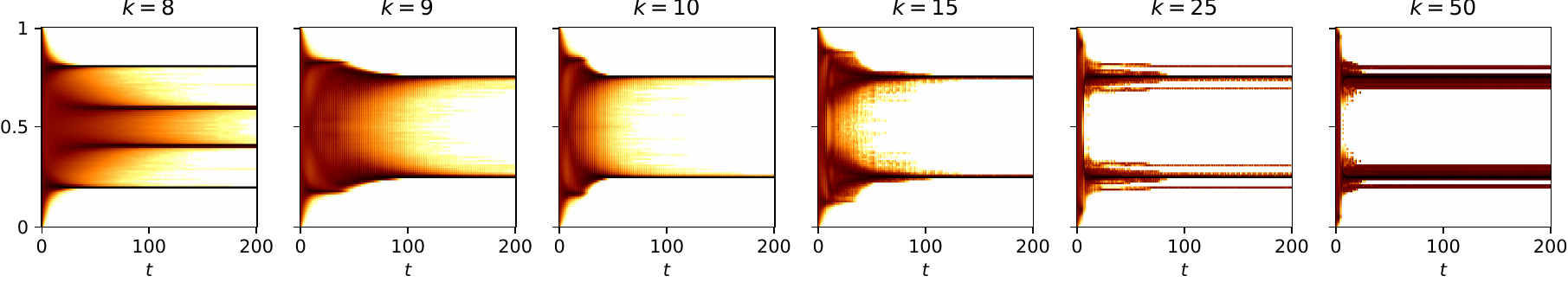}\\
  \includegraphics[width=\textwidth]{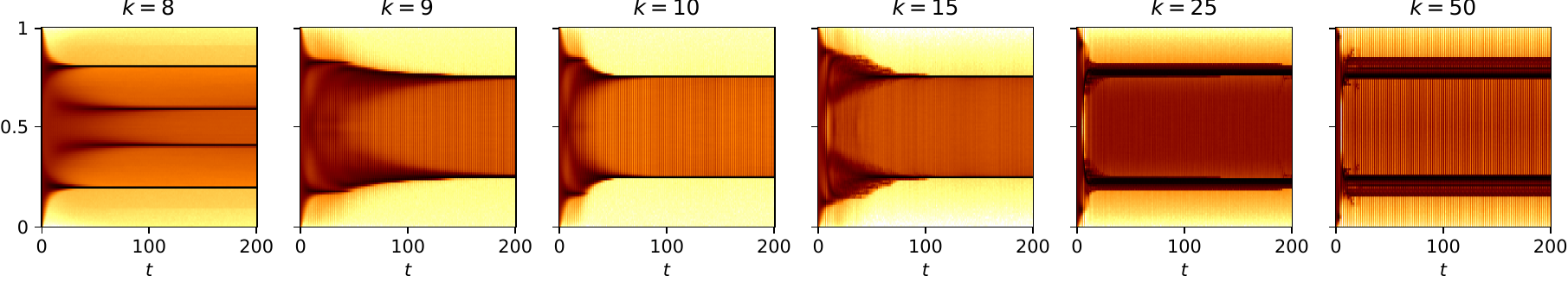}
  \caption{ Replicator dynamics runs with no noise (top row) and $0.01$-uniform noise (bottom row) for larger candidate counts $k$ and using enhanced symmetry. Other settings are identical to \Cref{fig:small-k-symmetry}, with 50 runs shown in each plot. As the theory predicts, the candidate distribution does not converge to the center; but the exact behavior varies.}
  \label{fig:large-k-symmetry}
\end{figure}

Having established our primary theoretical results, we demonstrate them in simulation.\footnote{All of our simulation code and results are available at \url{https://github.com/tomlinsonk/plurality-replicator-dynamics}.} To do so, we use Monte Carlo sampling, simulating a large number of elections per generation (100,000) and using the winners to approximate $F_{k, t}$. We initialize $F_{0}$ to be uniform. With this basic setup, we observe some effects due purely to sampling, such as oscillations due to small asymmetries in the Monte Carlo samples. In contrast, our theoretical model revolves around an evolving density which by definition is always symmetric. To preserve symmetry while maintaining the same evolving distribution, we configure our Monte Carlo sampling to use a trick we term \emph{enhanced symmetry}, mirroring each copied position across $1/2$ with probability $1/2$ in every generation.

\Cref{fig:small-k-symmetry} shows 50 aggregated simulation runs for $k = 2, \dots, 7$ using enhanced symmetry and 0.01-uniform noise (recall \Cref{fig:main-sim} for equivalent plots without noise). See \Cref{app:plots} for additional plots without enhanced symmetry and showing a single trial---these results follow the same general patterns across values of $k$. The candidate distributions evolve exactly as we would expect from our theory: rapid convergence to the center for $k = 2$ and $3$, slow convergence for $k = 4$, and non-convergence to the center for $k \ge 5$; both with and without $\epsilon$-uniform noise. Interestingly, $k = 5, 6,$ and $7$ show a tendency to converge towards two point masses at $1/4$ and $3/4$---but this phenomenon is sensitive to sampling asymmetries for $k = 6$ and $7$ (see \Cref{fig:small-k,fig:small-k-1-trial} in \Cref{app:plots}). In \Cref{sec:1/4-3/4}, we will see some theoretical justification for this two-spike behavior in a special case when $F_{0}$ has no extreme candidates. In \Cref{fig:large-k-symmetry}, we also show simulations with larger candidate counts, from $8$ to $50$. In line with \Cref{thm:large-k-no-convergence}, the distributions with large $k$ do not converge to the center. However, we see some surprising differences depending on $k$; for instance, the asymptotic distribution with $k = 8$ appears to have four clusters rather than two (mysteriously, all other large values of $k$ we have tested tend towards two clusters, at least with enhanced symmetry). In \Cref{app:plots}, we provide several additional visualizations: \Cref{fig:n-50} shows simulations with only 50 elections per generation, demonstrating that our findings still hold in a small-sample setting; \Cref{fig:large-k} shows large-$k$ simulations without enhanced symmetry; finally, \Cref{fig:1/4-3/4} demonstrates the no-extremes setting of \Cref{thm:1/4-3/4}, starting from Uniform($1/4, 3/4$).

\begin{figure}[t]
  \centering
  \includegraphics[width=0.95\textwidth]{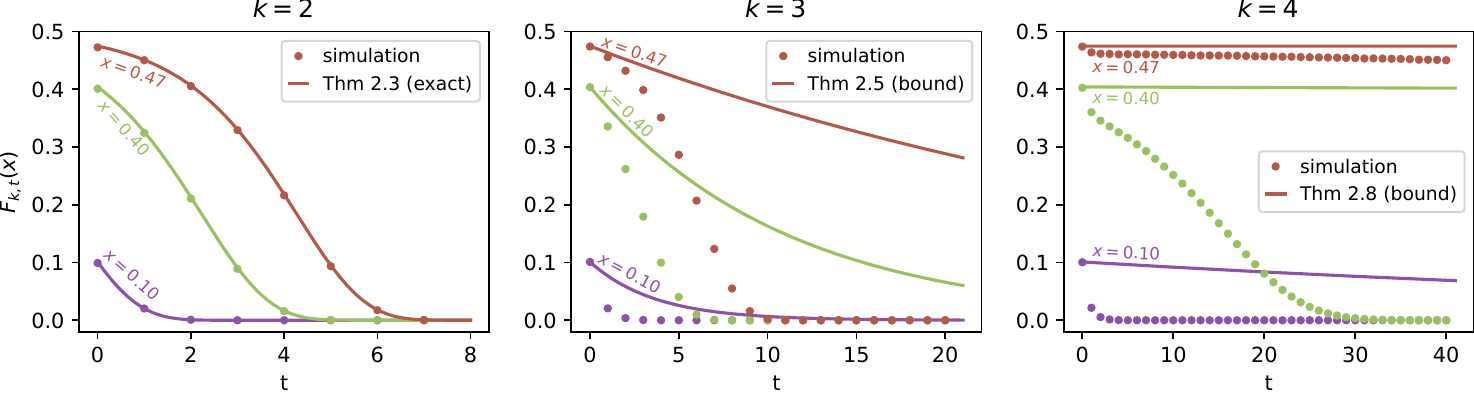}
  \caption{Simulations demonstrating our convergence results \Cref{thm:k-2-convergence,thm:k-3-convergence,thm:k-4-convergence}, showing the simulated candidate distribution CDF at various points $x$ alongside the theoretical predictions. The simulations use 50 trials with 100,000 elections per generation, no noise, and enhanced symmetry. The theorems get progressively weaker: \Cref{thm:k-2-convergence} provides an exact characterization of the two-candidate dynamics, while \Cref{thm:k-3-convergence,thm:k-4-convergence} give upper bounds that converge to 0.}
  \label{fig:pred-vs-sim}
\end{figure}

\begin{figure}[t]
  \centering
  \includegraphics[width=0.95\textwidth]{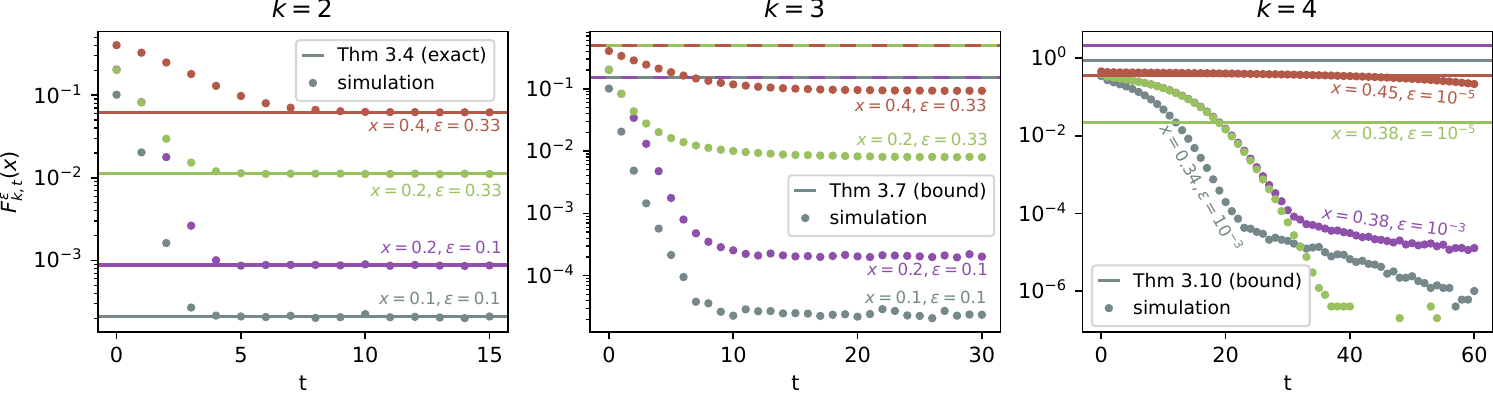}
  \caption{Simulations demonstrating \Cref{thm:k-2-noisy-convergence,thm:k-3-noisy-convergence,thm:k-4-noisy-convergence}, showing the simulated candidate distribution CDF at various points $x$ and various noise levels $\epsilon$ alongside the theoretical asymptotic bounds. The simulations use 50 trials with 100,000 elections per generation and enhanced symmetry. As in the noiseless case, the theorems get progressively weaker as $k$ increases. For $k = 3$, the asymptotic bounds depend only on $\epsilon$, while the bounds for $k = 4$ and the exact limit for $k = 2$ depend on both $\epsilon$ and $x$.}
  \label{fig:pred-vs-sim-noisy}
\end{figure}

In addition to confirming the picture painted by our theory, we also use simulations to explore how tight our bounds are---although our core focus is on characterizing the qualitative behavior of the model rather than achieving the tighest bounds on convergence rate. In \Cref{fig:pred-vs-sim}, we demonstrate the exact result from \Cref{thm:k-2-convergence} and the closed-form upper bounds on the candidate distribution CDF from \Cref{thm:k-3-convergence,thm:k-4-convergence}. The bounds on convergence rates for $k = 3$ and $4$ are indeed loose, as there are several ways that central candidates can win that are not easily captured by our case analysis. In \Cref{fig:pred-vs-sim-noisy}, we demonstrate \Cref{thm:k-2-noisy-convergence,thm:k-3-noisy-convergence,thm:k-4-noisy-convergence}. Note that these results are all asymptotic, characterizing or bounding the limit of the candidate distribution CDF as $t \rightarrow \infty$, whereas the results in \Cref{fig:pred-vs-sim} hold for finite $t$. Additionally, the results with $\epsilon$-uniform noise depend on the value of $\epsilon$, so we experiment with several different values. Again, we see that our bounds from \Cref{thm:k-3-noisy-convergence,thm:k-4-noisy-convergence} are loose, but nonetheless hold and are non-trivial. Moreover, the exact result in \Cref{thm:k-2-noisy-convergence} is nicely confirmed by simulation.

\section{Variants of the replicator dynamics}\label{sec:variants}
We now demonstrate in simulation that the qualitative picture provided by our results from \Cref{thm:main} is robust to different specifications of the model. At a high level, our model of candidate positioning consists of the following components: (1) a fixed voter distribution, (2) a subset of previous candidate positions which new candidates imitate, and (3) a rule for sampling from those previous positions. In the basic model, the voter distribution is uniform, the imitated positions are plurality winners from the previous generation, and the sampling rule chooses uniformly from those winners. Adding $\epsilon$-uniform noise modifies the sampling rule to sometimes pick uniformly random positions. In simulation, we explore natural variations of each of these modeling components: changing the voter distribution, adding plurality winners from earlier generations or runners-up to the imitation pool, and adding copying errors to the sampling rule or sampling different numbers of candidates across elections. See \Cref{app:variants} for formal definitions of the variants in this section.

\paragraph{Non-uniform voters.}

We explore our replicator dynamics with symmetric unimodal and bimodal voter distributions. In \Cref{fig:variants}, we show results with three voter distributions: the unimodal distribution $\text{Beta}(2,2)$, the bimodal, extreme voter distribution $\text{Beta}(0.5,0.5)$, and a bimodal double Weibull distribution~\cite{balakrishnan1985double} with shape $4$, location $0.5$, and scale $0.3$ (see \Cref{fig:voter-pdfs} in \Cref{app:variant-plots} for visualizations of these distributions). 
The basic pattern from \Cref{thm:main} continues to hold with these voter distributions.
However, when voters are $\text{Beta}(2,2)$-distributed, the two clusters at $k = 5$ are significantly closer to the center. 

\paragraph{Memory.}

In the basic model, candidates only copy the positions of winners in the previous generation. However, real-world candidates will likely have memory of earlier winners, so in this variant, we allow candidates to sample from winner positions in any of the last $m$ generations. In \Cref{fig:variants}, we see that adding $m = 2$ generations of memory for candidates still maintains the pattern from \Cref{thm:main}. The results for $m = 3$ are extremely similar (see \Cref{fig:3-gen-mem} in \Cref{app:plots}).

\paragraph{Perturbation noise.}

With \emph{perturbation noise}, each candidate slightly deviates from the position they copy, as if their imitation is imperfect. 
In our simulations, we add Gaussian noise with mean 0 and variance $\sigma^2$ to each copied position. \Cref{fig:variants} shows that the candidate distribution with a small amount of perturbation noise ($\sigma^2 = 0.005$) converges to the center for $k = 2, 3, 4$ but does not for $k\ge 5$. However, with sufficient noise, higher values of $k$ form a single central cluster; we see this in \Cref{fig:variants} with $k = 6$ and $\sigma^2 = 0.01$.
Additionally, for $k \geq 6$ the behavior varies significantly across runs without enhanced symmetry. We even observe phenomena such as party movement, divergence, and extinction, particularly for higher values of $k$ (see \Cref{fig:perturb-1-trial} in \Cref{app:plots}).

\paragraph{Variable candidate counts.}

In real-world elections, we might expect different numbers of candidates to run in different elections, but our model keeps the candidate count $k$ constant. In this variant, we allow elections in each generation to have a mixture of several candidate counts, where candidates copy from winner positions across all $k$ in the previous generation. We find that our results interpolate smoothly to this setting: when most elections have fewer than five candidates, we see convergence to the center, but not when most elections have $k \ge 5$ (see \Cref{fig:variants}, where we simulate an equal mixture of the listed candidate counts in each generation, with 50,000 elections per $k$). See \Cref{fig:k-mixture-heatmap} in \Cref{app:plots} for a more fine-grained experiment in which we smoothly vary the proportions of elections with $k = 3, 4, 5$.

\paragraph{Top-$h$ copying.}

 Finally, we explore a variant where candidates in generation $t$ choose a position to copy from the pool of candidates with the top-$h$ highest vote shares in generation $t-1$, rather than only winners (i.e., $h = 1$). In simulation, top-$h$ copying is the only variant which strays from the dichotomy we establish in \Cref{thm:main}---perhaps unsurprisingly, given that our result is about copying winners. For $k=3,4$, when $h=2$ the candidate distribution does not converge to the center and instead ends up as $k = 5$ usually does, with two clusters (see \Cref{fig:variants}, bottom right). For $h=3$, the candidate distribution does not even appear to converge towards point masses (see \Cref{fig:top-3} in \Cref{app:variant-plots}). These simulations suggest that our central finding (convergence to the center for $k < 5$) is a result of copying the positions of plurality winners specifically, and the dynamics under this heuristic. 
 
\begin{figure}[t!]
\centering
\includegraphics[width=0.49\textwidth]{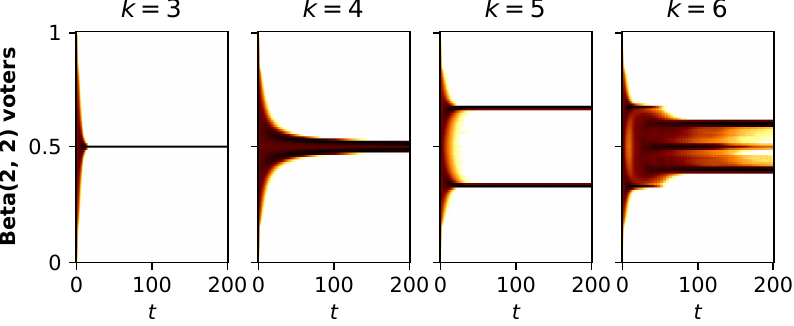}
~
\includegraphics[width=0.49\textwidth]{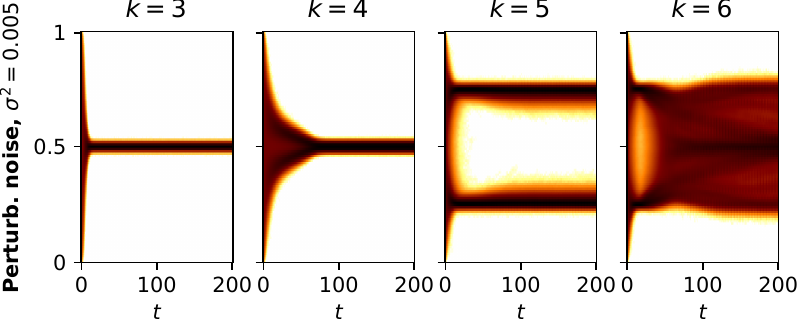}\\
\includegraphics[width=0.49\textwidth]{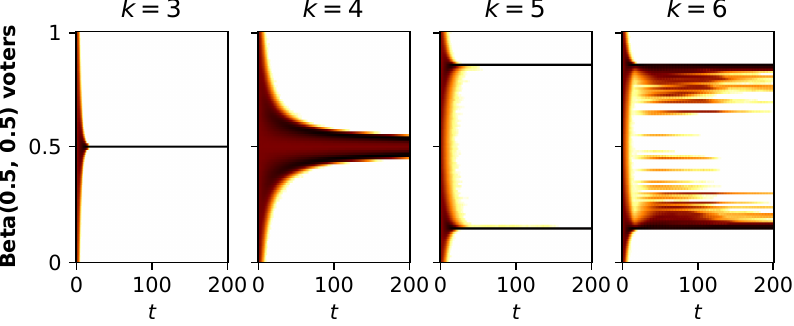}
~
\includegraphics[width=0.49\textwidth]{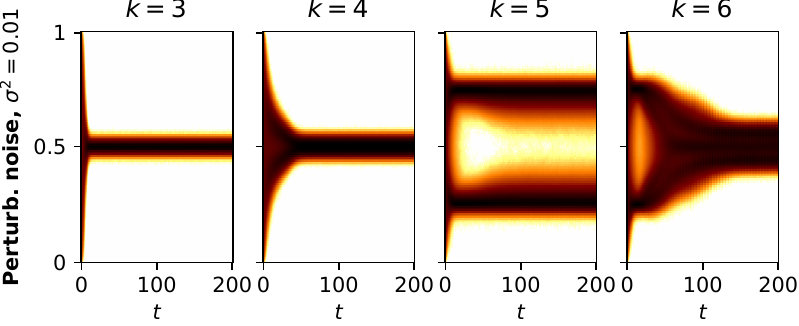}\\
\includegraphics[width=0.49\textwidth]{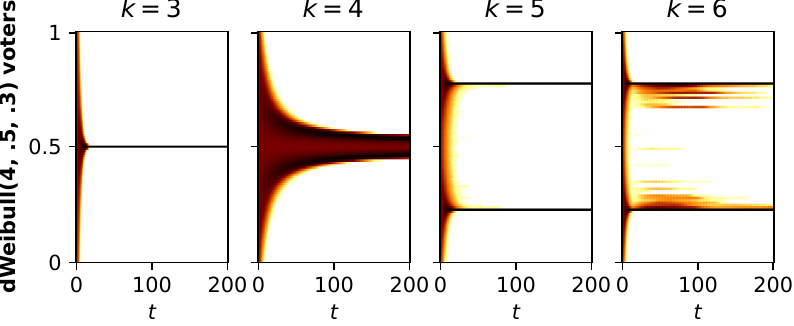}
~
\includegraphics[width=0.49\textwidth]{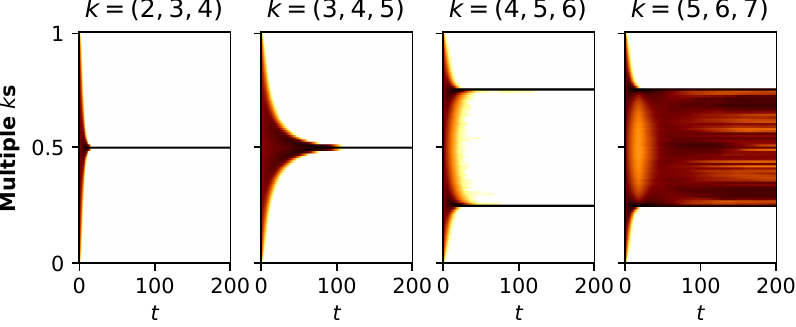}\\
\includegraphics[width=0.49\textwidth]{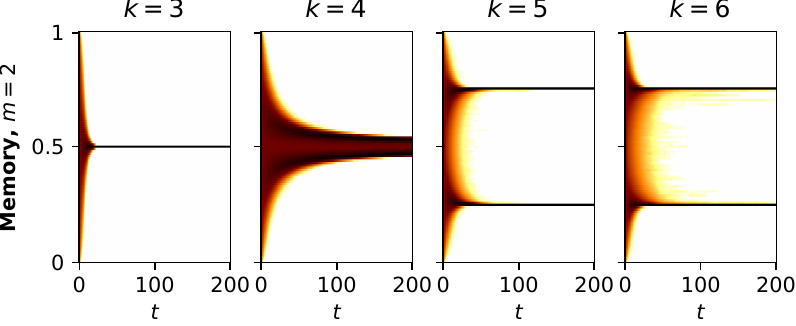}
~
\includegraphics[width=0.49\textwidth]{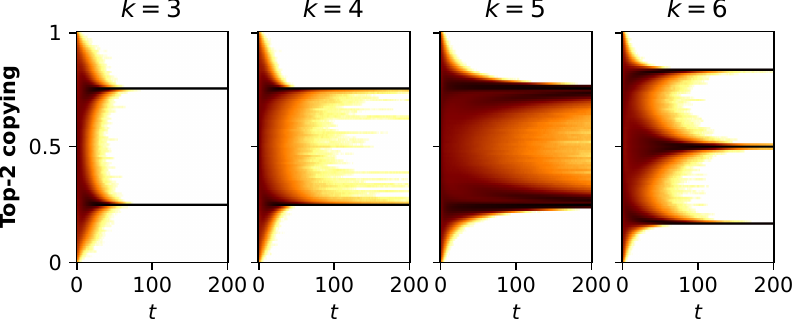}\\
\caption{Variants of the replicator dynamics. Each plot shows 50 trials with no enhanced symmetry. Left column, top to bottom: three different voter distributions and 2 generations of memory. Right column, top to bottom: perturbation noise with $\sigma^2 = 0.005$ and $0.01$, variable candidate counts, and top-2 copying. Except for top-2 copying, all of the variants converge to the center for $k < 5$. Additionally, sufficiently high perturbation noise can cause a central cluster to form for high $k$.}
  \label{fig:variants}
\end{figure}

\section{Relationship to Nash equilibria of one-shot games}\label{sec:nash}
We now take a step back and examine the relationship between our dynamics and prior research on strategic positioning. As we discussed, much of the literature on candidate positioning has focused on one-shot games rather than dynamics~\cite{osborne1995spatial,bol2016endogenous,kurella2017evolution}, as in the Hotelling--Downs model. In our 1-Euclidean setting with uniform voters, the Hotelling--Downs equilibrium has both candidates positioned at $1/2$---which as we showed, is also the attracting distribution of the replicator dynamics with $k=2$. Indeed, it is well-known that Nash equilibria of one-shot games are fixed points of the corresponding replicator dynamics~\cite{hofbauer2003evolutionary}, but replicator dynamics fixed points may not be Nash equilibria. We can see this intuitively in our setting by noting that a distribution $F$ is a (symmetric, mixed-strategy) Nash equilibrium if no strategy does better against $F$ than sampling from $F$, while $F$ is a fixed point of the replicator dynamics if no strategy \emph{drawn from $F$} does better against $F$ than sampling from $F$. For $F$ with full support, symmetric mixed-strategy Nash equilibria and replicator dynamics fixed points thus coincide~\cite{bauer2019stabilization}. However, Nash equilibria can be unstable under the dynamics---and even if they are attractors, their basins of attraction may be negligible. 

Before analyzing Nash equilibria, we first need a brief digression to address what happens when multiple candidates occupy the same point---we call these \emph{positional ties}. Since we have so far assumed that candidate distributions are atomless, our analyses of the replicator dynamics has avoided this issue: with an atomless candidate distribution, positional ties occur with probability 0. One option for handling positional ties is to suppose that candidates fail to position themselves \emph{exactly} at the same point and imagine that there is some infinitesimal jitter in their positions which determines a left--right order. Alternatively, we could suppose that candidates are in fact precisely at the same point, forcing voters to make an arbitrary choice between them. 
\begin{definition}
  Suppose multiple candidates occupy the same point. Under \emph{left--right tie-breaking}, one of these candidates (chosen u.a.r.) receives the entire left vote share allocated to that point, while a different candidate (also u.a.r.) receives the entire right vote share. 
  Under \emph{equal split tie-breaking}, all candidates at a point share the vote share allocated to that point equally. Equivalently, voters randomly choose between equidistant candidates.
\end{definition}

Armed with these positional tie-breaking rules, we now provide several results that demonstrate how a static analysis of Nash equilibria yields more fragile conclusions than analyzing the asymptotic behavior of the replicator dynamics. 
 We focus on two types of equilibria: (1) symmetric mixed-strategy Nash equilibria (SMSNEs), since these relate to fixed points of the replicator dynamics; and (2) pure-strategy Nash equilibria (PSNEs), since these are the focus of classical candidate positioning analyses.

We begin by showing there are multiple SMSNEs in the
one-shot candidate positioning game, but they are often unstable or
have tiny basins of attraction under the dynamics; that is, they are
unlikely to be relevant in practice. In contrast, as we have seen in
theory and simulation, the replicator dynamics behave in qualitatively
similar ways under a range of specifications. Then, we show that PSNEs
are very sensitive to the choice of positional tie-breaking rule: we
arrive at entirely different conclusions if we adopt
left--right versus equal split tie-breaking. In contrast, the 
positional tie-breaking rule is irrelevant to our analysis 
with atomless candidate distributions.

\subsection{Symmetric mixed-strategy Nash equilibria}
 Since SMSNEs are a subset of the replicator dynamics fixed points, we might hope to understand the dynamics by analyzing SMSNEs of the game where candidates seek to maximize their plurality win probability. However, we find that there are multiple SMSNEs and they can have trivial basins of attraction. For instance, every candidate at $1/2$ is a SMSNE and a replicator dynamics fixed point (with left--right tie-breaking\footnote{In this subsection, we adopt left--right tie-breaking since it yields equilibria that more closely align with the typical behavior of the replicator dynamics. For instance, we will see in \Cref{sec:positional-ties} that with equal split tie-breaking, all candidates at $1/2$ is only a SMSNE for $k = 2$---not $k = 3$ or $4$,  where we know the candidate distribution also converges to the center.}). But as we have seen, for $k \ge 5$  all symmetric atomless initial distributions do not converge to a point mass at $1/2$. On the other hand, if we allow initial distributions with point masses and the mass at $1/2$ is sufficiently high, the candidate distribution does indeed approach the all-at-$1/2$ SMSNE.

\begin{restatable}{theorem}{centersmsne}
\label{thm:converge-to-center-smsne}
Suppose $F_{0}$ places probability mass $p$ at $1/2$. For any $k \ge 2$, there is some $p^*_k < 1$ such that if $p > p^*_k$, the candidate distribution converges to a point mass at $1/2$ under the replicator dynamics with left--right tie-breaking. One of the fixed points of $p^k + k p^{k - 1} (1-p)$ is such a $p^*_k$.  
\end{restatable}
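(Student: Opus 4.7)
The plan is to track the probability mass $p_t$ that the candidate distribution places at $1/2$ in generation $t$ (with $p_0 = p$) and show that $p_t \to 1$ whenever $p > p^*_k$; because the total mass is always $1$, this is equivalent to convergence to the point mass at $1/2$. The core step is to prove a one-step lower bound $p_{t+1} \ge g(p_t)$ where $g(p) = p^k + k p^{k-1}(1-p) = k p^{k-1} - (k-1) p^k$, by isolating configurations in which a candidate at $1/2$ is guaranteed to win.

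First I would justify this bound by enumerating two disjoint, good cases. With probability $p_t^k$, all $k$ candidates in an election land at $1/2$, in which case left--right tie-breaking allocates the entire left and right half-intervals of voters to two (uniformly chosen) candidates at $1/2$, so the winner is at $1/2$. With probability $k p_t^{k-1}(1-p_t)$, exactly $k-1$ candidates are at $1/2$ and one is at some $x \ne 1/2$; assuming $x < 1/2$ (the other case is symmetric), under left--right tie-breaking the rightmost candidate at $1/2$ collects the entire right half-interval and obtains vote share exactly $1/2$, which strictly dominates the shares $1/4 + x/2$ and $1/4 - x/2$ received by the $x$-candidate and the leftmost $1/2$-candidate. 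In both cases the realized winner is at $1/2$, so these cases contribute $g(p_t)$ to the new mass at $1/2$, regardless of how the non-$1/2$ mass is distributed.

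Next I would analyze the scalar map $g$ on $[0,1]$. Differentiation gives $g'(p) = k(k-1)p^{k-2}(1-p) \ge 0$, so $g$ is non-decreasing with $g(0)=0$ and $g(1)=1$. Writing $g(p) - p = p\, h(p)$ with $h(p) = k p^{k-2} - (k-1)p^{k-1} - 1$, the interior fixed-point equation reduces to $h(p) = 0$. For $k = 2$ this collapses to $h(p) = 1 - p$ with only $p = 1$ as a root, and I would simply take $p^*_2 = 0$, since then $g(p) > p$ on all of $(0,1)$. For $k \ge 3$, I note that $h(0) = -1$, $h(1) = 0$, $h'(1) = -1$, while $h'(p) = p^{k-3}\bigl(k(k-2) - (k-1)^2 p\bigr)$ is positive on $(0,\, k(k-2)/(k-1)^2)$ and negative thereafter, so $h$ is unimodal on $[0,1]$. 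Being negative at $0$, positive just below $1$ (because $h(1) = 0$ and $h'(1) < 0$), and unimodal, $h$ has exactly one zero $p^*_k \in (0,1)$, which is a fixed point of $g$, and $g(p) > p$ for every $p \in (p^*_k, 1)$.

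Finally I would iterate. Assuming $p_0 > p^*_k$, monotonicity of $g$ combined with $g > \mathrm{id}$ on $(p^*_k, 1)$ yields, by induction, $p_{t+1} \ge g(p_t) > g(p^*_k) = p^*_k$ and $p_{t+1} > p_t$, so $(p_t)$ is strictly increasing and bounded above by $1$, hence converges to some $q \in (p^*_k, 1]$. Passing to the limit in $p_{t+1} \ge g(p_t)$ using continuity of $g$ gives $q \ge g(q) \ge q$, so $q$ is a fixed point of $g$; the only one in $(p^*_k, 1]$ is $1$, giving $p_t \to 1$ and hence convergence of the candidate distribution to the point mass at $1/2$. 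The main obstacle is the very first step: verifying that under left--right tie-breaking, having at least $k-1$ candidates at $1/2$ forces a $1/2$-positioned candidate to win, uniformly over the position of the remaining candidate; once this worst-case-over-positions argument is in place, the rest reduces to a standard monotone scalar iteration.
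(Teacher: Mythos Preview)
Your proposal is correct and follows essentially the same approach as the paper: both establish the lower bound $p_{t+1}\ge g(p_t)$ with $g(p)=p^k+kp^{k-1}(1-p)$ via the two configurations (all at $1/2$, or exactly one deviant), then argue that $g(p)>p$ on an interval $(p^*_k,1)$ and invoke monotone convergence. Your treatment is more thorough than the paper's---you explicitly analyze the fixed-point structure of $g$ via the unimodality of $h$ and pin down the unique interior fixed point, whereas the paper only observes that $g'(1)=0<1$ forces $g(p)>p$ near $1$---but the underlying argument is the same.
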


See \Cref{app:nash-proofs} for proofs omitted from this section. Additionally, there is another family of SMSNEs where each candidate randomly picks between the points $x$ and $1-x$ (for $x\in (1/4, 1/2)$). 

\begin{restatable}{theorem}{twospikesmsne}
\label{thm:two-spike-smsne}
  With $k \ge 4$ and left--right tie-breaking, for any $x \in (1/4, 1/2)$, the strategy where each candidate picks uniformly at random between $x$ and $1-x$ is a SMSNE. 
\end{restatable}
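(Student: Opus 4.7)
The plan is to verify the equilibrium condition directly: when all $k-1$ other candidates play the strategy $F_x$ that places mass $1/2$ on each of $x$ and $1-x$, no pure-strategy deviation $y \in [0,1]$ attains a strictly higher win probability than $F_x$ itself. Write $W(y)$ for the win probability of a candidate at position $y$ against the rest playing $F_x$. By the reflection symmetry of $F_x$ about $1/2$ and of the voter distribution, $W(x) = W(1-x)$. Moreover, when all $k$ candidates play $F_x$ the joint position distribution is exchangeable and all tie-breaking rules (positional and vote-share) are symmetric across candidates, so each wins with probability exactly $1/k$; hence $W(x) = W(1-x) = 1/k$. The task reduces to showing $W(y) \le 1/k$ for every $y$.

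I would condition on $L$, the number of the $k-1$ opponents positioned at $x$ (with the remaining $k-1-L$ at $1-x$), so $L \sim \mathrm{Binomial}(k-1, 1/2)$. Under left--right tie-breaking, a cluster of $j \ge 2$ candidates at $x$ splits into a leftmost who receives vote share $x$ (serving $[0, x]$) and a rightmost who receives $1/2 - x$ (serving $[x, 1/2]$), with the remaining $j-2$ candidates receiving zero; a lone candidate at $x$ gets the entire half-interval, vote share $1/2$. The situation at $1-x$ is symmetric. The assumption $x > 1/4$ is crucial: it gives $x > 1/2 - x$, so the two ``outer'' positions (leftmost at $x$ and rightmost at $1-x$, each with share $x$) dominate the two ``inner'' ones.

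The deviation analysis then splits into three cases. First, for $y \in (x, 1-x)$ with $y \neq x, 1-x$: when both sides are occupied ($1 \le L \le k-2$), the deviator is strictly flanked and receives vote share exactly $1/2 - x < x$, losing to the outer opponents; in the two extreme cases $L = 0$ or $L = k-1$, the deviator's one-sided share exceeds $x$ (using $x < 1/2$), so they win outright. This yields $W(y) = 2 \cdot (1/2)^{k-1} = (1/2)^{k-2}$. Second, for $y \in [0, x) \cup (1-x, 1]$: whenever both sides have at least one opponent the deviator loses to an outer opponent with share $x$, so they can only possibly win when all opponents lie on the opposite side, giving $W(y) \le (1/2)^{k-1}$. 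Third, $y \in \{x, 1-x\}$ gives $W(y) = 1/k$ by the earlier symmetry argument.

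Comparing to $1/k$, the binding inequality is $(1/2)^{k-2} \le 1/k$, i.e., $2^{k-2} \ge k$, which holds for every $k \ge 4$ (with equality at $k=4$ and strict inequality for $k \ge 5$). This is exactly why the statement requires $k \ge 4$: for $k = 3$ the interior-deviation payoff $(1/2)^{k-2} = 1/2$ would strictly exceed $1/3$, breaking the equilibrium. The hardest part is the careful vote-share bookkeeping across the sub-cases of $L$ and the handling of positional ties; the clean observation that makes everything click is that a flanked interior deviator's share is exactly $1/2 - x$, independent of the precise value of $y$.
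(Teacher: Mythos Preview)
Your proof is correct and follows essentially the same approach as the paper: both establish the baseline $W(x)=1/k$ by symmetry, then bound interior deviations by $(1/2)^{k-2}$ (winning only when all opponents land on one side) and exterior deviations by $(1/2)^{k-1}$ (winning only when all opponents are on the far side), and finish with the comparison $2^{k-2} \ge k$ for $k \ge 4$. Your version is more explicit about the vote-share arithmetic (in particular the observation that a flanked interior deviator gets exactly $1/2 - x$) and about why the reduction to pure-strategy deviations suffices, but the argument is the same.
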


Just as with the all-at-1/2 equilibrium, this SMSNE is not indicative of the typical behavior of the replicator dynamics. However, we can show as before that for non-atomless distributions, the candidate distribution can converge to this type of equilibrium.

\begin{restatable}{theorem}{twospikeconvergence}
\label{thm:two-spike-convergence}
  Suppose $F_{0}$ places probability mass $p$ at $x$ and at $1-x$, for $1/4 < x < 1/2$. For any $k \ge 5$, there exists some $p^*_k < 1/2$ such that if $p > p^*_k$, the candidate distribution converges to point masses at $x$ and $1-x$ under the replicator dynamics. In particular, one of the fixed points of $(2p)^k/2 + k(1-2p)((2p)^{k-1} - 2p^{k-1}) / 2$ is such a $p^*_k$.
\end{restatable}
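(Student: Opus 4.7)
The plan is to track the probability mass $p_t$ that $F_t$ places at $x$ (equal to the mass at $1-x$ by the preserved left--right symmetry of the dynamics) and show that $p_0 > p^*_k$ implies $p_t \to 1/2$; the remaining $1 - 2 p_t$ mass must then vanish, so $F_t$ approaches point masses at $x$ and $1-x$. The approach is to establish the one-step lower bound $p_{t+1} \ge g(p_t)$ with $g(p) = (2p)^k/2 + k(1-2p)((2p)^{k-1} - 2p^{k-1})/2$ and then analyze the iterated scalar map $g$ on $[0, 1/2]$.

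For the lower bound I would isolate two disjoint contributions to $\Pr(\text{winner at } x)$. First, the event that all $k$ candidates lie in $\{x, 1-x\}$ has probability $(2p_t)^k$; by the $x \leftrightarrow 1-x$ symmetry of that configuration together with symmetric (left--right) tie-breaking inherited from the surrounding theorems (e.g., \Cref{thm:two-spike-smsne}), the conditional winning probability at $x$ is exactly $1/2$, giving $(2p_t)^k/2$. Second, the event that exactly one candidate is placed at some $y \notin \{x, 1-x\}$ while the other $k-1$ are split between $x$ and $1-x$ (at least one on each side) has probability $k(1-2p_t)((2p_t)^{k-1} - 2p_t^{k-1})$. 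The main technical claim, and the key obstacle, is that in every such split configuration the $y$-candidate strictly loses; once that is shown, the same left--right symmetry yields conditional probability $1/2$ of a winner at $x$.

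To prove $y$ strictly loses, I would case-split on where $y$ falls relative to $\{x, 1-x\}$ and on the cluster sizes $(a, b)$ with $a + b = k - 1 \ge 4$ and $a, b \ge 1$. Because $k \ge 5$, at least one cluster has size $\ge 2$, and under left--right tie-breaking its outer representative captures the full extreme-side half-interval, yielding vote share at least $x$ (and equal to $1/2$ if the opposite cluster is a singleton). If $y$ is flanked, $y \in (x, 1-x)$, its vote share equals $(1-2x)/2$, which is strictly less than $x$ because $x > 1/4$. If $y$ lies outside the clusters, say $y < x$ by symmetry, its vote share $(y+x)/2$ is also strictly less than $x$. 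In every sub-case, $y$ sits strictly below the maximum vote share achieved inside $\{x, 1-x\}$, so $y$ cannot be a plurality winner.

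Finally, I would analyze the scalar map $g$ on $[0, 1/2]$. Direct computation gives $g(0) = 0$, $g(1/2) = 1/2$, and $g'(1/2) = k/2^{k-2}$, which is strictly less than $1$ precisely when $k \ge 5$; so $1/2$ is an attracting fixed point and $g(p) > p$ for $p$ slightly below $1/2$. Since $g'(0) = 0$ gives $g(p) < p$ for small $p > 0$, the intermediate value theorem produces a largest fixed point $p^*_k \in (0, 1/2)$ with $g(p) > p$ strictly on $(p^*_k, 1/2)$. Starting from $p_0 > p^*_k$, the lower bound gives $p_{t+1} \ge g(p_t) > p_t$ inductively, so $(p_t)$ is monotonically non-decreasing and bounded above by $1/2$; its limit $p^\infty$ satisfies $p^\infty \ge g(p^\infty)$ by continuity of $g$, which together with the strict inequality on $(p^*_k, 1/2)$ forces $p^\infty = 1/2$.
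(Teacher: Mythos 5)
Your proposal is correct and follows essentially the same route as the paper's proof: the same two-event decomposition (all candidates in $\{x,1-x\}$, and exactly one wildcard with both clusters nonempty), the same lower-bounding map $g$, the same computation $g'(1/2)=k\,2^{2-k}<1$ for $k\ge 5$, and the same monotone-convergence conclusion. You simply spell out in more detail the case analysis showing the wildcard loses and the intermediate-value argument for the existence of $p^*_k$, both of which the paper treats more briefly.
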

These results demonstrate the existence of many SMSNEs that alone do not tell us how we should expect the replicator dynamics to behave.

\subsection{Positional tie-breaking and pure-strategy Nash equilibria}\label{sec:positional-ties}
We now demonstrate how ignoring dynamics and focusing on static equilibria can yield results very sensitive to tie-breaking rules. \citet{cox1987electoral} extends the Hotelling--Downs analysis to more than two candidates, characterizing PSNEs of a one-shot candidate positioning game---crucially, with equal split tie-breaking.  With uniform voters and $k \ge 3$ candidates, \citeauthor{cox1987electoral} proves that there is no PSNE for odd $k$ and that the only PSNE for even $k$ has evenly-spaced pairs of candidates at $1/k, 3/k, \dots, (k-1)/k$.
 Clearly, this analysis makes very different predictions than our replicator dynamics. However, we show that \citeauthor{cox1987electoral}'s results depend strongly on equal split tie-breaking. 
 
 To state \citeauthor{cox1987electoral}'s result formally, we need to fully specify the candidate objective. We focus on the objective \citeauthor{cox1987electoral} calls \emph{complete plurality maximization}, where candidates seek first to maximize their vote margin against their strongest competitor, then second-strongest, etc. We extend this objective to allow stochastic positional tie-breaking, assuming candidates first maximize their win probability, then each of their expected vote margins. We can then state \citeauthor{cox1987electoral}'s result.

\begin{theorem}[Special case of Theorem 2 from \citet{cox1987electoral}]
With uniform voters, $k \ge 3$ complete plurality maximizing candidates, and equal split tie-breaking,
\begin{enumerate}
  \item if $k$ is odd, there is no PSNE,
  \item if $k$ is even, then the unique PSNE has two candidates at each of the points $1/k, 3/k, \dots, (k-1)/k$.
\end{enumerate}
\end{theorem}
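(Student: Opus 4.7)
The plan is to characterize all pure-strategy Nash equilibria (PSNEs) by deriving a sequence of necessary structural conditions, showing these pin down the claimed configuration uniquely when $k$ is even and are mutually inconsistent when $k$ is odd. Step 1 (equal vote shares): I first show that in any PSNE every candidate receives vote share exactly $1/k$. If some candidate $i$ had vote share $v_i$ strictly less than the maximum share $V$ held by some position $y$, then $i$ could deviate by moving to $y + \epsilon$, capturing vote share approximately $(z - y)/2$ where $z$ is $y$'s right neighbor; this deviation can be shown to strictly improve $i$'s win probability (or, in tied cases, the lexicographic vote margin under complete plurality maximization). Since vote shares sum to $1$, the common equilibrium value is $1/k$.

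Step 2 (rule out non-pair multiplicities): Writing the distinct positions as $y_1 < \cdots < y_m$ with multiplicities $n_1, \ldots, n_m$ summing to $k$, the equal-share condition reduces to
\begin{align*}
(y_1 + y_2)/2 &= n_1/k, \\
(y_{i+1} - y_{i-1})/2 &= n_i/k \quad (1 < i < m), \\
1 - (y_{m-1} + y_m)/2 &= n_m/k.
\end{align*}
I then show every $n_i$ must equal $2$. A boundary singleton ($n_1 = 1$) admits the flanking deviation to $y_2 - \epsilon$, yielding vote share $y_2 - \epsilon/2 > 1/k$ since $y_2 = 2/k - y_1 > 1/k$. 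A position with $n_i \ge 3$ admits a step-aside deviation to $y_i + \epsilon$; the deviator captures vote share $(y_{i+1} - y_i)/2$ (or the symmetric $(y_i - y_{i-1})/2$ on the other side), and since these sum to $2 n_i/k \ge 6/k$, at least one exceeds $2/k$, giving the deviator strictly more than $1/k$ and making them the unique plurality winner. Interior singletons are eliminated by a parallel step-aside performed by an adjacent pair, whose profitability follows from the spatial imbalance that the linear system together with the distinctness conditions $y_i < y_{i+1}$ forces on the neighbor spacings.

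Step 3 (pin down positions and verify): With every $n_i = 2$, the total $k = 2m$ is even, giving the contradiction for odd $k$ and establishing Part 1. For even $k$, the equal-share system admits a one-parameter family of solutions parameterized by $y_1 \in (0, 2/k)$; the boundary no-deviation constraints $y_1 \le 1/k$ (the leftmost pair cannot profitably move to $y_1 - \epsilon$) and its mirror $y_m \ge 1 - 1/k$ together force $y_1 = 1/k$, whence $y_i = (2i-1)/k$ for all $i$, giving Part 2. Finally, a routine verification confirms that no deviation from this configuration strictly improves the deviator's win probability or, under the lexicographic complete plurality maximization objective, their vote margin against any competitor.

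The main obstacle is the interior-singleton case in Step 2. Step-aside deviations by the singleton itself leave its own vote share unchanged at exactly $1/k$ (since its territory $(y_{i+1} - y_{i-1})/2$ is preserved), so the argument must instead identify the profitable move performed by a neighboring pair, using the equal-share arithmetic to show that this pair sits at uneven distances from its neighbors whenever an interior singleton is present. The bookkeeping across multiplicity patterns is delicate, and the lexicographic nature of complete plurality maximization adds a further subtlety: vote-share-neutral deviations still matter for uniqueness, so margin-based tiebreaks must be tracked carefully in the final verification step.
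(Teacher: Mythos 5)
First, a contextual note: the paper does not prove this statement---it is imported verbatim as a special case of Theorem~2 of \citet{cox1987electoral}, so there is no in-paper proof to compare against. Your necessary-conditions strategy (equal shares, then multiplicity constraints, then pinning positions) is the classical Eaton--Lipsey/Cox route and is the right general shape, but as written it has two genuine gaps. The first is Step~1. Your argument for the equal-share property considers only deviations by the low-share candidate, and asserts that moving to $y+\epsilon$ beside a maximal-share position improves that candidate's win probability. This is false in general: a surely-losing candidate may be unable to improve \emph{anything}. Concretely, for $k=3$ with candidates at $1/4$, $1/2$, $3/4$ the middle candidate has share $1/4$ against $3/8$ for each flanker, and no relocation raises its share above $1/4$ or improves its margin against its strongest competitor (sliding toward one flanker strictly helps the other flanker, worsening the relevant margin; jumping outside or joining a flanker is worse still). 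That configuration is nevertheless not a PSNE---but only because a \emph{winning} peripheral candidate can deviate to win outright. So the equal-share lemma, if you want it as your first step, must be proved by exhibiting profitable deviations for winners (or for candidates other than the aggrieved one), not just for the candidate whose share is deficient. As stated, Step~1 does not go through, and everything downstream conditions on it.

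The second gap is the one you flag yourself: interior singletons, which are exactly what drives the odd-$k$ non-existence. Note that the singleton's own deviations cannot be used (sliding within its cell preserves its share of $1/k$ but promotes the neighbor it moves \emph{away} from above $1/k$, destroying its own tie for the win), and the ``step-aside by an adjacent pair'' you propose is profitable only under spacing conditions you never establish. The argument that actually closes this case is an infeasibility one: once you have $y_1=1/k$, $y_m=1-1/k$, $(y_1+y_2)/2=2/k$, $1-(y_{m-1}+y_m)/2=2/k$, and the interior recurrence $y_{i+1}=y_{i-1}+2n_i/k$, the two parity chains of positions are fully determined from the left, and the presence of any interior $n_j=1$ shifts one chain so that the right-boundary pins (and/or the ordering $y_i<y_{i+1}$) cannot be met---e.g.\ for $k=5$ the only admissible pattern $(2,1,2)$ forces $y_3=3/5$ from the recurrence but $y_3=4/5$ from the boundary. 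This bookkeeping needs to be carried out for general multiplicity patterns, and your proposal leaves it as an acknowledged obstacle rather than resolving it. Two smaller issues: in the $n_i\ge 3$ case the two half-territories sum to $n_i/k$, not $2n_i/k$ (the conclusion that one exceeds $1/k$ survives, but your stated bound of $2/k$ does not follow for $n_i=3,4$); and the final ``routine verification'' should explicitly check share-preserving deviations under the lexicographic objective, since those are precisely the moves that the win-probability level of the objective does not rule out.
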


If we instead use left--right tie-breaking, the picture is dramatically different. In particular, all candidates at $1/2$ is then a PSNE \emph{for all $k$}: any deviant who moves from $1/2$ loses with certainty to the center candidate who captures the opposite side of the vote. Left--right tie-breaking also introduces many additional PSNEs; we list some of them in the following theorem.

\begin{restatable}{theorem}{psnes}
\label{thm:psnes}
The following are (some\footnote{In \Cref{app:nash-proofs}, we show that for $k \le 5$, this list of PSNEs is exhaustive  (\Cref{thm:small-k-psnes}); for $k > 6$, there may be others.} of the) PSNEs with  uniform voters, complete plurality maximizing candidates, and left--right tie-breaking:
  \begin{enumerate}
    \item Any $k \ge 2$: all $k$ candidates at $1/2$.
    \item Any $k \ge 4$: for any $x \in (1/4, 1/2)$, $\lfloor k / 2 \rfloor$ candidates at $x$, $\lfloor k / 2 \rfloor$ candidates at $1-x$, and the last candidate (if $k$ is odd) at either $x$ or $1-x$. 
    \item Any $k \ge 5$: $\lfloor (k-1) / 2 \rfloor$ candidates at $1/4$, $\lfloor (k-1) / 2 \rfloor$ candidates at $3/4$, one candidate at $1/2$, and the last candidate (if $k$ is even) at either $1/4$ or $3/4$. 
    \item Even $k$: \citeauthor{cox1987electoral}'s equilibrium; two candidates at each of the points $1/k, 3/k, \dots, (k - 1)/k$.
  \end{enumerate}
\end{restatable}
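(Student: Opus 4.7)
The plan is to verify each of the four families is a PSNE by (i) computing the vote-share profile at the proposed configuration under left--right tie-breaking, (ii) determining each candidate's win probability, and (iii) showing no unilateral deviation strictly improves the deviant's outcome. Complete plurality maximization first compares win probability, then lexicographically compares the sequence of (expected) vote margins against competitors ordered from strongest to weakest; in most cases a deviation strictly reduces win probability and the margin check is vacuous, but one subcase in Family~3 genuinely requires the margin check.

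The baseline vote-share computations are direct applications of left--right tie-breaking. In Family~1, the rule at $1/2$ picks one candidate to absorb $[0,1/2]$ (vote share $1/2$) and another to absorb $[1/2,1]$, giving every candidate win probability $1/k$. In Family~2, the outer candidates at $x$ and $1-x$ each get vote share $x$ and the inner ones get $1/2-x<x$, so only outer candidates can win. In Family~3, the outer candidates at $1/4$ and $3/4$ each get vote share $1/4$, the inner ones get $1/8$, and the lone candidate at $1/2$ gets $1/4$ (its boundaries being $3/8$ and $5/8$), producing a three-way tie among the winners. In Family~4 (Cox's equilibrium), each pair at $(2i-1)/k$ captures a slot of width $2/k$ that splits $1/k$--$1/k$ under left--right tie-breaking, leaving all $k$ candidates tied at $1/k$.

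The core argument against deviations is a single \emph{absorption principle}: when a candidate leaves a shared cluster, the remaining occupant(s) there absorb the abandoned vote mass, often becoming the unique winner. In Family~1, any $y\neq 1/2$ leaves a surviving candidate at $1/2$ with vote share $1/2$ on the side of $1/2$ opposite $y$, while the deviant gets at most $|y-1/2|/2+\min(y,1-y)<1/2$ and loses. In Family~2, deviations outside $[x,1-x]$ are flanked to vote share $<x$, deviations into $(x,1-x)$ give vote share $1/2-x<x$, and joining the opposite cluster leaves the lone survivor at $x$ with vote share $1/2$; in every case the deviant loses to a candidate with share $\geq x$. In Family~4, abandoning position $(2i-1)/k$ hands the partner the full slot width $2/k$ and thus vote share $2/k>1/k$, making the partner the unique winner.

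The main obstacle is Family~3, specifically the center candidate's deviation to some $y\in(1/4,1/2)$, which preserves win probability ($1/3$ is maintained by a new three-way tie among the outer candidates at $1/4$, $3/4$ and the deviant, each at share $1/4$) and therefore demands the margin check. Recomputing, the inner candidates at the two clusters now get shares $y/2-1/8$ and $3/8-y/2$ instead of $1/8$ and $1/8$, so the deviant's margin sequence sorted by competitor strength becomes $(0,0,y/2-1/8,3/8-y/2)$ versus the original $(0,0,1/8,1/8)$; since $y/2-1/8<1/8$ for $y<1/2$, the deviation is lex-strictly worse. Every other deviation in Family~3 is dispatched by the absorption principle: moving outward from $1/4$ flanks the deviant to share $<1/4$, moving inward into $(1/4,1/2)$ hands the partner at $1/4$ a monopoly share $1/8+y/2>1/4$ and the unique win, and joining another cluster decreases the deviant's chance of ending up in a winning role. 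Beyond Family~3 the main bookkeeping burden is tracking how left--right tie-breaking reassigns outer/inner roles when a deviation joins an existing cluster, which is routine but tedious.
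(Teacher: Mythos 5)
Your proposal follows essentially the same route as the paper's proof: verify each family by computing the vote-share profile under left--right tie-breaking and rule out every unilateral deviation, invoking the lexicographic margin comparison only where the win probability is preserved. Your treatment of the hardest point---the center candidate in Family~3 deviating to $y\in(1/4,1/2)$---is correct and in fact more explicit than the paper, which merely asserts that such a move ``worsens their margin against a competitor''; your computation of the new inner shares $y/2-1/8$ and $3/8-y/2$ and the resulting lex-comparison is exactly the right justification.

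There is, however, one subcase where your argument as written is wrong: the cluster-switching deviations. In Family~2 you claim that ``joining the opposite cluster leaves the lone survivor at $x$ with vote share $1/2$'' and that ``in every case the deviant loses to a candidate with share $\ge x$.'' This only holds when the abandoned cluster is left with a single member (i.e., $k=4$, or the smaller cluster for $k=5$). For $k\ge 6$ the abandoned cluster retains at least two candidates, its left and right shares still go to different people, the two-way tie at vote share $x$ persists, and the deviant does \emph{not} lose for certain---they merely enter a larger lottery to be the outermost at their new point. The correct argument, which the paper gives, is a counting one: the deviant's win probability is $\tfrac{1}{2}\cdot\tfrac{1}{|\text{cluster}|}$, and after switching they sit in a cluster at least as large as their original one, so the probability cannot increase (and for odd $k$ moving from the larger to the smaller cluster leaves it exactly unchanged, so the conclusion must be ``no strict improvement,'' not ``loses''). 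The same imprecision appears in your Family~3 ``monopoly share'' claim (only literally true when one partner remains at $1/4$; for larger clusters you should instead note the deviant's own share drops to $1/8$ while the three-way tie at $1/4$ persists among others) and in your Family~4 claim that the partner inherits the ``full slot width $2/k$'' (if the deviant stays inside the slot the partner gets strictly more than $1/k$ but less than $2/k$, which still suffices). These are all patchable, but the Family~2 fix genuinely requires the lottery-size argument rather than your absorption principle.
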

 Thus, the qualitative conclusions we arrive by examining Nash equilibria are very different from \citeauthor{cox1987electoral}'s if we make another similarly reasonable assumption. \citeauthor{cox1987electoral}'s analysis tells us we should not expect candidates converging to the center for any $k > 2$, but if we use left--right tie-breaking, we find that central configurations are equilibria for all $k$. The replicator dynamics reveal when these configurations are stable: only for small $k$. These results highlight how analyzing Nash equilibria provides a brittle picture of candidate positioning, yielding results that are sensitive to tie-breaking and do not capture iterated play. Even SMSNEs, which are closely related to replicator dynamics fixed points, fail to reveal the typical behavior of the dynamics.

\section{Discussion}\label{sec:discussion}
We introduced a replicator dynamics model of one-dimensional candidate positioning in plurality elections based on simple heuristic inspired by bounded rationality. Our theoretical results show that the candidates converge to the center when there are at most four candidates per election, but diverge when there are five or more candidates per election. Simulations confirm that this pattern is robust to a large range of model variations. We contrast our results to prior work that focuses on static equilibria or lacks theoretical results for more than two candidates.

Many open questions remain in the analysis of our model. The foremost
is a theoretical characterization of the asymptotic candidate
distribution for $k \ge 5$, although this may be challenging given the
complex high-$k$ behavior we observe in simulation. An even larger
challenge is posed by expanding beyond symmetric and atomless initial
candidate distributions to  distributions which have points masses or
are asymmetric. As we saw in \Cref{thm:two-spike-convergence},
allowing atomless distributions means there are infinitely attracting
distributions for $k \ge 5$, so the task becomes one of cataloguing
all of the possible long-run candidate distributions. Theoretical
results for our model variants would be interesting, such as
characterizing which mixtures of candidate counts $k$ lead to
convergence to the center, or conditions on voter
distributions that result in central convergence for $k \le 4$.

While we explored several model variations in simulation, there are many more than can possibly be covered in a single paper. Additional variations of particular interest include policy-motivated candidates, strategic voters, probabilistic voters, and higher-dimensional preferences. Another natural direction would be to explore voting systems other than plurality, like two-round runoff, instant runoff, or Borda count; Condorcet methods are considerably less interesting under our one-dimensional replicator dynamics, since the candidate closest to the median voter always wins, but might exhibit more complex behavior in higher dimensions.

\section* {Acknowledgments}
This work was supported in part by ARO MURI, a Simons Collaboration grant, a grant from the MacArthur Foundation, a Vannevar Bush Faculty Fellowship, AFOSR grant FA9550-19-1-0183, and NSF CAREER Award \#2143176.

\bibliographystyle{plainnat}
\bibliography{references}

\clearpage

\appendix

\section*{Appendix for Replicating Electoral Success}
\vspace*{1em}

\tableofcontents
\clearpage
\section{Additional Plots}\label{app:plots}

\begin{figure}[h]
  \centering
  \includegraphics[width=\textwidth]{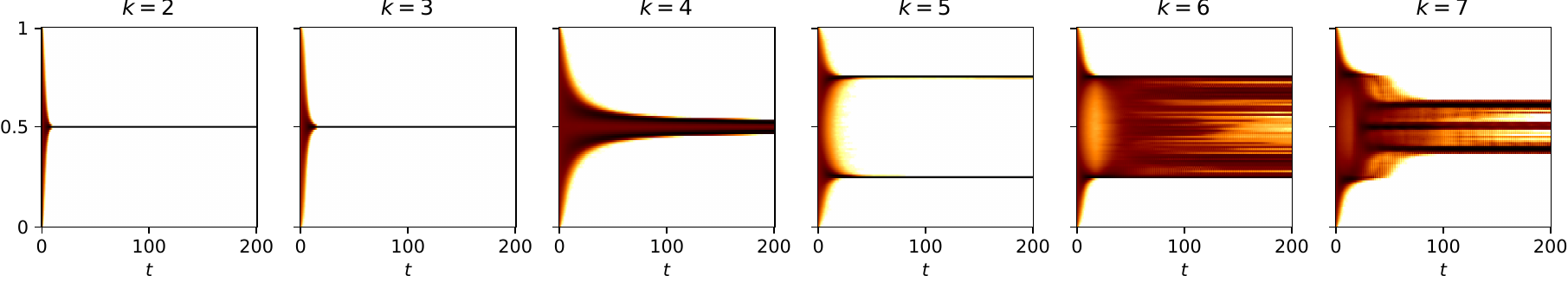}\\
  \includegraphics[width=\textwidth]{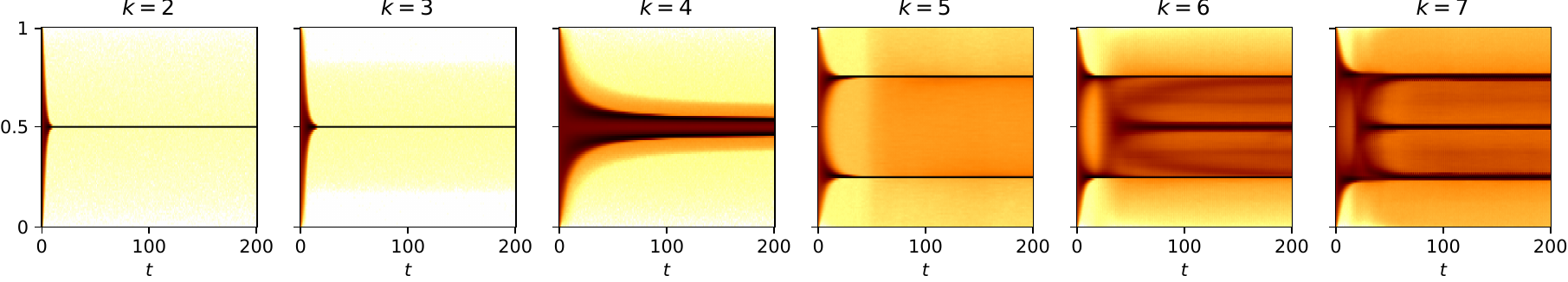}
  \caption{Replicator dynamics runs just as in \Cref{fig:small-k-symmetry}, but without enhanced symmetry. For $k > 6$, the behavior of the Monte Carlo trials becomes inconsistent without enhanced symmetry, particularly without $\epsilon$-uniform noise. See \Cref{fig:small-k-symmetry} for more details.}
  \label{fig:small-k}
\end{figure}

\begin{figure}[h]
  \centering
  \includegraphics[width=\textwidth]{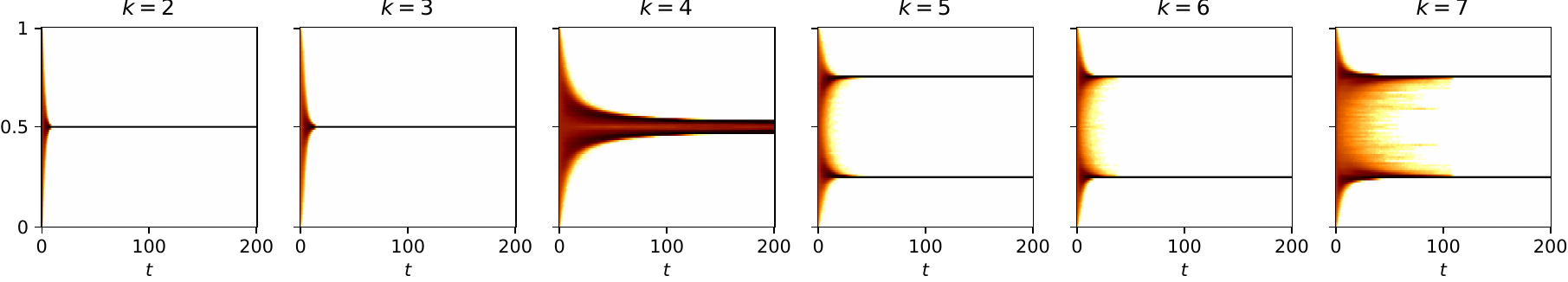}\\
  \includegraphics[width=\textwidth]{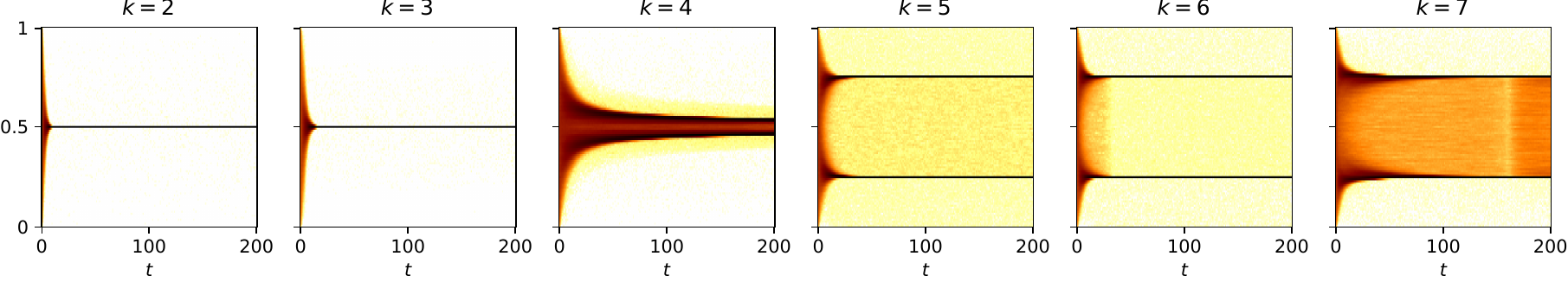}
  \caption{Replicator dynamics runs with enhanced symmetry just as in \Cref{fig:small-k-symmetry}, but showing only a single trial instead of aggregating 50 runs. With enhanced symmetry, the behavior is very consistent across runs.}
  \label{fig:small-k-symmetry-1-trial}
\end{figure}

\begin{figure}[h]
  \centering
  \includegraphics[width=\textwidth]{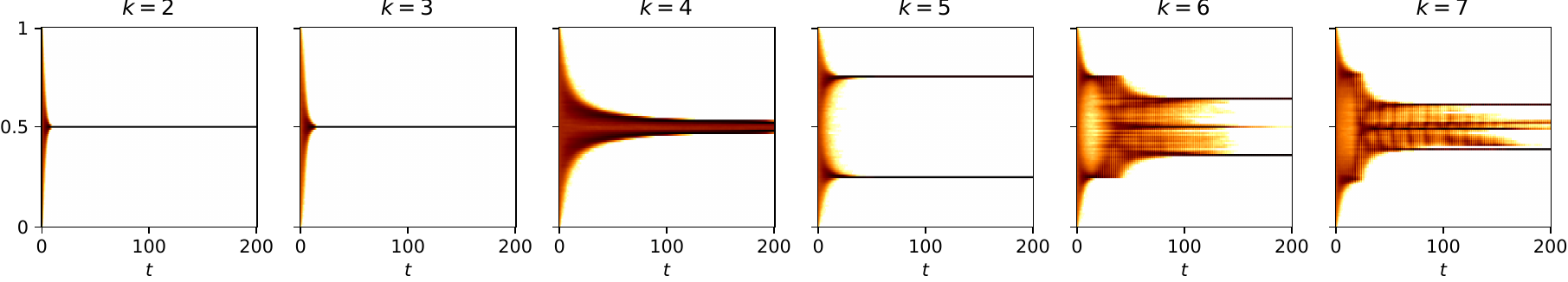}\\
  \includegraphics[width=\textwidth]{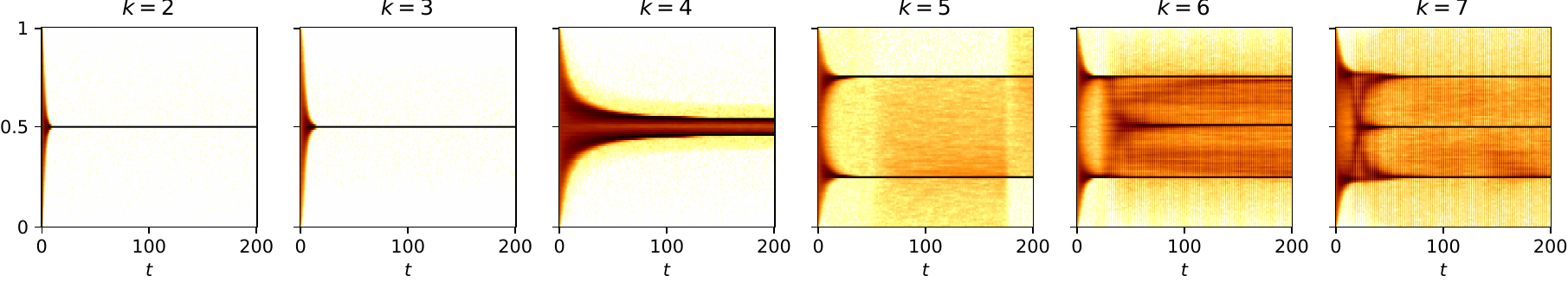}
  \caption{Replicator dynamics runs just as in \Cref{fig:small-k} (no enhanced symmetry), but showing only a single trial instead of aggregating 50 runs to highlight the inconsistent behavior for $k = 6$ and $7$ without enhanced symmetry.}
  \label{fig:small-k-1-trial}
\end{figure}

\begin{figure}[h]
  \centering
  \includegraphics[width=\textwidth]{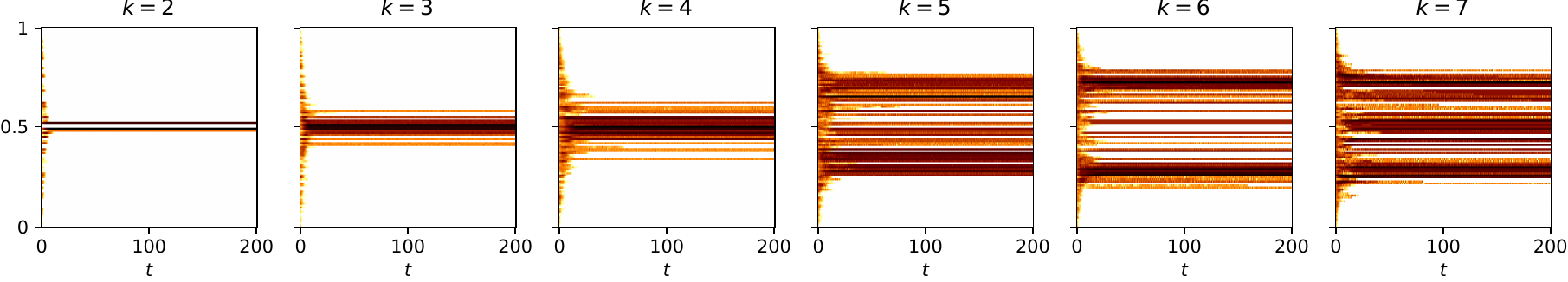}\\
  \includegraphics[width=\textwidth]{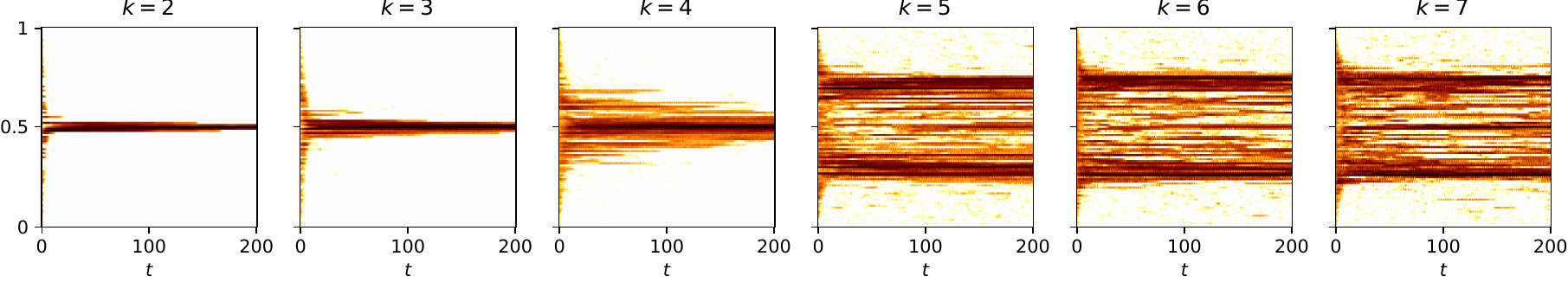}
  \caption{Replicator dynamics runs with only 50 elections per generation, without enhanced symmetry. Each plot shows 50 trials. The top row has no noise, while the bottom row uses $0.01$-uniform noise. Even with a small sample size, our main finding holds. }
  \label{fig:n-50}
\end{figure}

\begin{figure}[h]
  \centering
  \includegraphics[width=\textwidth]{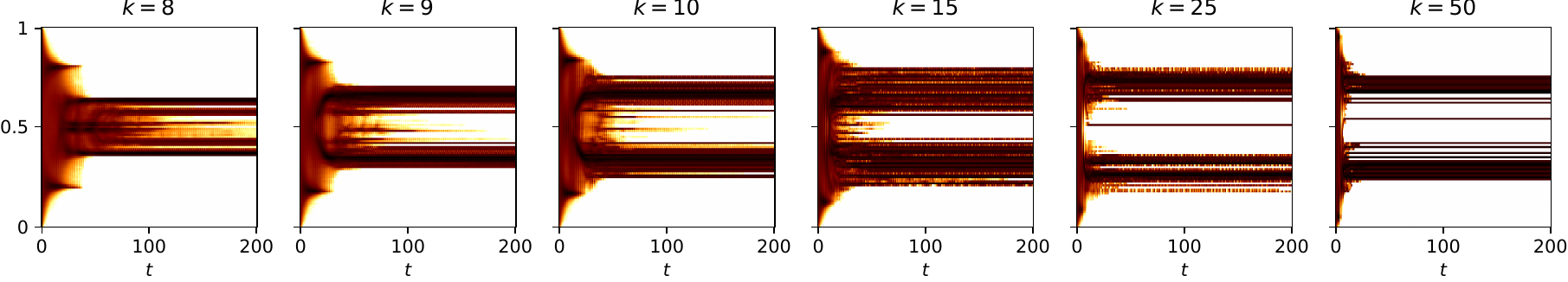}\\
  \includegraphics[width=\textwidth]{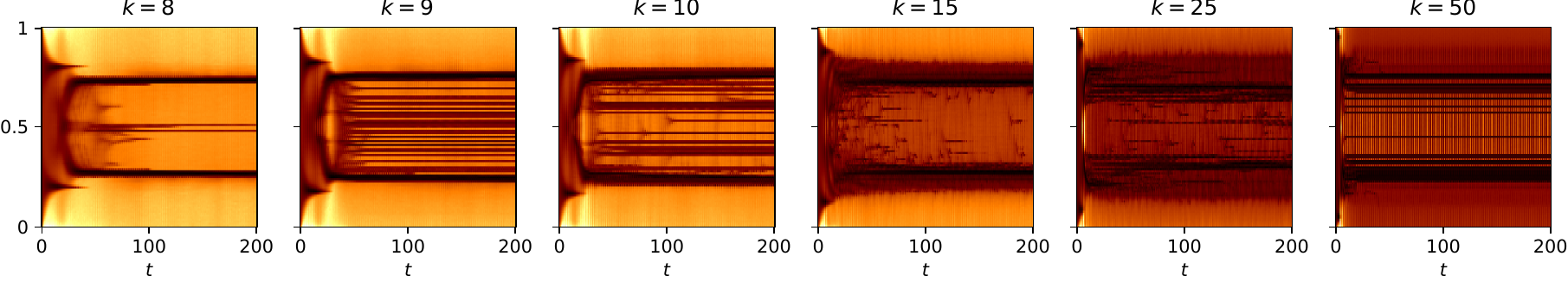}
  \caption{Replicator dynamics runs just as in \Cref{fig:large-k-symmetry}, but without enhanced symmetry. As with smaller values of $k$, the behavior becomes more chaotic without enhanced symmetry. }
  \label{fig:large-k}
\end{figure}

\begin{figure}[t]
\centering
\includegraphics[width=\textwidth]{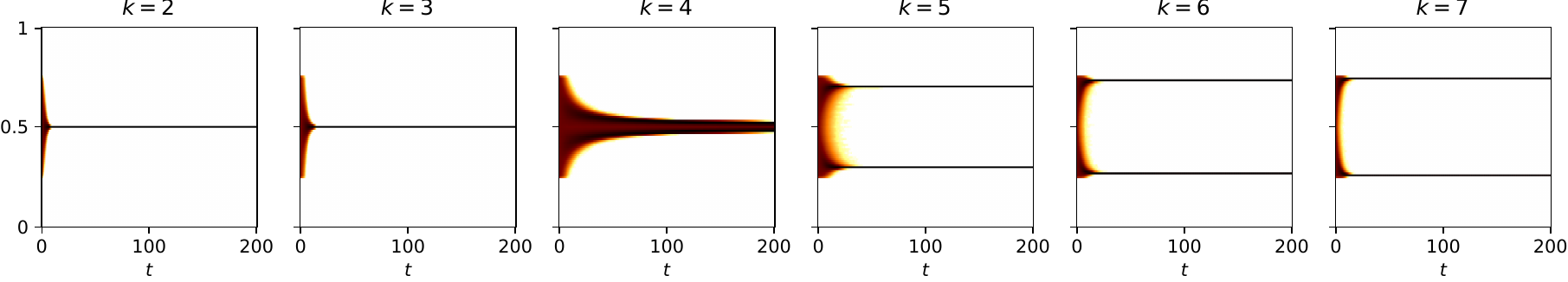}
  \caption{Replicator dynamics with initial candidate distribution Uniform($1/4, 3/4$). These plots show 50 trials with 100,000 elections per generation, no noise, and without enhanced symmetry. The dynamics are very well-behaved with $(1/4, 3/4)$ support, removing the need for enhanced symmetry; compare to \Cref{fig:small-k}.}
  \label{fig:1/4-3/4}
\end{figure}
\clearpage

\subsection{Additional variant plots}\label{app:variant-plots}

\begin{figure}[h]
\centering
\includegraphics[width=\textwidth]{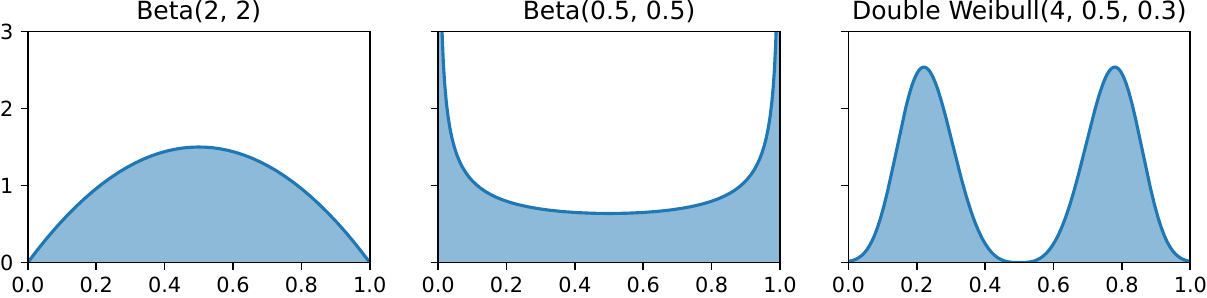}\\
\caption{PDFs of different voter distributions used in \Cref{fig:variants}.}
\label{fig:voter-pdfs}
\end{figure}

\begin{figure}[h]
\centering
\includegraphics[width=\textwidth]{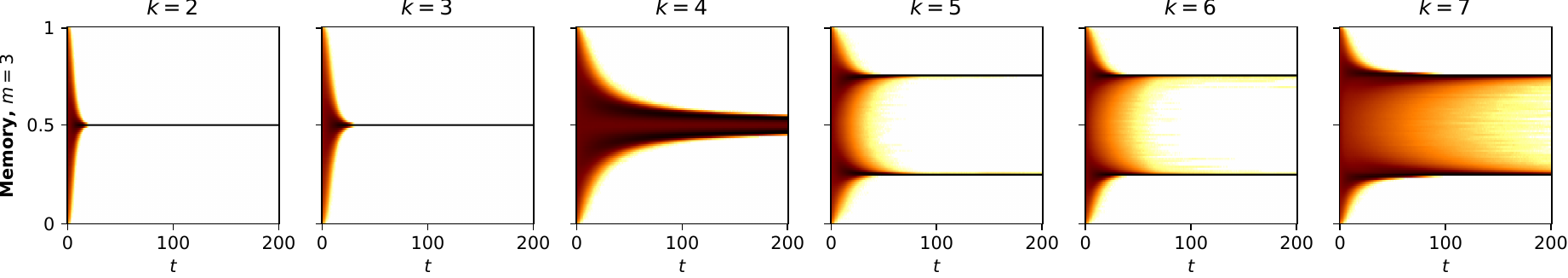}\\
\caption{Replicator dynamics with $m=3$ generations of memory, no enhanced symmetry, and $50$ trials per plot. There is no qualitative difference between $m=3$ and $m=2$ (compare to \Cref{fig:variants}).}
\label{fig:3-gen-mem}
\end{figure}

\begin{figure}[h]
  \centering
\includegraphics[width=\textwidth]{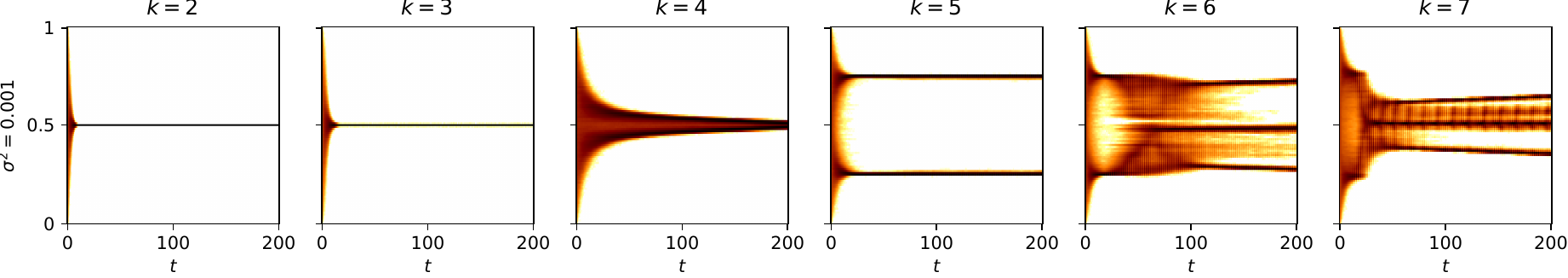}\\[.5em]
\includegraphics[width=\textwidth]{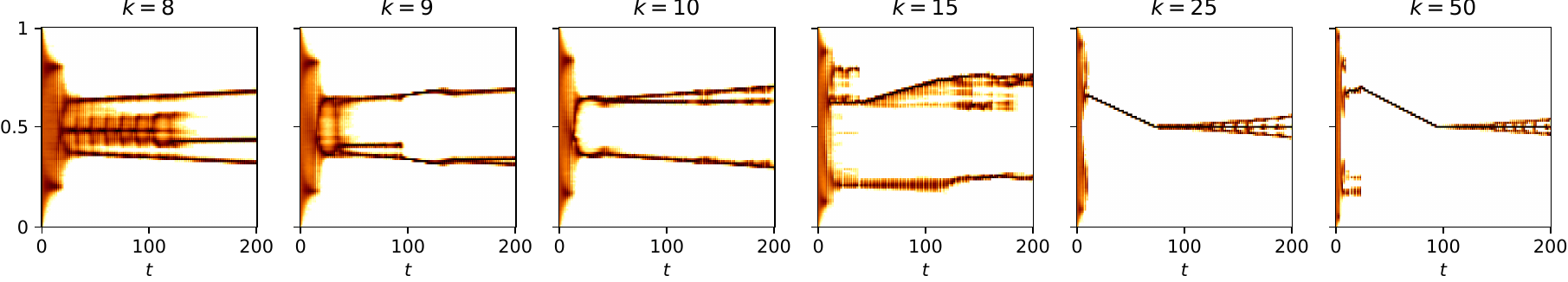}\\[.5em]
\includegraphics[width=\textwidth]{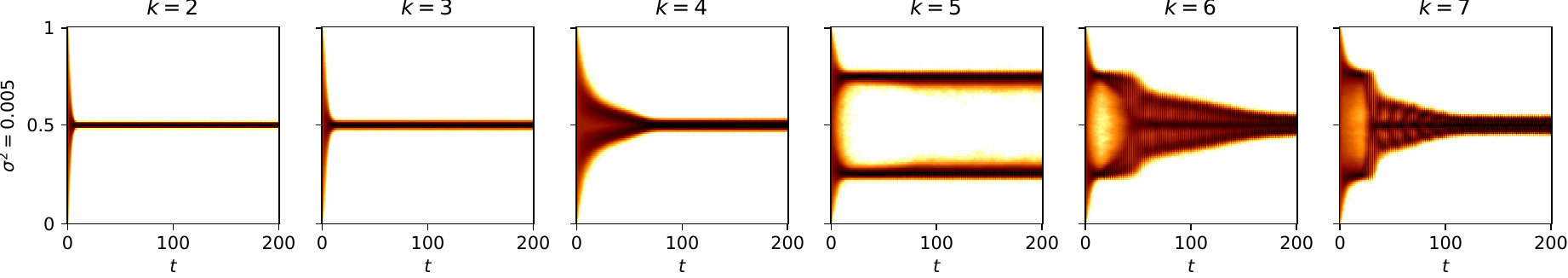}\\[.5em]
\includegraphics[width=\textwidth]{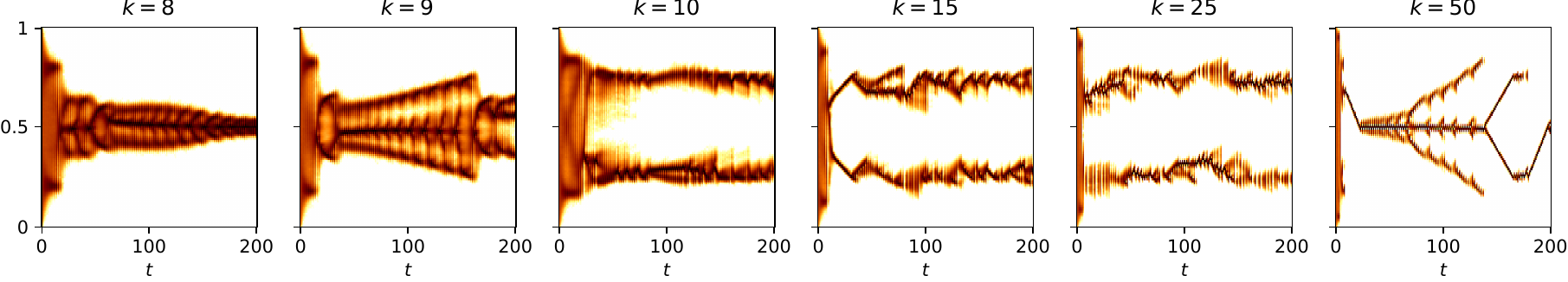}\\[.5em]
\includegraphics[width=\textwidth]{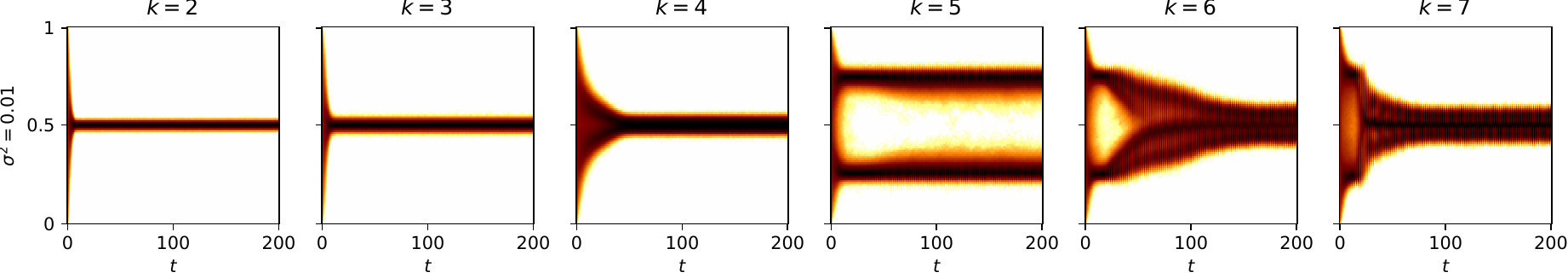}\\[.5em]
\includegraphics[width=\textwidth]{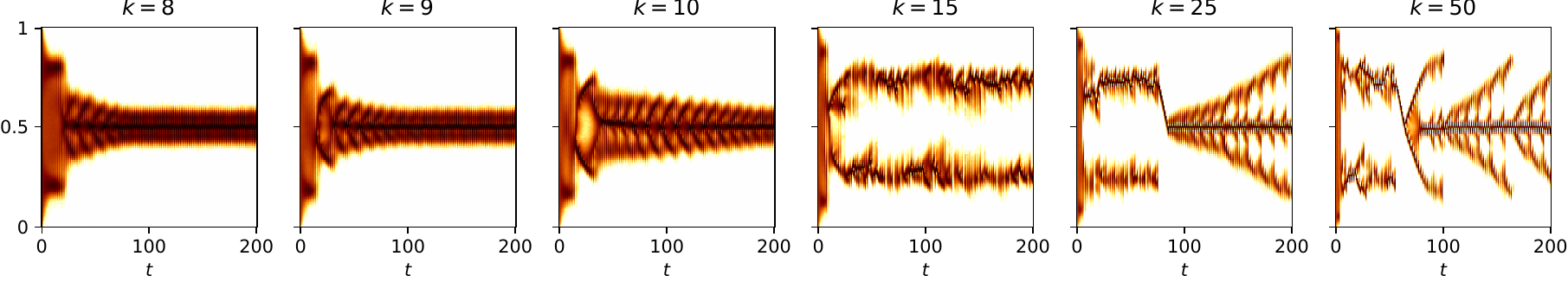}
\caption{Single trials of the replicator dynamics with perturbation noise and 100,000 elections per generation. The first two rows use $\sigma^2 = 0.001$, the middle two use $\sigma^2 = 0.005$, and the bottom two use $\sigma^2 = 0.01$. Perturbation noise combined with Monte-Carlo asymmetries can result in complex and unpredictable branching with higher $k$.}
\label{fig:perturb-1-trial}
\end{figure}

\begin{figure}[h]
  \centering
  \includegraphics[width=0.5\textwidth]{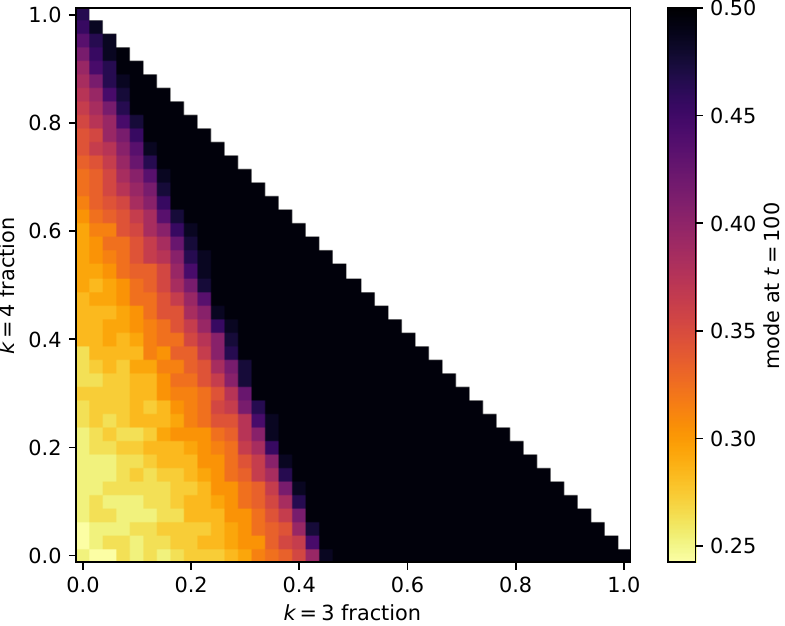}
  \caption{Heatmap showing the position of the candidate distribution mode at $t=100$  when elections have a mixture of $k =3, 4, $ and $5$ candidates each (only modes $\le 1/2$ are shown). These simulations use 100,000 elections per generation, with $k$ split between $3, 4, $ and $5$ in different proportions at each point. The fraction of elections with $3$ candidates varies along the $x$ axis, while the fraction  with $4$ candidates varies along the $y$ axis. Any remaining elections have $k=5$. For instance, the lower left corner has all 100,000 elections use $k = 5$, while the point $(1/3, 1/3)$ has an even mix of candidate counts. When either the $k = 3$ or $k = 4$ fraction is high enough (but especially $k = 3$), the distribution converges to the center, with the mode at $1/2$. However, with enough $k = 5$ elections, two clusters emerge, and more $k = 5$ elections pushes them farther apart. }
  \label{fig:k-mixture-heatmap}
\end{figure}

\begin{figure}[h]
\centering
\includegraphics[width=\textwidth]{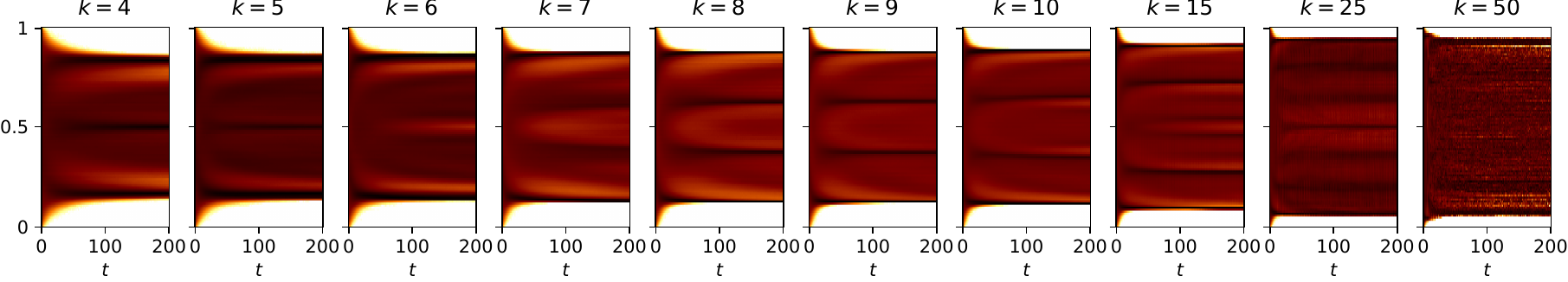}\\
\caption{Replicator dynamics with top-$h$ copying where $h=3$, no enhanced symmetry, $50$ trials per plot, and 100,000 elections per generation.}
\label{fig:top-3}
\end{figure}

\clearpage

\section{Additional Proofs}\label{app:proofs}

\subsection{Proofs from \Cref{sec:main-theory}}\label{app:main-proofs}

\kthree*

\begin{proof}
Let $x < 1/2$ and define $p = F_{3, t-1}(x)$. Consider the following cases for the positions of the three candidates $X_{1,t}, X_{2,t},$ and $X_{3,t}$. Call  candidates in $(x, 1-x)$ \emph{inner}. 
\begin{enumerate}
  \item All three candidates in $[0, 1/2)$ (and the symmetric case). First suppose all three are in $[0, 1/2)$ (the other side is symmetric).  If there is at least one inner candidate (w.p.\ $1/2 ^3 - p^3$), then the winner is inner. Accounting for symmetry, an inner candidate wins in this case w.p.\ $2  (1/2 ^3 - p^3) = 1/4 - 2p^3$.
  \item Two candidates in $(x, 1/2)$ and one in $(1/2, 1-x)$ (and the symmetric case). Since all candidates are inner, an inner candidate wins. Accounting for symmetry, an inner candidate wins in this case w.p.\ $2 \left[3  (1/2 - p)^3\right] = 6 (1/2-p)^3$.
  \item Two candidates in $[0, x)$ and one in $(1/2, 1-x)$ (and the symmetric case). The candidate in $(1/2, 1-x)$ wins with vote share at least $1/2$. Accounting for symmetry, an inner candidate wins in this case w.p.\ $2  \left[3 p^2 (1/2 - p)\right] = 6p^2(1/2 -p)$.
  \item One candidate in $[0, x)$, one in $(x, 1/2)$, and one  in $(1/2, 1-x)$. Label them 1, 2, and 3, respectively. Candidate 3 gets vote share $1- (X_3 +X_2) / 2 = \left[(1-X_3) + (1-X_2)\right]/2$, while candidate 1 gets vote share $(X_1 + X_2)/2$. Since $X_3 < 1-x$, $1-X_3 > X_1$; and since $X_2 < 1/2$, $1-X_2 > X_2$. Thus candidate 3 has higher vote share than candidate 1 and an inner candidate wins. Accounting for symmetry, an inner candidate wins in this case w.p.\ $2\left[3\cdot 2 p(1/2-p)^2\right] = 12p(1/2 - p)^2$. \label{case:k-3-1110}
\end{enumerate}
Adding up these cases yields a lower bound on the probability that an inner candidate wins:
\begin{align*}
  \Pr(x < \plurality(X_{1,t}, X_{2,t}, X_{3,t}) < 1-x) &\ge 1/4 - 2p^3 + 6(1/2 - p)^3 + 6p^2 (1/2-p) + 12p(1/2 - p)^2\\
  &= 1 - 3/2 \cdot p - 2p^3.
\end{align*}
By symmetry, this yields the claimed upper bound on the probability a candidate in $[0, x]$ wins:
\begin{align*}
  F_{3, t}(x) &=\Pr(\plurality(X_{1,t}, X_{2,t}, X_{3,t}) \le x)\\
  &=(1 - \Pr(x < \plurality(X_{1,t}, X_{2,t}, X_{3,t}) < 1-x))/2\\
  &\le \left[1-(1 - 3/2 \cdot p - 2p^3)\right]/2\\
  &= 3/4 \cdot p +p^3\\
  &= 3/4 \cdot F_{3, t-1}(x) +F_{3, t-1}(x)^3.
\end{align*}

We now show the closed form bound by induction on $t$.  We'll simultaneously show that $F_{3, t}(x) \le F_{ 0}(x)$. For the base case $t = 0$, we have $F_{3, t}(x) \le F_{0}(x)$. Now for $t > 0$, suppose the claims hold for $t-1$. Using the bound above, we know that 
\begin{align*}
  F_{3, t}(x) &\le 3/4 \cdot F_{3, t-1}(x) + F_{3, t-1}(x)^3\\
  &=  F_{3, t-1}(x) \cdot \left[3/4 + F_{3, t-1}(x)^2\right]\\
  &\le F_{3, t-1}(x) \cdot \left[3/4 + F_{0}(x)^2\right]\tag{by IH}\\
  &\le F_{0}(x) \cdot \left[3/4 + F_{0}(x)^2\right]^{t-1} \cdot \left[3/4 + F_{0}(x)^2\right]\tag{by IH}\\
  &= F_{0}(x) \cdot \left[3/4 + F_{0}(x)^2\right]^t
\end{align*}
This is the main claim we wanted to show. We can now also show the supporting fact that $F_{3, t}(x) \le F_{0}(x)$. For $x< 1/2$, $F_{0}(x) \le 1/2$ by symmetry. Thus $3/4 + F_{0}(x)^2 \le  3/4 + 1/2^2 = 1$, so by the inequality above,  $F_{3, t}(x) \le F_{0}(x) \cdot \left[3/4 + F_{0}(x)^2\right]^t \le F_{0}(x)\cdot 1^t$.
\end{proof}

\kfourlemma*
\begin{proof}

Let $x \in (1/3,  1/2)$ and  $p = F_{4, t-1}(x)$. We'll find a lower bound on the probability an inner candidate in $(x, 1-x)$ wins. Consider the following cases for candidate positions in a $k=4$ plurality election:
\begin{enumerate}
  \item All four candidates in $[0, 1/2)$ (and the symmetric case). An inner candidate wins if at least one candidate is inner. Accounting for symmetry, an inner candidate wins in this case w.p.\ $2(1/2^4 - p^4) = 1/8 - 2p^4 $.
    \item Three candidates in $[0, 1/2)$ and one in $(1/2, 1-x)$ (and the symmetric case). The candidate on the right has a higher vote share than any outer candidate on the left (as in  \Cref{thm:k-3-convergence} Case~\ref{case:k-3-1110}), so an inner candidate wins. Accounting for symmetry, an inner candidate wins in this case w.p.\ $2(4\cdot 1/2^3 \cdot (1/2-p)) = 1/2 - p$.
  \item Two candidates in $(x, 1/2)$ and two in $(1/2, 1-x)$. All candidates are inner, so an inner candidate wins. This occurs w.p.\ $\binom{4}{2}\cdot (1/2 - p)^4 = 6(1/2-p)^4$.
  \item Two candidates in $[0, x)$ and two in $(1/2, 1-x)$ (and the symmetric case). Since $x > 1/3$, the rightmost candidate gets vote share greater than $1/3$. Meanwhile, the leftmost candidate gets vote share less than $1/3$. The second-leftmost candidate gets vote share less than $(2/3)/2 = 1/3$ (the candidates flanking it are closer together than 0 and $1-x < 2/3$). Thus an inner candidate wins. Accounting for symmetry, an inner candidate wins in this case w.p.\ $2\binom{4}{2}p^2(1/2-p)^2 = 12p^2(1/2-p)^2$
  \item Two candidates in $(x, 1/2)$, one in $(1/2, 1-x)$, and one in $(1-x, 1]$ (and the symmetric case).  Label the candidates 1--4 in left--right order. By symmetry, candidate 3 is farther from 1/2 than candidate 2 with probability $2/3$: all $3!=6$ orderings of distance from $1/2$ between candidates 1--3 are equiprobable and only the 2 where candidate 3 is closest to $1/2$ fail this property. In this scenario, candidate 4 has vote share $1-(X_3 + X_4)/2 = ((1-X_3) + (1-X_4))/2$ and candidate 1 has vote share $(X_1 + X_2)/2$. Since  $1-X_3 < X_2$ (candidate 2 is closer to the center than 3) and $1-X_4 < X_2$ (since $X_2 > x$ and $X_4 > 1-x$), candidate 1 has a larger vote share than candidate 4, the only outer candidate. Thus an inner candidate wins. Accounting for symmetry, an inner candidate wins in this case w.p.\ $2\cdot 4\cdot 3 \cdot 2/3 \cdot p (1/2-p)^3   = 16p(1/2-p)^3$

  \end{enumerate}

Combining all five cases gives a lower bound on the probability that an inner candidate wins:
\begin{align*}
  &\Pr(x < \plurality(X_{1,t}, X_{2,t}, X_{3,t}, X_{4,t}) < 1-x)\\
  &\ge 1/8 - 2p^4 + 1/2 - p + 6(1/2-p)^4 + 12p^2(1/2-p)^2 + 16p(1/2-p)^3\\
  &= 1 - 2p\\
  &= 1 - 2\cdot F_{4, t-1}(x).
\end{align*}
By symmetry, this means
\begin{align*}
  F_{4, t}(x) &= \Pr(\plurality(X_{1,t}, X_{2,t}, X_{3,t}, X_{4,t}) \le x)\\
  &= \left[1-\Pr(x < \plurality(X_{1,t}, X_{2,t}, X_{3,t}, X_{4,t}) < 1-x)\right]/2\\
&\le \left[1-(1 - 2\cdot F_{4, t-1}(x))\right]/2\\
&= F_{4, t-1}(x).
\end{align*}
The claim then follows by induction on $t$.
\end{proof}

\kfour*

\begin{proof}
Let $x \in (1/3, 1/2)$ and  $p = F_{4, t-1}(x)$. By the argument in the proof of \Cref{lemma:k-4-1/3-bound}, an inner candidate wins with probability at least $1-2p$. We can strengthen this bound using \Cref{lemma:k-4-1/3-bound} and one more case omitted from that analysis (which can't easily be used there): three candidates in $(x/3 + 1/3, 1/2)$ and one in $(1-x, 1]$ (and the symmetric case). Note that $x / 3 + 1/3 = x + 2/3 \cdot (1/2 - x)$ is the point two-thirds of the way from $x$ to 1/2. The leftmost candidate gets vote share more than $x/3 + 1/3$. Meanwhile, the lone outer candidate gets vote share less than $x + (1-x - (x/3 + 1/3)) / 2 = x/3 + 1/3$, so an inner candidate wins. By \Cref{lemma:k-4-1/3-bound}, we know $F_{4, t-1}(x/3 + 1/3) \le F_{4, 0}(x/3 + 1/3)$. Thus, a candidate is in $(x/3 + 1/3, 1/2)$ with probability $1/2 -F_{4, t-1}(x/3 + 1/3) \ge 1/2 - F_{4, 0}(x/3 + 1/3)$. Therefore, accounting for symmetry, an inner candidate wins in this case w.p.\ at least $2 \cdot 4 \cdot (1/2 - F_{4, 0}(x/3 + 1/3))^3 \cdot p = 8 p(1/2 - F_{4, 0}(x/3 + 1/3))^3$.

Combining this new case with the cases from the proof of \Cref{lemma:k-4-1/3-bound}, an inner candidate wins w.p.\ at least $1-2p + 8  p(1/2 - F_{4, 0}(x/3 + 1/3))^3$. By symmetry, this means 
\begin{align*}
  F_{4, t}(x) &\le \left[1 - (1-2p +  8  p(1/2 - F_{4, 0}(x/3 + 1/3))^3 )\right] / 2\\
  &= \left[2p -  8  p(1/2 - F_{4, 0}(x/3 + 1/3))^3\right] / 2\\
    &= p\left[1 -  4  (1/2 - F_{4, 0}(x/3 + 1/3))^3 \right]\\
  &= F_{4, t-1}(x) \cdot \left[1 -  4  (1/2 - F_{4, 0}(x/3 + 1/3))^3 \right].
\end{align*}
The claim then follows by induction on $t$. 
\end{proof}

\largek*
\begin{proof}
  Suppose $F_{k, t-1}(1/4) \le \alpha$ for some small $\alpha$. Let $x \in (1/4, 1/2)$ and $F_{k, t-1}(x) = p$, so $F_{k, t-1}(x) - F_{k, t-1}(1/4) \ge p - \alpha$. We'll lower bound the probability that the winner is is an \emph{outer} candidate outside of $(x, 1-x)$, focusing mainly on cases where all candidates are in $(1/4, 3/4)$ so we can apply \Cref{lemma:1/4-3/4}.

  If all candidates are in $[0, 1/2)$, then an outer candidate only wins if all candidates are left of $x$, which occurs w.p.\ $p^k$. Accounting for the symmetric case gives an outer candidate win probability of $2p^k$ when all candidates are on the same side. Now suppose there is at least one candidate on each side. If the left- and rightmost candidates are in $(1/4, x)$ and $(1-x, 3/4)$, respectively, then an outer candidate wins by \Cref{lemma:1/4-3/4}. We can find the probability this occurs as the probability that all candidates are in $(1/4, 3/4)$ minus the probability that all candidates are in $(1/4, 1-x]$ or in $[x, 3/4)$---since this means there is at least one candidate each in $(1-x, 3/4)$ and $(1/4, x)$. Since $F_{k, t-1}(1/4) \le \alpha$, the following is a lower bound on the probability the leftmost candidate is at $X_{1,t} \in (1/4, x)$ and the rightmost is at $X_{k,t} \in (x-1, 3/4)$:
  \begin{align*}
    &\Pr(X_{1,t} \in (1/4, x), X_{k,t} \in (x-1, 3/4))\\
    &\ge\underbrace{(1-2\alpha)^k}_\text{all in $(1/4, 3/4)$} - \Big[\underbrace{(1-\alpha - p)^k}_\text{all in $(1/4, 1-x]$} + \underbrace{(1-\alpha - p)^k}_\text{all in $[x, 3/4)$} - \underbrace{(1-2p)^k}_\text{all in $[x, 1-x]$} \Big]\tag{by inclusion--exclusion}\\
    &= (1-2\alpha)^k -2(1-\alpha - p)^k + (1-2p)^k.
  \end{align*}
  Combining this with the case where all candidates are on the same side (and then dividing by 2 to account for symmetry) yields a lower bound on $F_{k, t}(x)$:
  \begin{align}
    F_{k, t}(x) &\ge \left[2p^k + (1-2\alpha)^k -2(1-\alpha - p)^k + (1-2p)^k\right] / 2\notag\\
    &= p^k + (1-2\alpha)^k/2-(1-\alpha - p)^k + (1-2p)^k/2.\label{eq:large-k-outer-lower-bound}
  \end{align}

  We can now use this bound to prove non-convergence. Suppose for a contradiction that $\lim_{t \rightarrow \infty} F_{k, t}(x) = 0$ for all $x < 1/2$. Then there exists some $t^*$ such that $F_{k, t}(1/4) \le \alpha = \left[1 - (499/512)^{1/k}\right]/2$ for all $t > t^*$. But now consider $x^* = F^{-1}_{k, t^*}(1/4) < 1/2$. Since $F_{k, t}(1/4) \le \alpha$ for all $t > t^*$, we can use the fact above to show inductively that $F_{k, t}(x^*) \ge  1/4$ for all $t \ge t^*$. For the base case $t = t^*$, the claim is vacuously true: $F_{k, t^*}(x^*) = 1/4 \ge  1/4$. Now suppose for $t > t^*$ that $F_{k, t-1}(x^*) \ge 1/4$. Then $z = F_{k, t-1}^{-1}(1/4) \le x^*$. From~\eqref{eq:large-k-outer-lower-bound}, we then have:
  \begin{align*}
    F_{k, t}(z)&\ge 1/4^k + (1-2\alpha)^k/2-(1-\alpha - 1/4)^k + (1-2 /4)^k/2\\
    &> (1-2\alpha)^k/2-(3/4-\alpha)^k \tag{throw away terms}\\
    &\ge (1-2\alpha)^k/2-(3/4)^5\tag{since $k \ge 5$, $\alpha > 0$}\\
    &= \left(1-2\left[1 - (499/512)^{1/k}\right]/2\right)^k / 2 -(3/4)^5\tag{plug in $\alpha$}\\
    &= 499/1024 -243/1024\\
    &= 1/4.
  \end{align*}
  By the monotonicity of the CDF, $F_{k, t}(x^*) > 1/4$, since $z \le x^*$. By induction, $F_{k, t}(x^*) \ge 1/4$ for all $t \ge t^*$ This contradicts that $\lim_{t \rightarrow \infty} F_{k, t}(x) = 0$ for all $x < 1/2$.
\end{proof}

\subsection{Proofs from \Cref{sec:noisy}}\label{app:noisy-proofs}

Our proofs with $\epsilon$-uniform noise make extensive use of the following lemma, which allows us to translate the convergence of an iterated map bounding a sequence into an eventual bound on the sequence. 

\begin{lemma}\label{lemma:bounded-convergence}
  Consider an iterated map $x_t = f(x_{t-1})$ where $f:[0, 1/2) \rightarrow [0, 1/2)$ is non-decreasing. Suppose $\lim_{t \rightarrow \infty} x_t = c$ for all $x_0 \in I \subseteq [0, 1/2)$.
  
  \begin{enumerate}
  \item If $y_t \le f(y_{t-1})$ for all $t > 0$, then $\limsup_{t \rightarrow \infty} y_t \le c$ for all $y_0 \in I$. 
  \item  If $y_t \ge f(y_{t-1})$ for all $t>0$, then $\liminf_{t \rightarrow \infty} y_t \ge c$ for all $y_0 \in I$. 
  \end{enumerate}

\end{lemma}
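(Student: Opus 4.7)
The plan is to prove both parts by a standard domination argument: couple the sequence $(y_t)$ with a genuine orbit $(x_t)$ of the iterated map that starts at the same point, and use the monotonicity of $f$ to propagate a one-sided inequality through all iterations.

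For part (1), I would define $x_0 = y_0$ and $x_t = f(x_{t-1})$ for $t \ge 1$, so that $(x_t)$ is a legitimate orbit of the map with $x_0 \in I$. I would then show by induction on $t$ that $y_t \le x_t$. The base case is immediate since $y_0 = x_0$. For the inductive step, assuming $y_{t-1} \le x_{t-1}$, the hypothesis gives $y_t \le f(y_{t-1})$, and since $f$ is non-decreasing we have $f(y_{t-1}) \le f(x_{t-1}) = x_t$, which chains to $y_t \le x_t$. Taking $\limsup$ on both sides and using $\lim_{t \to \infty} x_t = c$ (which applies because $x_0 = y_0 \in I$) yields $\limsup_{t \to \infty} y_t \le c$, as desired.

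Part (2) is completely symmetric: with the same definition of $(x_t)$, the reversed hypothesis $y_t \ge f(y_{t-1})$ combined with the monotonicity of $f$ gives $y_t \ge x_t$ by the same induction, and taking $\liminf$ yields $\liminf_{t \to \infty} y_t \ge c$.

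The only subtlety I anticipate is ensuring that all the iterates $y_t$ actually stay inside $[0, 1/2)$ so that $f(y_t)$ is defined at every step. In part (1) this is automatic because $y_t \le f(y_{t-1}) < 1/2$ (since $f$ maps into $[0, 1/2)$) and, in the intended applications, $y_t$ is a CDF value that is nonnegative by construction. In part (2) a matching lower bound $y_t \ge 0$ likewise comes for free from the context (CDF values), so no extra work is needed. Modulo verifying this domain condition, the lemma reduces entirely to the one-line induction enabled by the monotonicity of $f$, so the main (and only real) obstacle is stating the coupling carefully rather than any delicate analysis.
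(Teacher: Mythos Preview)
Your proposal is correct and matches the paper's proof essentially line for line: the paper also sets $x_0 = y_0$, proves $y_t \le x_t$ (resp.\ $\ge$) by induction using the monotonicity of $f$, and then passes to the $\limsup$ (resp.\ $\liminf$). Your remark about the iterates staying in the domain is a reasonable extra observation that the paper leaves implicit.
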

\begin{proof}
  Let $y_0 \in I$ and define $x_0 = y_0$. Suppose $y_t \le f(y_{t-1})$ for all $t > 0$. We'll show $y_t \le x_t$ by induction. The base case $t = 0$ holds by the definition of $x_0$. Suppose for $t > 0$ that $y_{t-1} \le x_{t-1}$. Then $y_t \le f(y_{t-1}) \le f(x_{t-1}) = x_t$ (since $f$ is non-decreasing), so $y_t \le x_t$ for all $t$ by induction. Thus, $\limsup_{t \rightarrow \infty}  y_t \le \limsup_{t \rightarrow \infty} x_t = \lim_{t \rightarrow \infty} x_t = c$. The second claim with $y_t \ge f(y_{t-1})$ follows from the exact same argument with each $\le$ replaced by $\ge$ and $\limsup$ replaced by $\liminf$. 
\end{proof}

\ktwonoisylemma*
\begin{proof}
  We begin by looking for the fixed points of the map:
  \begin{align*}
    &2p^2 (1-\epsilon)^2 + 4px\epsilon(1-\epsilon) + 2x^2 \epsilon^2 = p\\
    &\Leftrightarrow  2 (1-\epsilon)^2p^2 + (4x\epsilon(1-\epsilon)-1)p + 2x^2 \epsilon^2 = 0.
  \end{align*}
  Applying the quadratic formula and simplifying yields the two fixed points:
  \begin{align*}
    p_1^*&= \frac{1 - 4x \epsilon(1-\epsilon) -  \sqrt{1 - 8 \epsilon x (1 - \epsilon )}}{4 (1-\epsilon)^2}\\
    p_2^*&=\frac{1 - 4x \epsilon(1-\epsilon) +  \sqrt{1 - 8 \epsilon x (1 - \epsilon )}}{4 (1-\epsilon)^2}.
  \end{align*} 
We'll show that $p^*_1$ is stable  and that $p_1^* \le \epsilon$ while  $p_2^*$ is unstable and $p_2^*> 1/2$. To see that $p_1^* \le \epsilon$, consider $\epsilon - p_1^*:$
  \begin{align}
    \epsilon - \frac{1 - 4x \epsilon(1-\epsilon) -  \sqrt{1 - 8 \epsilon x (1 - \epsilon )}}{4 (1-\epsilon)^2} &= \frac{4 (1-\epsilon)^2\epsilon -1 + 4x \epsilon(1-\epsilon) +  \sqrt{1 - 8 \epsilon x (1 - \epsilon )}}{4 (1-\epsilon)^2}.\label{eq:epsilon-minus-p1}
  \end{align}
  It suffices to show the numerator is non-negative. Taking its derivative with respect to $x$ shows the numerator is decreasing in $x$: 
\begin{align*}
   \frac{\partial}{\partial x} \left[ 4(1-\epsilon)^2\epsilon -1 + 4x \epsilon(1-\epsilon) +  (1 - 8 \epsilon x (1 - \epsilon ))^{1/2} \right] &= 4\epsilon(1-\epsilon) -  \frac{4\epsilon(1-\epsilon)}{(1 - 8 \epsilon x (1 - \epsilon ))^{1/2}}\\
   &< 4\epsilon(1-\epsilon) -  4\epsilon(1-\epsilon)\\
   &= 0.
\end{align*} 
Thus, it suffices to show the function is non-negative when $x = 1/2$. For $x = 1/2$, 
\begin{align*}
   4 (1-\epsilon)^2\epsilon -1 + 4x \epsilon(1-\epsilon) +  \sqrt{1 - 8 \epsilon x (1 - \epsilon )} &= 4 (1-\epsilon)^2\epsilon -1 + 2 \epsilon(1-\epsilon) +  \sqrt{1 - 4 \epsilon (1 - \epsilon )}\\
   &= 4 \epsilon ^3-10 \epsilon ^2+6 \epsilon -1 +  \sqrt{(1-2 \epsilon)^2}\\
   &= 4 \epsilon ^3-10 \epsilon ^2+6 \epsilon -1 +  |1-2 \epsilon|.
\end{align*}
Consider the cases $\epsilon \le 1/2$ and $\epsilon > 1/2$. If $\epsilon \le 1/2$,
\begin{align*}
  4 \epsilon ^3-10 \epsilon ^2+6 \epsilon -1 +  |1-2 \epsilon| &= 4 \epsilon ^3-10 \epsilon ^2+4 \epsilon\\
  &= 2 \epsilon (2-\epsilon)   (1-2 \epsilon)\\
  &\ge 0 \tag{since $\epsilon \le 1/2$}.
\end{align*}
If $\epsilon > 1/2$,
\begin{align*}
   4 \epsilon ^3-10 \epsilon ^2+6 \epsilon -1 +  |1-2 \epsilon| &=  4 \epsilon ^3-10 \epsilon ^2+8 \epsilon -2\\
   &= 2 (1-\epsilon )^2 (2 \epsilon -1)\\
   &\ge 0 \tag{since $\epsilon > 1/2$}
\end{align*}
Therefore the numerator in \eqref{eq:epsilon-minus-p1} is non-negative, so $p_1^* \le \epsilon$. To show $p_1^*$ is stable, consider the derivative of the iterated map in \eqref{eq:k-2-noisy-map}:
\begin{align}
  \frac{\partial}{\partial p}\left[ 2p^2 (1-\epsilon)^2 + 4px\epsilon(1-\epsilon) + 2x^2 \epsilon^2\right] &= 4(1-\epsilon)^2p + 4x \epsilon (1-\epsilon).\label{eq:k-2-noisy-map-derivative}
\end{align}
Plugging in $p_1^*$:
\begin{align*}
  4(1-\epsilon)^2p_1^* + 4x \epsilon (1-\epsilon)&=  4(1-\epsilon)^2\frac{1 - 4x \epsilon(1-\epsilon) -  \sqrt{1 - 8 \epsilon x (1 - \epsilon )}}{4 (1-\epsilon)^2} + 4x \epsilon (1-\epsilon)\\
  &= 1 - \sqrt{1 - 8 \epsilon x (1 - \epsilon )}\\
  &< 1 - \sqrt{1 - 4 \epsilon (1 - \epsilon )} \tag{$x < 1/2$}\\
  &\le 1 \tag{$\epsilon(1-\epsilon) \le 1/4$}.
\end{align*}
Thus the derivative of the iterated map at $p_1^*$ has magnitude strictly less than 1, so $p_1^*$ is a stable fixed point.

Now, consider the other fixed point $p_2^*$:
\begin{align*}
  p_2^*&=\frac{1 - 4x \epsilon(1-\epsilon) +  \sqrt{1 - 8 \epsilon x (1 - \epsilon )}}{4 (1-\epsilon)^2}\\
  &>\frac{1 - 2 \epsilon(1-\epsilon) +  \sqrt{1 - 4 \epsilon  (1 - \epsilon )}}{4 (1-\epsilon)^2}\tag{$x < 1/2$}\\
  &= \frac{1 - 2 \epsilon(1-\epsilon) +  \sqrt{(1-2\epsilon)^2}}{4 (1-\epsilon)^2}\\
  &= \frac{1 - 2 \epsilon(1-\epsilon) +  |1-2\epsilon|}{4 (1-\epsilon)^2}.
\end{align*}
If $\epsilon \le 1/2$,
\begin{align*}
  p_2^* &>\frac{1 - 2 \epsilon(1-\epsilon) +  1-2\epsilon}{4 (1-\epsilon)^2}\\
  &= \frac{2(1-\epsilon)^2}{4 (1-\epsilon)^2}\\
  &= 1/2.
\end{align*}
If $\epsilon > 1/2$,
\begin{align*}
  p_2^* &>\frac{1 - 2 \epsilon(1-\epsilon) -  1+2\epsilon}{4 (1-\epsilon)^2}\\
  &= \frac{2\epsilon^2}{4(1-\epsilon)^2}\\
  &>\frac{2(1/2)^2}{4(1-1/2)^2}\\
  &= 1/2.
\end{align*}
In either case, $p_2^* > 1/2$. Additionally, plugging $p_2^*$ into the derivative \eqref{eq:k-2-noisy-map-derivative} yields $1+  \sqrt{1 - 8 x\epsilon (1 - \epsilon )} > 1$ (for $x < 1/2$), showing $p_2^*$ is unstable. Thus, for $p \in [0, 1/2]$, the quadratic map  converges to the stable fixed point $p_1^*\le \epsilon$.
\end{proof}

\begin{lemma}\label{lemma:k-3-noisy-map}
For any $\epsilon \in (0, 1/3)$, the cubic iterated map given by
  \begin{align*}
    p' &= 3/4 \cdot [\epsilon /2 + (1-\epsilon)p] +[\epsilon /2 + (1-\epsilon)p]^3
  \end{align*}
  converges to $p^* \le 1.5 \epsilon$ for all initial $p \in [0, 1/2)$. Moreover, the map is non-decreasing in $p$ on $[0, 1/2)$.  
\end{lemma}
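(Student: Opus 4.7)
The plan is to analyze the cubic iterated map by setting $g(p) = \epsilon/2 + (1-\epsilon)p$ so that $f(p) := 3g(p)/4 + g(p)^3$ is the one-step update. Monotonicity of $f$ on $[0, 1/2)$ is immediate: $g$ is linear with slope $1-\epsilon > 0$, and $x \mapsto 3x/4 + x^3$ is strictly increasing on $[0, \infty)$, so $f$ is strictly increasing on $[0, 1/2]$. This also handles the final sentence of the lemma.

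Next I would locate the fixed points of $f$ explicitly. Substituting $q = g(p)$ and clearing denominators, $f(p) = p$ becomes $(1-\epsilon)q^3 - \tfrac{1+3\epsilon}{4} q + \tfrac{\epsilon}{2} = 0$. One checks that $q = 1/2$ (equivalently $p = 1/2$) is always a root, so the cubic factors as
$$(q - 1/2)\bigl[(1-\epsilon)q^2 + \tfrac{1-\epsilon}{2} q - \epsilon\bigr] = 0.$$
The quadratic has one negative root (irrelevant since $q \ge \epsilon/2$) and one positive root $q^* = \bigl[-1 + \sqrt{(1+15\epsilon)/(1-\epsilon)}\bigr]/4$, which is strictly less than $1/2$ precisely when $\epsilon < 1/3$; the corresponding fixed point of $f$ is $p^* = (q^* - \epsilon/2)/(1-\epsilon) \in [0, 1/2)$. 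Because $f(p) - p$ is a cubic whose roots (in $q$) are $q^- < 0 < q^* < 1/2$, a sign check yields $f(p) > p$ on $[0, p^*)$ and $f(p) < p$ on $(p^*, 1/2)$.

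Combining this sign information with the monotonicity of $f$ handles convergence: any iteration started in $[0, p^*]$ is non-decreasing and bounded above by $p^*$ (since $f(p^*) = p^*$ and $f$ is non-decreasing), while any iteration started in $(p^*, 1/2)$ is non-increasing and bounded below by $p^*$. In either case the bounded monotone sequence converges to a fixed point in $[0, 1/2)$, which must be $p^*$.

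Finally, to bound $p^* \le 1.5\epsilon$, by the sign analysis it suffices to verify $f(1.5\epsilon) \le 1.5\epsilon$. A direct expansion gives $f(1.5\epsilon) - 1.5\epsilon = \epsilon^3(2 - 1.5\epsilon)^3 - 9\epsilon^2/8$, so the inequality reduces to $\epsilon(2 - 1.5\epsilon)^3 \le 9/8$. The function $h(\epsilon) = \epsilon(2 - 1.5\epsilon)^3$ has derivative $(2 - 1.5\epsilon)^2(2 - 6\epsilon) \ge 0$ on $[0, 1/3]$, so $h$ attains its maximum on that interval at the endpoint $\epsilon = 1/3$ with $h(1/3) = (1/3)(3/2)^3 = 9/8$, giving the bound. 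I expect the main subtlety to be noticing that $p = 1/2$ is always an (unstable) fixed point of $f$ and that $\epsilon < 1/3$ is precisely the regime in which $p^*$ and $1/2$ are distinct; correspondingly, the bound $p^* \le 1.5\epsilon$ becomes tight as $\epsilon \to 1/3$.
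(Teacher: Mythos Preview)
Your proof is correct and takes a genuinely different, cleaner route than the paper's.

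The paper computes the fixed points via the cubic formula (Mathematica), obtaining an explicit expression $p_1^* = \frac{1}{4}\sqrt{(1+15\epsilon)/(1-\epsilon)^3} - \frac{1+2\epsilon}{4(1-\epsilon)}$, and then grinds through a page of algebra on this expression to verify $0 \le p_1^* \le 1.5\epsilon$. It establishes convergence by computing $f'(p_1^*)$ and $f'(1/2)$ explicitly to classify the fixed points as stable and unstable, respectively. Your approach avoids essentially all of this: by substituting $q = g(p)$ and recognizing that $q = 1/2$ is always a root, you factor the fixed-point equation into a quadratic, giving a much simpler formula for $q^*$; you then replace the derivative-based stability analysis with a direct sign analysis of $f(p) - p$ on each subinterval, which together with monotonicity of $f$ yields a monotone-sequence convergence argument. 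Most notably, your bound $p^* \le 1.5\epsilon$ reduces via the sign analysis to the single inequality $f(1.5\epsilon) \le 1.5\epsilon$, which you dispatch with a short one-variable optimization; this is far more transparent than the paper's direct manipulation of the closed-form $p_1^*$. Your observation that the bound becomes tight as $\epsilon \to 1/3$ (where $p^* \to 1/2 = 1.5 \cdot 1/3$) also explains why the hypothesis $\epsilon < 1/3$ is sharp, something the paper's proof does not make explicit.
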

\begin{proof}
  
The fixed points of this map can be found using the cubic formula (equivalently, we used Mathematica):
\begin{align*}
  p_1^* &= \frac{1}{4} \sqrt{\frac{1+15 \epsilon}{(1-\epsilon)^3}} - \frac{1+2 \epsilon}{4 (1-\epsilon)}\\
  p_2^* &= -\frac{1}{4} \sqrt{\frac{1+15 \epsilon}{(1-\epsilon)^3}} - \frac{1+2 \epsilon}{4 (1-\epsilon)}\\
  p_3^* &= \frac{1}{2}.
\end{align*}
We can ignore the negative fixed point $p_2^*$, since $p$ can never be negative. We'll show that for $\epsilon < 1/3$, $p_1^* \in [0, 1.5 \epsilon]$, $p_1^*$ is stable, and the cubic map converges to $p_1^*$ for $p \in [0, 1/2)$. To begin with, we'll show $p_1^* \ge 0$:

\begin{align*}
   p_1^* &= \frac{1}{4} \sqrt{\frac{1 + 15 \epsilon}{(1 - \epsilon)^3}} - \frac{1 + 2 \epsilon}{4 (1 - \epsilon)}\\
 &= \frac{(1 + 15 \epsilon)^{1/2}}{4(1 - \epsilon)^{3/2}} - \frac{
 (1 + 2 \epsilon)(1 - \epsilon)^{1/2}}{4 (1 - \epsilon)^{3/2}}\\
 &= \frac{(1 + 15 \epsilon)^{1/2} - ((1 + 2 \epsilon)^2)^{1/2}(1 - \epsilon)^{1/2}}{4(1 - \epsilon)^{3/2}}\\
 &= \frac{(1 + 15 \epsilon)^{1/2} - ((1 + 2 \epsilon)^2(1 - \epsilon))^{1/2}}{4(1 - \epsilon)^{3/2}}\\
  &= \frac{(1 + 15 \epsilon)^{1/2} - ( 1 + 3 \epsilon - 4 \epsilon^3)^{1/2}}{4(1 - \epsilon)^{3/2}}\\
  &\ge 0.\tag{since $1 + 15 \epsilon > 1 + 3 \epsilon - 4 \epsilon^3 $}
\end{align*}
Now we'll show that $  p^*_1 \le 1.5 \epsilon$. To do this, we'll show $1.5\epsilon  -   p^*_1 \ge 0$:
\begin{align*}
   1.5 \epsilon - p_1^* &= 1.5 \epsilon  - \frac{(1 + 15 \epsilon)^{1/2} - ( 1 + 3 \epsilon - 4 \epsilon^3)^{1/2}}{4(1 - \epsilon)^{3/2}}\\
    &= \frac{6\epsilon(1 - \epsilon)^{3/2}}{4(1 - \epsilon)^{3/2}}  - \frac{(1 + 15 \epsilon)^{1/2} - ( 1 + 3 \epsilon - 4 \epsilon^3)^{1/2}}{4(1 - \epsilon)^{3/2}}\\
    &= \frac{6\epsilon(1 - \epsilon)^{3/2} - (1 + 15 \epsilon)^{1/2} + ( 1 + 3 \epsilon - 4 \epsilon^3)^{1/2}}{4(1 - \epsilon)^{3/2}}.
\end{align*}
It suffices to show the numerator is non-negative on $[0, 1/3)$:
\begin{align*}
  &6\epsilon(1 - \epsilon)^{3/2} - (1 + 15 \epsilon)^{1/2} + ( 1 + 3 \epsilon - 4 \epsilon^3)^{1/2}\\
  &=   6\epsilon(1-\epsilon)(1 - \epsilon)^{1/2} - (1 + 15 \epsilon)^{1/2} + (1 + 2 \epsilon)(1-\epsilon)^{1/2}\\
  &= (1-\epsilon)^{1/2}\left[6\epsilon(1-\epsilon) + (1 + 2 \epsilon)\right] - (1 + 15 \epsilon)^{1/2}\\
  &= (1-\epsilon)^{1/2}(1 + 8\epsilon-6\epsilon^2) - (1 + 15 \epsilon)^{1/2}\\
  &= \left[(1-\epsilon)(1 + 8\epsilon-6\epsilon^2)^2\right]^{1/2} - (1 + 15 \epsilon)^{1/2}\\
  &= (1 + 15 \epsilon + 36 \epsilon^2 - 148 \epsilon^3 + 
 132 \epsilon^4 - 36 \epsilon^5)^{1/2} - (1 + 15 \epsilon)^{1/2}.
\end{align*}
To show this is non-negative, it suffices to show $36 \epsilon^2 - 148 \epsilon^3 + 
 132 \epsilon^4 - 36 \epsilon^5 $ is non-negative. Factoring yields
\begin{align*}
  36 \epsilon^2 - 148 \epsilon^3 + 
 132 \epsilon^4 - 36 \epsilon^5 &= 4 \epsilon^2 (1 - 3 \epsilon) (9 - 10 \epsilon + 
   3 \epsilon^2).
\end{align*}
Finally, we can see this is non-negative for $\epsilon \in (0, 1/3)$, so $p^*_1 \le 1.5 \epsilon$ for $\epsilon \in (0, 1/3)$.

Now, to show $  p^*_1$ is a stable fixed point, we can take the derivative of the cubic map at $p^*_2$:
\begin{align}
&\frac{\partial}{\partial p} \left(3/4 \cdot [\epsilon /2 + (1-\epsilon)p] +[\epsilon /2 + (1-\epsilon)p]^3 \right)\notag\\
  &=\frac{\partial}{\partial p} \left[p^3 (1-\epsilon)^3+\frac{3}{2} p^2 \epsilon  (1-\epsilon)^2+\frac{3}{4} p (1-\epsilon) \left( \epsilon ^2+1\right)+ \frac{1}{8}\epsilon^3 +\frac{3 }{8}\epsilon \right]\notag\\
  &= 3(1 - \epsilon)^3 p^2  + 3  \epsilon(1 - \epsilon)^2  p  + \frac{3}{4}  (1 - \epsilon) (1 +  \epsilon^2)\label{eq:k-3-map-derivative}
\end{align}
Plugging in $p^*_1$  and simplifying yields
\begin{align*}
&3(1 - \epsilon)^3 \left(\frac{1}{4} \sqrt{\frac{1 + 15 \epsilon}{(1 - \epsilon)^3}} - \frac{1 + 2 \epsilon}{4 (1 - \epsilon)}\right)^2  + 3  \epsilon(1 - \epsilon)^2  \left(\frac{1}{4} \sqrt{\frac{1 + 15 \epsilon}{(1 - \epsilon)^3}} - \frac{1 + 2 \epsilon}{4 (1 - \epsilon)}\right)  + \frac{3}{4}  (1 - \epsilon) (1 +  \epsilon^2)\\
 &= 3(1 - \epsilon)^3 \left(  \frac{1 + 15 \epsilon}{16(1 - \epsilon)^3} + \frac{(1 + 2 \epsilon)^2}{16 (1 - \epsilon)^2}  -  \frac{1 + 2 \epsilon}{8(1 - \epsilon)}\sqrt{\frac{1 + 15 \epsilon}{(1 - \epsilon)^3}}  \right) + \frac{3}{4}   \epsilon(1 - \epsilon)^2  \sqrt{\frac{1 + 15 \epsilon}{(1 - \epsilon)^3}}   \\
 &\qquad -  \frac{3}{4}  \epsilon(1 - \epsilon) (1 + 2 \epsilon)+ 3/4  (1 - \epsilon) (1 +  \epsilon^2)\\
  &=   \frac{3 (1 + 15 \epsilon)}{16} + \frac{3(1 - \epsilon) (1 + 2 \epsilon)^2}{16}  -  \frac{3(1 - \epsilon)^2 (1 + 2 \epsilon)}{8}\sqrt{\frac{1 + 15 \epsilon}{(1 - \epsilon)^3}}  + \frac{3}{4}   \epsilon(1 - \epsilon)^2  \sqrt{\frac{1 + 15 \epsilon}{(1 - \epsilon)^3}}  \\
  &\qquad -  \frac{3}{4}  \epsilon(1 - \epsilon) (1 + 2 \epsilon) + 3/4  (1 - \epsilon) (1 +  \epsilon^2)\\
   &=   \frac{3}{16} (1 + 15 \epsilon) + \frac{3}{16}(1 - \epsilon) (1 + 2 \epsilon)^2  +(1 - \epsilon)^2\left(\frac{3}{4}   \epsilon-  \frac{3}{8} (1 + 2 \epsilon) \right)\sqrt{\frac{1 + 15 \epsilon}{(1 - \epsilon)^3}}  -  \frac{3}{4}  \epsilon(1 - \epsilon) (1 + 2 \epsilon) \\
   &\qquad  + 3/4  (1 - \epsilon) (1 +  \epsilon^2)\\
 &= -\frac{3}{8}(1 - \epsilon)^2\sqrt{\frac{1 + 15 \epsilon}{(1 - \epsilon)^3}} + \frac{9}{8} + \frac{15}{8}\epsilon\\
  &=  -\frac{3}{8}\sqrt{(1 + 15 \epsilon)(1-\epsilon)} + \frac{9}{8} + \frac{15}{8}\epsilon.
\end{align*}
To see this is positive for $\epsilon < 1/3$, note that $\frac{3}{8}\sqrt{(1 + 15 \epsilon)(1-\epsilon)} < \frac{3}{8}\sqrt{(1 + 15 /3)} \approx 0.92 < 9/8$. We can also show the derivative of the cubic map at $p^*_2$ is less than 1. To do this, we'll show that 1 minus the derivative at $p_1^*$ is positive:
\begin{align*}
  1 - \left(-\frac{3}{8}\sqrt{(1 + 15 \epsilon)(1-\epsilon)} + \frac{9}{8} + \frac{15}{8}\epsilon\right) &= \frac{3}{8}\sqrt{(1 + 15 \epsilon)(1-\epsilon)} - \frac{1}{8} - \frac{15}{8}\epsilon\\
  &=\sqrt{\frac{9}{64}(1 + 15 \epsilon)(1-\epsilon)} - \sqrt{\left(\frac{1}{8} + \frac{15}{8}\epsilon\right)^2}.\\
\end{align*}
By the monotonicity of square roots, it suffices to show that the following quadratic is positive:
\begin{align*}
  \frac{9}{64}(1 + 15 \epsilon)(1-\epsilon) - \left(\frac{1}{8} + \frac{15}{8}\epsilon\right)^2 &= -\frac{45 \epsilon ^2}{8}+\frac{3 \epsilon }{2}+\frac{1}{8}\\
  &= \frac{1}{8} (1- 3 \epsilon) (15 \epsilon +1).
\end{align*}
which we can see is positive for $\epsilon \in (0, 1/3)$. Thus, the derivative of the cubic map at $p_1^*$ is positive but less than 1, so $p_1^*$  is a stable fixed point. The fixed point at $1/2$ is unstable, in contrast: plugging $p^*_3 = 1/2$ into the derivative \eqref{eq:k-3-map-derivative} and simplifying yields $3/2 (1-\epsilon)$, which is larger than 1 for $\epsilon < 1/3$.  Thus the cubic map converges to $p_1^*$ for initial values in $[0, 1/2)$. 

Finally, to show the map is non-decreasing in $p$, notice that derivative \Cref{eq:k-3-map-derivative} is non-negative for $p \ge 0$ and $\epsilon \in (0, 1].$ 
\end{proof}

\kthreenoisy*
\begin{proof}
  Let $x < 1/2$ and define $p = F_{3, t-1}^\epsilon(x)$.  With $\epsilon$-uniform noise, $\Pr(X_{i,t} \le x) = \epsilon x + (1-\epsilon)p$. The case analysis from \Cref{thm:k-3-convergence} then proceeds exactly the same way, so we can replace $p$ by $\epsilon x + (1-\epsilon)p$ in the bound from \Cref{thm:k-3-convergence} to get the equivalent bound with $\epsilon$-uniform noise:
  \begin{align}
    F_{3, t}^\epsilon(x) &\le 3/4 \cdot [\epsilon x + (1-\epsilon)p] +[\epsilon x + (1-\epsilon)p]^3.\label{eq:k-3-noisy-recurrence}
  \end{align}
   While it would be possible to work directly with the cubic map \eqref{eq:k-3-noisy-recurrence}, its fixed points are extremely messy. As such, we instead analyze the upper bound given by $x < 1/2$ and then use \Cref{lemma:bounded-convergence}:
\begin{align}
  F_{3, t}^\epsilon(x) &< 3/4 \cdot [\epsilon /2 + (1-\epsilon)p] +[\epsilon /2 + (1-\epsilon)p]^3.\label{eq:k-3-noisy-upper-bound}
\end{align}
If $F_0(x) < 1/2$, then \Cref{lemma:k-3-noisy-map} states that the map \eqref{eq:k-3-noisy-upper-bound} upper bounding $F_{3, t}^\epsilon(x)$ converges to $p^*\le 1.5 \epsilon$. Thus, applying \Cref{lemma:bounded-convergence} gives $\limsup_{t \rightarrow \infty}F_{3, t}^\epsilon(x)\le p^* \le 1.5 \epsilon$ as claimed. If $F_0(x) = 1/2$ (which is possible since we don't require that $F_0$ is positive near $1/2$), then applying \eqref{eq:k-3-noisy-upper-bound},
\begin{align*}
  F_{3, 1}^\epsilon(x) &< 3/4 \cdot [\epsilon /2 + (1-\epsilon)/2] +[\epsilon /2 + (1-\epsilon)/2]^3\\
  &= 3/4 \cdot 1/2 +[1/2]^3\\
  &= 1/2.
\end{align*}
Thus $F_{3, 1}^\epsilon(x) < 1/2$, so we can apply \Cref{lemma:bounded-convergence,lemma:k-3-noisy-map} with initial $p = F_{3, 1}^\epsilon(x)$ rather than $F_{0}(x)$. 
\end{proof}

\kfourlemmanoisy*
\begin{proof}
  Let $x \in (1/3, 1/2)$. Just as in \Cref{thm:k-3-noisy-convergence}, we can take the bound from \Cref{lemma:k-4-1/3-bound} and replace $F_{4, t-1}^\epsilon(x)$ with $\epsilon x + (1-\epsilon)F_{4, t-1}^\epsilon(x)$ to get the claimed upper bound with $\epsilon$-uniform noise. The second part of the claim follows by induction after noting $F_{4, 1}^\epsilon(x) \le \epsilon x +  (1 - \epsilon) F_{4, 0}^\epsilon(x) \le \epsilon \max\{x, F_{4, 0}^\epsilon(x)\} +  (1 - \epsilon) \max\{x, F_{4, 0}^\epsilon(x)\} =  \max\{x, F_{4, 0}^\epsilon(x)\}$.
  \end{proof}

\kfournoisy*

\begin{proof} 
Let $p = F_{4, t-1}^\epsilon(x)$. By \Cref{lemma:k-4-1/3-bound-noisy} and symmetry, an inner candidate in $(x, 1-x)$ wins with probability at least $1-2p(1-\epsilon) - 2x\epsilon$. We'll strengthen this bound in the same way as in \Cref{thm:k-4-convergence}, using the case with three candidates in $(x/3 + 1/3, 1/2)$ and one in $(1-x, 1]$ (and the symmetric case). With $\epsilon$-uniform noise and accounting for symmetry, the probability this case occurs is
\begin{align*}
&8 [\epsilon x + (1-\epsilon)p] [1/2 - \epsilon(x/3 + 1/3) - (1-\epsilon)F_{4, t-1}^\epsilon(x/3 + 1/3)]^3\\
&\ge 8 [\epsilon x + (1-\epsilon)p] [1/2 - \epsilon(x/3 + 1/3) - (1-\epsilon)\max\{x/3 + 1/3, F_{0}(x/3 + 1/3) \}]^3\tag{by \Cref{lemma:k-4-1/3-bound-noisy}}\\
&= 8 [\epsilon x + (1-\epsilon)p] \beta^3.
\end{align*}
Then, adding this case to the cases implicitly used in \Cref{lemma:k-4-1/3-bound-noisy} (see \Cref{lemma:k-4-1/3-bound} for the list of cases), we find 
\begin{align*}
  &\Pr(x < \plurality(X_{1,t}^\epsilon, X_{2,t}^\epsilon, X_{3,t}^\epsilon, X_{4,t}^\epsilon) < 1-x)\\
  &\ge 1-2p(1-\epsilon) - 2x\epsilon + 8 [\epsilon x + (1-\epsilon)p] \beta^3.
\end{align*}
By symmetry,
\begin{align}
  F_{4, t}^\epsilon(x) &= \left[1 - \Pr(x < \plurality(X_{1,t}^\epsilon, X_{2,t}^\epsilon, X_{3,t}^\epsilon, X_{4,t}^\epsilon) < 1-x)\right] / 2\notag \\
  &\le p(1-\epsilon) +x\epsilon -4 [\epsilon x + (1-\epsilon)p] \beta^3\notag \\
  &= p (1-\epsilon) (1 -4\beta^3  ) + \epsilon x (1 - 4 \beta^3).\label{eq:k-4-noisy-map}
\end{align}
We'll show that the iterated map \eqref{eq:k-4-noisy-map} upper bounding $F_{4, t}^\epsilon(x)$ converges to a fixed point upper bounded by $\frac{1}{8\beta} \epsilon$ and then apply \Cref{lemma:bounded-convergence}. First, we'll find the fixed point (unique, since this is a linear map):
\begin{align}
   &p^* (1-\epsilon) (1 -4\beta^3  ) + \epsilon x (1 - 4 \beta^3) = p^* \notag \\
  \Leftrightarrow \quad &p^*  [(1-\epsilon) (1 -4\beta^3  )-1] + \epsilon x (1 - 4 \beta^3) = 0\notag \\
    \Leftrightarrow \quad &p^*   = \frac{\epsilon x (1 - 4 \beta^3)}{1-(1-\epsilon) (1 -4\beta^3  )}.
\end{align}
To show convergence to $p^*$, it suffices to show that the slope of the map is in $(-1, 1)$ (any such linear map converges to its unique fixed point, e.g., by the Banach fixed-point theorem). First, we can show $\beta \in (0, 1/2]$: 
\begin{align*}
  \beta &= 1/2 - \epsilon(x/3 + 1/3) - (1-\epsilon)\max\{x/3 + 1/3, F_{0}(x/3 + 1/3) \}\\
  &> 1/2 - \epsilon(1/2) - (1-\epsilon)\max\{1/2, F_{0}(x/3 + 1/3) \} \tag{since $x < 1/2$}\\
  &= 1/2 - \epsilon(1/2) - (1-\epsilon)(1/2) \tag{since $F_{0}(x/3 + 1/3) \le 1/2$}\\
  &= 0.
\end{align*}
Thus, the slope $(1-\epsilon)(1-4\beta^3) \in [0, 1)$, so the map \eqref{eq:k-4-noisy-map} converges to $p^*$ for all initial values $p$ and is non-decreasing in $p$. Now we can upper bound $p^*$:

\begin{align*}
  p^* &= \frac{\epsilon x (1 - 4 \beta^3)}{1-(1-\epsilon) (1 -4\beta^3  )}\\
  &<  \frac{\epsilon /2}{1- (1 -4\beta^3  )}\tag{$\epsilon \ge 0$, $x < 1/2$}\\
  &= \frac{\epsilon }{8\beta^3  }.
\end{align*}
Thus, by \Cref{lemma:bounded-convergence}, $\limsup_{t\rightarrow \infty}F_{4, t}^\epsilon(x) \le p^* <  \frac{1}{8\beta^3} \epsilon$.
\end{proof}

\largeknoisy*

\begin{proof}
 Suppose for a contradiction that the candidate distribution does approximately converge to the center. That is, suppose that for all $c > 0$ and $x <1/2$, there exists some $\epsilon_\text{max} > 0$ such that with $\epsilon$-uniform noise, for any $\epsilon \in (0, \epsilon_{\text{max}}]$,  $\limsup_{t \rightarrow \infty} F_{k, t}^\epsilon(x) < c$. If $\limsup_{t \rightarrow \infty} F_{k, t}^\epsilon(x) < c$, then there is some $t^*$ such that for all $t \ge t^*$, $F_{k, t}^\epsilon(x) \le c$. In particular, let $\epsilon_\text{max}^*$ and $t^*$ be the corresponding values for $x = 1/4$. Then for any $\epsilon$-uniform noise with $\epsilon \le \epsilon_\text{max}^*$, $F_{k, t^*}^\epsilon(1/4) \le c$. Additionally, consider the point $z = (F_{k, t^*}^\epsilon)^{-1}(1/4)$. By our assumption, there is some $\epsilon_\text{max}'$ and $t'$ such that if $\epsilon < \epsilon_\text{max}'$, then $F_{k, t}^\epsilon(z) \le c$ for all $t \ge t'$. We can make $\epsilon$ and $c$ as small as needed, so we'll pick:
 \begin{align}
   c &< \big[1-\left(125 / 128\right)^{1/k}\big]/3\\
   \epsilon &< \min\big\{\epsilon_\text{max}^*, \epsilon_\text{max}',  \big[1-\left(125 / 128\right)^{1/k}\big]/3\big\}.
 \end{align}
  Note that $\big[1-\left(125 / 128\right)^{1/k}\big]/3$ is largest at $k = 5$, when its value is approximately $0.0016$. Also note that we must have $t' > t^*$, since $F_{k, t^*}^\epsilon(z) = 1/4 > c$.
  
  Now, we can apply the same argument as in \Cref{thm:large-k-no-convergence}, finding a lower bound on $F_{4, t^*+1}^\epsilon(x)$ (for $x \in (1/4, 1/2)$) given that only a $c$-fraction of the winners in generation $t^*$ are left of $1/4$ (note that this parameter was called $\alpha$ in the proof of \Cref{thm:large-k-no-convergence}). Let $p = F_{k, t^*}^\epsilon(x)$ with $\epsilon$-uniform noise. For brevity, we will avoid repeating the argument from \Cref{thm:large-k-no-convergence} and instead substitute directly into the resulting bound~\eqref{eq:large-k-outer-lower-bound}. Replacing $p$ with $\Pr(X_{i, t^*}^\epsilon \le x) = \epsilon x + (1-\epsilon)p$ and $\alpha$ with $\Pr(X_{i, t^*}^\epsilon \le 1/4) \le \epsilon /4 + (1-\epsilon)c $ in~\eqref{eq:large-k-outer-lower-bound} then yields
  \begin{align}
    F_{4, t^*+1}^\epsilon(x) &\ge [\epsilon x + (1-\epsilon)p]^k + (1-2[\epsilon /4 + (1-\epsilon)c])^k/2\notag \\
    &\qquad-(1-[\epsilon /4 + (1-\epsilon)c] - [\epsilon x + (1-\epsilon)p])^k  + (1-2[\epsilon x + (1-\epsilon)p])^k/2.\label{eq:large-k-epsilon-uniform-lower-bound}
  \end{align}

We will now derive a contradiction: that $F_{k, t}^\epsilon(z)$ never goes below $1/4$ as $t $ increases from $t^*$ to $t'$, when it should go below $c$. We know $z > 1/4$ by the monotonicity of the CDF, since $F_{k, t^*}^\epsilon(1/4) \le c$. So, we can apply the lower bound~\eqref{eq:large-k-epsilon-uniform-lower-bound} to $z$ (where $p =1/4$):
 \begin{align}
F_{k,t^*+1}^\epsilon(z)&\ge [\epsilon z + (1-\epsilon)/4]^k + (1-2[\epsilon /4 + (1-\epsilon)c])^k/2\notag \\
    &\qquad-(1-[\epsilon /4 + (1-\epsilon)c] - [\epsilon z + (1-\epsilon)/4])^k  + (1-2[\epsilon z + (1-\epsilon)/4])^k/2 \notag\\
    &> (1-2[\epsilon /4 + (1-\epsilon)c])^k/2 -(1-[\epsilon /4 + (1-\epsilon)c] - [\epsilon z + (1-\epsilon)/4])^k \notag\\
    &= (1-\epsilon /2 -2(1-\epsilon)c)^k/2 -(1-[\epsilon /4 + (1-\epsilon)c] - [\epsilon z + (1-\epsilon)/4])^k \notag\\
    &> (1-\epsilon  -2c)^k/2 -(1 -  (1-\epsilon)/4)^k\notag \\
    &\ge (1-\epsilon  -2c)^k/2 -(3/4+\epsilon/4)^5.\label{eq:lower-bound-on-z}
  \end{align}
  Now consider each term in~\eqref{eq:lower-bound-on-z}. By our upper bounds on $c$ and $\epsilon$,  \begin{align*}
    (1-\epsilon-2c)^k/2 &> \left(1 - \big[1-\left(125 / 128\right)^{1/k}\big]/3 - 2 \big[1-\left(125 / 128\right)^{1/k}\big]/3\right)^k/2\\
    &= \left(1 - \big[1-\left(125 / 128\right)^{1/k}\big]\right)^k/2\\
    &= 125 / 256.
  \end{align*}

 Meanwhile, $\epsilon$ is also small enough that $(3/4 + \epsilon/4)^5 < \frac{244}{1024}$ (solving for $\epsilon$ reveals we need $\epsilon < 0.0024$, which we have ensured). This then means that $(1-\epsilon - 2c)^k/2 - (3/4 + \epsilon/4)^5 > \frac{125}{256} - \frac{244}{1024} = 1/4$. Following the above chain of inequalities, this shows that $F_{k,t^*+1}^\epsilon(z)>1/4$. 
  
  We will now show by induction that for all $t > t^*$, $F_{k, t}^\epsilon(z) > 1/4$, a contradiction (since by our assumption, $F_{k, t'}^\epsilon(z) \le c$ with $t' > t^*$). We have just shown the base case $t = t^*+1$ above. Then, suppose as an inductive hypothesis that $F_{k, t}^\epsilon(z) \ge 1/4$ for $t > t^*$. Define $w = (F_{k, t}^\epsilon)^{-1}(1/4)$; by the inductive hypothesis, we know $w \le z$.  By the argument above, $F_{k, t+1}^\epsilon(w) > 1/4$ (the argument only requires that $F_{k, t}^\epsilon(1/4) \le c$ and $w \in (1/4, 1/2)$, which are satisfied here). Thus, by the monotonicity of the CDF, $F_{k, t+1}^\epsilon(z) > 1/4$. By induction, we then have $F_{k, t'}^\epsilon(z) > 1/4$, a contradiction.
\end{proof}

\subsection{Proofs from \Cref{sec:1/4-3/4}}\label{app:1/4-3/4-proofs}

Before proving \Cref{thm:1/4-3/4}, we need a few supporting lemmas.
We begin with a result from the proof of \Cref{thm:large-k-no-convergence}, specialized to the case when all candidates are in $(1/4, 3/4)$. 

\begin{restatable}{lemma}{largekboundedsuppmap}
\label{lemma:large-k-1/4-3/4-map}
  Suppose $F_0\in \mathcal F$ is supported on $(1/4, 3/4)$. For $k \ge 5$ and $x \in (1/4, 1/2)$,
  \begin{equation*}
    F_{k, t}(x) \ge 1/2 + F_{k, t-1}(x)^k - (1-F_{k, t-1}(x))^k + (1-2F_{k, t-1}(x))^k / 2.
  \end{equation*}
\end{restatable}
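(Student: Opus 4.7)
The plan is to specialize the lower bound derived in the proof of \Cref{thm:large-k-no-convergence} to the present setting, where the assumption that $F_0$ is supported on $(1/4, 3/4)$ makes the left tail of $F_{k,t-1}$ vanish entirely rather than merely being small.

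First I would establish by induction on $t$ that $F_{k,t}$ is supported on $(1/4, 3/4)$ whenever $F_0$ is: the base case is the hypothesis, and in the inductive step every candidate in generation $t$ is drawn from $F_{k,t-1}$ (hence lies in $(1/4, 3/4)$), so the plurality winner, being one of these candidates, also lies in $(1/4, 3/4)$. This means that in particular $F_{k,t-1}(1/4) = 0$ and $F_{k,t-1}(3/4) = 1$, which plays the role of the parameter $\alpha$ from \Cref{thm:large-k-no-convergence} being exactly zero.

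Next, set $p = F_{k,t-1}(x)$ and lower-bound $\Pr(\plurality \text{ is outside } (x, 1-x))$ by enumerating the same cases used in \Cref{thm:large-k-no-convergence}, but now with the simplification that every candidate is automatically in $(1/4, 3/4)$ and \Cref{lemma:1/4-3/4} applies. Case (i): all $k$ candidates lie in $[0, x]$, which by \Cref{lemma:1/4-3/4} forces the winner to be in $[0, x]$; this happens with probability $p^k$, and by symmetry the analogous right-side case contributes another $p^k$. Case (ii): the leftmost candidate lies in $(1/4, x)$ and the rightmost in $(1-x, 3/4)$, in which case the winner (leftmost or rightmost, by \Cref{lemma:1/4-3/4}) lies outside $(x, 1-x)$. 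The probability of case (ii) is computed by inclusion--exclusion as
\begin{equation*}
1 - 2(1-p)^k + (1-2p)^k,
\end{equation*}
using $\Pr(\text{all in }(1/4, 1-x]) = \Pr(\text{all in }[x, 3/4)) = (1-p)^k$ (by symmetry of $F_{k,t-1}$) and $\Pr(\text{all in }[x, 1-x]) = (1-2p)^k$, together with $\Pr(\text{all in }(1/4, 3/4)) = 1$.

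Summing the contributions gives $\Pr(\plurality \text{ outside } (x, 1-x)) \ge 2p^k + 1 - 2(1-p)^k + (1-2p)^k$, and dividing by $2$ using the symmetry of $F_{k,t}$ yields the claimed bound. The only conceptual step is verifying the invariance of the support under the dynamics; once that is in place the rest is a direct specialization of the inclusion--exclusion argument from \Cref{thm:large-k-no-convergence}, with no new obstacles.
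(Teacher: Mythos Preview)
Your proposal is correct and follows essentially the same approach as the paper: specialize the inclusion--exclusion lower bound from the proof of \Cref{thm:large-k-no-convergence} by setting $\alpha = 0$, justified by the support invariance of the dynamics. The paper's proof is terser (it simply plugs $\alpha = 0$ into \eqref{eq:large-k-outer-lower-bound}), but your explicit inductive argument for support invariance and re-derivation of the inclusion--exclusion fill in exactly the details the paper elides.
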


\begin{proof}
We can use the same argument as in \Cref{thm:large-k-no-convergence} to find a lower bound on $F_{k, t}(x)$, but now $F_{k, t-1}(x) \le \alpha = 0$ since $F_0$ (and therefore all subsequent $F_{k, t}$) is supported only on $(1/4, 3/4)$. Plugging $\alpha = 0$ into~\eqref{eq:large-k-outer-lower-bound} yields the claim, noting $p = F_{k, t-1}(x)$.  
\end{proof}

This gives us an iterated map which bounds $F_{k, t}(x)$ from below. We can show that this map converges to $1/2$ in a large interval around $1/2$, meaning that the candidate distribution converges to one with no mass in this interval. We cannot give an explicit form for the basin of attraction of this map since it depends on a root of a polynomial of order $k$, but we can show the interval grows in $k$ and characterize it for $k = 5$. 

\begin{restatable}{lemma}{limitedsupport}
\label{lemma:large-k-1/4-3/4-map-convergence}
  For all $k \ge 5$, the iterated map given by $p' = 1/2 + p^k - (1-p)^k + (1-2p)^k/2$ converges to $1/2$ for all initial $p \in ([1-\sqrt{3/7}]/2= 0.172\dots, 1/2]$. Moreover, this map in non-decreasing in $p$ on $[0, 1/2)$.
\end{restatable}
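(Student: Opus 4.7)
\medskip

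Write $g_k(p)=1/2+p^k-(1-p)^k+(1-2p)^k/2$ and $h_k(p)=g_k(p)-p$. The plan is to dispose of two easy structural facts about $g_k$, settle the critical case $k=5$ by an explicit factorization, and then lift to $k\ge 5$ via a monotonicity-in-$k$ argument. First, for the non-decreasing claim, I would compute
\[
g_k'(p) = k p^{k-1} + k(1-p)^{k-1} - k(1-2p)^{k-1},
\]
and observe that $1-p\ge 1-2p\ge 0$ on $[0,1/2]$, so $(1-p)^{k-1}\ge (1-2p)^{k-1}$ and $g_k'(p)\ge kp^{k-1}\ge 0$. Second, I would verify that $g_k(p)\le 1/2$ on $[0,1/2]$ by rewriting
\[
(1-p)^k-p^k=(1-2p)\sum_{i=0}^{k-1}(1-p)^{k-1-i}p^i \ge (1-2p)(1-p)^{k-1}\ge (1-2p)^k\ge (1-2p)^k/2.
\]
Together these show that if $h_k(p)>0$ at some $p\in(0,1/2)$, the iterates from $p$ are strictly increasing and bounded above by $1/2$, hence converge to a fixed point of $g_k$ in $(p,1/2]$.

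Next I would handle $k=5$ explicitly via the substitution $q=1-2p$, so $p=(1-q)/2$ and $1-p=(1+q)/2$. A direct expansion collapses neatly to
\[
16\,h_5(p) = q(1-q)(1+q)(3-7q^2).
\]
On $p\in[0,1/2]$ (equivalently $q\in[0,1]$) every factor other than $3-7q^2$ is non-negative, so the fixed points of $g_5$ in $[0,1/2]$ are exactly $q\in\{0,\sqrt{3/7},1\}$, corresponding to $p\in\{1/2,(1-\sqrt{3/7})/2,0\}$, and $h_5(p)>0$ precisely on $((1-\sqrt{3/7})/2,1/2)$. Combined with the two structural facts, this gives convergence to $1/2$ for all $p\in((1-\sqrt{3/7})/2,1/2]$ when $k=5$.

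To extend this to $k>5$, I would show that $h_k(p)$ is non-decreasing in $k$ for each fixed $p\in[0,1/2]$; it then follows immediately that $h_k\ge h_5>0$ on $((1-\sqrt{3/7})/2,1/2)$ for every $k\ge 5$, and the convergence argument from the first paragraph applies, with the only possible limit being the unique fixed point $1/2$ in that interval (using $h_k\ge h_5>0$ to rule out any intermediate fixed point). For the monotonicity, a direct computation yields
\[
h_{k+1}(p)-h_k(p) = p\bigl[(1-p)^k - p^{k-1}(1-p) - (1-2p)^k\bigr],
\]
and the same telescoping-sum trick used above gives
\[
(1-p)^k-p^{k-1}(1-p) = (1-p)(1-2p)\sum_{i=0}^{k-2}(1-p)^{k-2-i}p^i \ge (1-p)^{k-1}(1-2p) \ge (1-2p)^k,
\]
so the bracketed quantity is non-negative. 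The main obstacle I anticipate is precisely this monotonicity-in-$k$ step: it is the only place the argument is not immediate, and it hinges on choosing the right elementary inequality after telescoping. The $k=5$ factorization and the two structural claims are then routine algebraic checks.
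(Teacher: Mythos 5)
Your proposal is correct, and it diverges from the paper's proof in a way worth noting. The paper establishes convergence by computing the derivative of the map at the fixed point $1/2$ (getting $k(1/2)^{k-2}<1$ for $k\ge 5$), locating the other fixed points for $k=5$ by the same quintic factorization you found (yours is the substitution $q=1-2p$ applied to the paper's $p(1-p)(1-2p)(-7p^2+7p-1)$; I checked they agree, e.g.\ both give $15/512$ at $p=1/4$), and then lifting to $k>5$ by showing the map is increasing in \emph{continuous} $k$: it differentiates with respect to $k$, which produces the expression $\tfrac12(1-2p)^k\log(1-2p)-(1-p)^k\log(1-p)+p^k\log p$, and proving this positive requires an integral representation of $\log$, a delicate polynomial sign analysis for the base case $k=3$, and a separate induction. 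Your route replaces that entire machinery with the discrete difference $h_{k+1}(p)-h_k(p)=p\bigl[(1-p)^k-p^{k-1}(1-p)-(1-2p)^k\bigr]$, whose non-negativity follows from two applications of the elementary telescoping identity $a^k-b^k=(a-b)\sum a^{k-1-i}b^i$ together with $1-p\ge 1-2p\ge 0$; this is substantially shorter and avoids logarithms entirely (I verified the difference identity numerically at $k=5$, $p=1/4$: both sides give $13/256$). You also replace the paper's local-stability argument at $1/2$ with a global monotone-convergence argument, which requires your additional observation that $g_k\le 1/2$ on $[0,1/2]$ — a fact the paper asserts implicitly but does not prove — and this makes your convergence claim cleaner, since you never need to discuss stability or basins of attraction. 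The only thing the paper's continuous-in-$k$ approach buys that yours does not is a statement for non-integer $k$, which is irrelevant here. In short: same skeleton (non-decreasing map, $k=5$ factorization, monotonicity in $k$), but a genuinely more elementary and self-contained execution of the hardest step.
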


\begin{proof}
First, we'll show $1/2$ is a stable fixed point of the map. Indeed, $1/2 + (1/2)^k - (1-1/2)^k + (1- 2 (1/2))^k / 2 = 1/2 + 1/2^k - 1/2^k = 1/2$. The stability of this fixed point is determined by the derivative
  \begin{align}
  \frac{\partial}{\partial p}\left( 1/2 + p^k - (1-p)^k + (1-2p)^k / 2\right) &= k p^{k-1} + k(1-p)^{k-1} - k(1-2p)^{k-1}.\label{eq:large-k-map-derivative}
\end{align}
At $1/2$, the derivative is $k (1/2)^{k-1} + k(1-1/2)^{k-1} - k(1-1)^{k-1} = k (1/2)^{k-2}$. For $k \ge 5$, $k (1/2)^{k-2} < 1$, showing the fixed point is stable. 

For $k=5$, we can find the other fixed points of the map by factoring:
\begin{align*}
 & 1/2 + p^5 - (1-p)^5 + (1- 2 p)^5 / 2 = p\\
  &\Leftrightarrow 1/2 + p^5 - (1-p)^5 + (1- 2 p)^5 / 2 - p = 0\\
  &\Leftrightarrow p(1-p)  (1-2 p) \left(-7 p^2+7 p-1\right) = 0.\\
  &\Leftrightarrow p \in \left\{0, (1 - \sqrt{3/7}) / 2, 1/2, (1 + \sqrt{3/7}) / 2, 1 \right\} = \left\{ 0, 0.172\dots, 0.5, 0.827\dots, 1\right\} .
\end{align*}

Plugging in the $k = 5$ fixed point $ (1 - \sqrt{3/7}) / 2 = 0.172\dots$ to the derivative~\eqref{eq:large-k-map-derivative} yields $ \approx 1.43$, so this fixed point in unstable. Next, note that the map monotonically increases in $p$ for $p \in (0, 1/2)$, since the derivative~\eqref{eq:large-k-map-derivative} is positive (as $1 - p > 1-2p$; similarly, the map is non-increasing on $[0, 1/2)$, as claimed). Thus, for $k= 5$, the map is larger than $p$ but smaller than $1/2$ for $p$ in $([1 - \sqrt{3/7}] / 2, 1/2)$ and initial values in this range converge to the stable fixed point $1/2$. 

The final step is to show the map is increasing in $k$ for $p \in (0, 1/2)$, which means that the basin of attraction only grows in $k$. To do this, consider the derivative of the map with respect to $k$:
\begin{align*}
   \frac{\partial}{\partial k}\left( 1/2 + p^k - (1-p)^k + (1-2p)^k / 2\right) &= \frac{1}{2} (1-2 p)^k \log (1-2 p) -(1-p)^k \log (1-p) + p^k \log p.
\end{align*} 
We establish this is positive in the following lemma. 
\begin{lemma}
  For all $k \ge 3$ and $0 < p < 1/2$,
    \begin{equation*}
    1/2 (1-2p)^k \log (1-2p) - (1-p)^k \log(1-p) + p^k \log p > 0.
  \end{equation*}
\end{lemma}
\begin{proof}
We thank River Li\footnote{\url{https://math.stackexchange.com/q/4789978}} for a key idea behind this analysis, based on the following integral trick:
  \begin{align*}
   \int_{0}^1\frac{x-1}{1+t(x-1)}\, dt &= \log(1 + t(x-1)) \big|_{t=0}^1\\
   &= \log(1 + 1(x-1)) - \log(1 + 0(x-1))\\
   &= \log x.
  \end{align*}
  Now, apply this identity to the function in question for $k=3$:
  \begin{align*}
    &1/2 (1-2p)^3 \log(1-2p) - (1-p)^3 \log (1-p) + p^3 \log p\\
    &= 1/2 (1-2p)^3\int_{0}^1 \frac{-2p}{1 + t(-2p)} \,dt \,-(1-p)^3  \int_{0}^1 \frac{-p}{1 + t(-p)} \,dt \, + p^3\int_{0}^1 \frac{p-1}{1 + t(p-1)} \,dt \,\\
    &= \int_{0}^1 \left( -\frac{p(1-2p)^3}{1 -2pt} + \frac{p(1-p)^3}{1 -pt} -   \frac{p^3(1-p)}{1 + pt-t}\right) \, dt.
  \end{align*}
  We'll show that the integrand is positive for all $t \in [0, 1]$, which implies the integral is also positive. Converting to a common denominator,
  \begin{align*}
    &-\frac{p(1-2p)^3}{1 -2pt} + \frac{p(1-p)^3}{1 -pt} -   \frac{p^3(1-p)}{1 + pt-t}\\
    &= \frac{-p(1-2p)^3 (1 -p t)(1 + p t-t) + p(1-p)^3(1 -2p t)(1 + p t-t) - p^3(1-p)(1 -2p t)(1 -p t)}{(1 -2p t)(1 -p t)(1 + p t-t)}.
  \end{align*}
  Since $0 < p < 1/2$ and $0 \le t \le 1$, the denominator is positive, so we just need to show the numerator is positive. We can factor:
  \begin{align*}
    &-p(1-2p)^3 (1 -p t)(1 + p t-t) + p(1-p)^3(1 -2p t)(1 + p t-t) - p^3(1-p)(1 -2p t)(1 -p t)\\
    &= p^2 (1 - 2 p) (3 - 4 p - 4 t + 4 p t + p^2 t + t^2 + p t^2 - 
   4 p^2 t^2 + 2 p^3 t^2)\\
   &= p^2 (1 - 2 p) 
   \left[(1 + p -4p^2 + 2p^3)t^2 - (4 - 4p - p^2) t + 3 - 4p\right].
  \end{align*}
  Again, since $p^2$ and $(1-2p)$ are positive, we just need to show the right factor is positive. Now, notice that $4 - 4p - p^2 > 0$ (since $p < 1/2$), and $t \le \frac{t^2+1}{2}$, so
  
  \begin{align*}
    (1 + p -4p^2 + 2p^3)t^2 - (4 - 4p - p^2) t + 3 - 4p &\ge (1 + p -4p^2 + 2p^3)t^2 - (4 - 4p - p^2) \frac{t^2 + 1}{2} + 3 - 4p\\
    &=\frac{1}{2} \left[2 - 4 p + p^2 - ( 2 - 6 p + 7 p^2 - 4 p^3) t^2\right]
  \end{align*}
  Now, $2 - 6 p + 7 p^2 - 4 p^3$ is positive for $p \in (0, 1/2)$. We can see this since its derivative, $-6 + 14p - 12p^2$ is negative (achieving a maximum of $-23/12$ at $p=7/12$) and the polynomial has a zero at $p = 1/2$. Thus, we can shrink the function by replacing $t^2$ by $1$:
  \begin{align*}
    \frac{1}{2} \left[2 - 4 p + p^2 - ( 2 - 6 p + 7 p^2 - 4 p^3) t^2\right] &\ge \frac{1}{2} \left[2 - 4 p + p^2 - ( 2 - 6 p + 7 p^2 - 4 p^3)\right] \\
    &= p(1 - p) (1 - 2 p).
  \end{align*}
  Finally, we see that this is positive for all $p \in (0, 1)$, which implies that
  \begin{align*}
    1/2 (1-2p)^3 \log(1-2p) - (1-p)^3 \log (1-p) + p^3 \log p > 0.
  \end{align*}
  
  We can now use this as a base case $k=3$ in an inductive argument. For the inductive case ($k \ge 3$), suppose 
  \begin{align*}
    1/2 (1-2p)^k \log (1-2p) - (1-p)^k \log(1-p) + p^k \log p > 0.
  \end{align*}
  Note that the first and third terms are negative, while the middle term is positive (because of the logs). So, let $x = \min\{1/2 (1-2p)^k \log (1-2p), p^k \log p\}$. We then have 
  \begin{align*}
  &1/2 (1-2p)^{k + 1} \log (1-2p) - (1-p)^{k+1} \log(1-p) + p^{k+1} \log p  \\
  & \ge (1-2p)x - (1-p)^{k+1} \log(1-p)   + px \tag{replace both terms by their minimum}\\
  &= (1-p)x - (1-p)(1-p)^{k} \log(1-p)\\
  &\ge (1-p)1/2 (1-2p)^k \log (1-2p) - (1-p)(1-p)^{k} \log(1-p) + (1-p)p^k \log p\\
  &= (1-p)\left[1/2 (1-2p)^k \log (1-2p) - (1-p)^k \log(1-p) + p^k \log p \right]\\
  &> 0 \tag{by IH}.
  \end{align*}
The claim then holds for all $k \ge 3$ by induction.
\end{proof}
Therefore, since the map only increases in $k$, the basin of attraction for the stable fixed point at $1/2$ can only grow as $k$ increases from $5$.
\end{proof}

\boundedsupp*

\begin{proof}
    Applying \Cref{lemma:bounded-convergence} to the bound from \Cref{lemma:large-k-1/4-3/4-map} and the convergence and monotonicity from \Cref{lemma:large-k-1/4-3/4-map-convergence} gives $\liminf_{t \rightarrow \infty} F_{k, t}(x) \ge 1/2$. Meanwhile,  $ F_{k, t}(x) \le 1/2$ for all $t$ by symmetry, so $\limsup_{t \rightarrow \infty} F_{k, t}(x) \le 1/2$. Therefore $\lim_{t \rightarrow \infty} F_{k, t}(x) = 1/2$.
\end{proof}

\boundedsuppdensity*

\begin{proof}
By \Cref{lemma:1/4-3/4}, only the left- or rightmost candidate can win. Thus, if a candidate at $1/2$ is the winner, all other candidates must either be left of $1/2$ or right of $1/2$. Moreover, if all other candidates are on one side, then a candidate at $1/2$ wins. Thus, a candidate at $1/2$ wins if and only if all other candidates fall on the left or the right.  Note that multiple candidates are at $1/2$ with probability 0, since the candidate distribution is atomless. By symmetry, this occurs with probability $2\cdot (1/2)^{k-1} = (1/2)^{k-2}$. Therefore $\Pr(\plurality(1/2, X_{2,t}, \dots, X_{k,t})) = (1/2)^{k-2}$. By \Cref{eq:pdf-iteration}, we then have
  \begin{align*}
    f_{k, t}(1/2) &= k  \cdot f_{k, t-1}(1/2) \cdot \Pr(\plurality(1/2, X_{2,t}, \dots, X_{k,t}))\\
    &= k  \cdot f_{k, t-1}(1/2) \cdot(1/2)^{k-2}.
  \end{align*}
We can now prove the claim by induction on $t$. For $t=0$, indeed $f_{k, 0}(1/2) = f_{k, 0}(1/2) \cdot \left[k (1/2)^{k-2}\right] ^0$. For $t \ge 1$, applying the inductive hypothesis to the above inequality yields
\begin{align*}
  f_{k, t}(1/2) &=  k  \cdot f_{k, t-1}(1/2) \cdot(1/2)^{k-2}\\
  & =  k  \cdot f_{k, 0}(1/2) \cdot \left[k (1/2)^{k-2}\right]^{t-1}\cdot(1/2)^{k-2}\\
  & =  f_{k, 0}(1/2) \cdot \left[k (1/2)^{k-2}\right]^{t}.\qedhere
\end{align*}
\end{proof}

\subsection{Proofs from \Cref{sec:nash}}\label{app:nash-proofs}

\centersmsne*

\begin{proof}
     Let $p'$ denote the mass at $1/2$ in generation $t+1$. If all $k$ candidates are at $1/2$, then so is the winner. Similarly, if all but one candidate are at $1/2$, then the lone deviant loses with vote share less than $1/2$ (with left--right tie-breaking). Thus, $p' \ge p^k + k p^{k - 1} (1-p)$. For any $k$, this lower bound is larger than $p$ for $p$ sufficiently close to 1. To see this, take the derivative at $p=1$: $\frac{d}{dp}\left[ p^k + k p^{k - 1} (1-p)\right] = kp^{k-1}+ k(k-1)p^{k-2} - k^2p^{k-1}$, which is $0$ at $p=1.$ Thus, for any small enough $\epsilon$, $p^k + k p^{k - 1} (1-p)$ is larger than $1 - \epsilon$ when evaluated at $1- \epsilon$. Thus, $p$ will converge to 1 by the monotone convergence theorem. 
\end{proof}

\twospikesmsne*
\begin{proof}
  By symmetry, every candidate has a $1/k$ win probability if they all follow this strategy. Suppose a deviant chooses a distribution that is supported on a point besides $x$ and $1-x$. If they choose a point between $x$ and $1-x$, they lose unless all other candidates pick the same side, which occurs w.p.\ $2(1/2)^{k-1} = 1/2^{k - 2}$. For $k \ge 4$, this is at most $1/k$ (and strictly less for $k > 4$), so sampling points in $(x, 1-x)$ does not increase with probability. Alternatively, if the deviant samples a point in $[0, x)$ (or symmetrically, $(1-x, 1)$), they certainly lose unless no candidates pick $x$, which occurs with probability $1/2^{k-1}$---smaller than $1/k$ for $k \ge 4$. Thus deviating to a point left of $x$ only hurts. Combining the above findings, a deviant does not benefit by sampling any point other than $x$ or $1-x$. Finally, a deviant does not benefit by changing the probability with which they sample either point by symmetry of the other candidates' choices. Since no deviation is beneficial, the strategy is a Nash equilibrium.
\end{proof}

\twospikeconvergence*
\begin{proof}
  Let $p'$ denote the mass at $x$ in generation $t+1$. If all candidates are at $x$ or $1-x$ (w.p.\ $(2p)^k$), then a candidate at $x$ wins with probability $1/2$ by symmetry. Alternatively, suppose all but one candidate are at $x$ or $1-x$. The probability that there is at least one candidate at both $x$ and $1-x$ and a wildcard is $k(1-2p)((2p)^{k-1} - 2p^{k-1})$. In such a case, the wildcard loses if they are in the middle (since they get vote share less than $1/4$) and they lose if they are on the outside (to the opposite outside candidate). Thus, $p' \ge (2p)^k/2 + k(1-2p)((2p)^{k-1} - 2p^{k-1}) / 2$. For $k \ge 5$, this is larger than $p$ for $p$ sufficiently close to $1/2$. To see this, take the derivative at $p = 1/2$: $f'(p) = \frac{d}{dp} \left[(2p)^k/2 + k(1-2p)((2p)^{k-1} - 2p^{k-1}) / 2 \right]$. We then find $f'(1/2) = 2^{2-k}k$, which is smaller than $1$ for $k \ge 5$. Thus, $p$ will converge to $1/2$ by the monotone convergence theorem.
\end{proof}

\psnes*
\begin{proof} We show in each case than no deviation is beneficial.
\begin{enumerate}
  \item If all $k$ candidates are at $1/2$, then the winner is chosen uniformly from the leftmost and rightmost candidate at $1/2$, who each get vote share $1/2$. If any one candidate moves to some point away from $1/2$, they get vote share strictly less than $1/2$, while the middle candidate opposite them gets vote share $1/2$ and wins. Thus, no candidate can benefit by deviating.
    \item Since $k \ge 4$, both points $x$ and $1-x$ have at least two candidates. The candidates who end up being the outermost at $x$ and $1-x$ each get vote share $x$, while the innermost candidates get vote share $1/2 - x$, which is strictly smaller since $x > 1/4$. Any candidate who moves towards the edge gets vote share strictly less than $x$ and loses to the other side. Any candidate who moves into $(x, 1-x)$ gets vote share $1/2 - x$ and loses. Finally, no candidate benefits by moving from $x$ to $1-x$ (or vice-versa), since they will always be in a lottery to be the outermost which has at least as many candidates as their original point (even if $k$ is odd and a deviant moves from the more populated point). \label{item:two-spike-eq}
    \item Since $k \ge 5$, both points $1/4$ and $3/4$ have at least two candidates. There is a three-way tie with vote share $1/4$ between the leftmost candidate, the rightmost candidate, and the one at 1/2---the inner candidates at $1/4$ and $3/4$ get vote share strictly less than $1/4$. As in the previous case, every candidate certainly loses if they move left of $1/4$ or right of $3/4$. The side candidates also lose if they move into $(1/4, 3/4)$. As before, there is no benefit to switching from $1/4$ to $3/4$ or vice-versa. Finally, the candidate at $1/2$ only shrinks their win probability by moving to $1/4$ or $3/4$ (and worsens their margin against a competitor by moving to any other point in $(1/4, 3/4)$). Thus, no candidate benefits by deviating.  \label{item:two-spike-eq-with-center}

    \item Every candidate gets vote share $1/k$ and has a chance to win. If any candidate moves, their partner will get vote share more than $1/k$ and the deviant will still have vote share at most $1/k$, so no one can deviate beneficially. 
  \end{enumerate}
\end{proof}

\begin{lemma}
\label{lemma:psne}
  Any PSNE with uniform voters, complete plurality mixmizing candidates, and left--right tie-breaking must satisfy the following properties:
    \begin{enumerate}[(a)]
    \item Any point occupied by one candidate cannot be between another occupied point and a boundary.
    \item Any point with at least three candidates must be adjacent to a boundary.
    \item Any point with two candidates not adjacent to a boundary must have the same vote share on both sides.
    \item In any two-point equilibrium, the points must be equidistant from $1/2$. 
  \end{enumerate} 
\end{lemma}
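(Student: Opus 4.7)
I plan to prove each of (a)--(d) by contradiction, constructing in each case a strictly profitable deviation for some candidate and thereby violating the PSNE assumption under complete plurality maximization. The workhorse template for (a)--(c) is a small local move by a deviator that strictly raises their own expected vote share while weakly lowering every competitor's; this suffices for strict improvement in the lex order (win probability, then expected margin against each competitor from strongest down).

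For (a), a single candidate at $y$ strictly between an occupied point $z$ and a boundary shifts $\epsilon$ toward $z$, picking up $\epsilon/2$ in vote share while only the candidates at $z$ collectively lose $\epsilon/2$, and all other shares remain fixed. For (b), with $k_y \ge 3$ candidates at $y$ and occupied neighbors $z_1 < y < z_2$, left--right tie-breaking assigns each candidate at $y$ expected share $(z_2 - z_1)/(2 k_y) \le (z_2 - z_1)/6$; WLOG assume $y \ge (z_1 + z_2)/2$ and deviate one candidate to $y - \epsilon$ for deterministic share $(y - z_1)/2 \ge (z_2 - z_1)/4$. A short computation (using $z_2 - y \le (z_2 - z_1)/2$ and $(k_y - 1)/k_y \ge 1/2$) shows the expected shares of the remaining candidates at $y$ and of the candidate at $z_1$ weakly decrease. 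Case (c) is the $k_y = 2$ analog, with the deviator moving to $y - \delta$ on the heavier side and the same algebra.

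For (d), WLOG the two points $a < b$ satisfy $a + b < 1$, so the candidates at $b$ collectively have vote share $1 - (a + b)/2 > 1/2$. If $k_a = 1$ or $b \le 1/2$, moving one candidate from $a$ to $c = 1 - b$ yields the deviator deterministic share $1/2$ and makes them the unique (or tied-for) maximum, a strict win-probability gain. For the remaining subcase ($b > 1/2$, $k_a \ge 2$), I consider three canonical target positions $c_1 = a - 0^+$, $c_2 = a + 0^+$, and $c_3 = b + 0^+$, yielding deterministic deviator shares $a$, $(b - a)/2$, and $1 - b$ respectively. A case split on $b/a$ shows at least one strictly exceeds the original expectation $(a + b)/(2 k_a)$: ``$b < (2 k_a - 1) a$'' suffices for $c_1$, ``$b > ((k_a + 1)/(k_a - 1)) a$'' for $c_2$, and these two cover all $(a, b)$ except the isolated threshold $b = 3a$ at $k_a = 2$, which in turn falls within $c_3$'s range via $a + 5b < 4$ (a consequence of $a + b < 1$ combined with the subcase hypothesis, which forces $b < 3/4$).

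The main obstacle will be the case analysis for (d) in the subcase $b > 1/2$ and $k_a \ge 2$: no single local deviation is universally profitable. For instance, the target $c = 1 - b$ alone can leave the deviator sandwiched between the remaining candidates at $a$ and the point $b$, losing vote share. Assembling the three-target family $\{c_1, c_2, c_3\}$ and verifying that in every configuration at least one yields strict lex improvement (the deviator's share strictly rises and no competitor's share strictly rises) is the technical heart of the argument; cases (a)--(c) are comparatively routine once the template is in hand.
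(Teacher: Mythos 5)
Your arguments for (a)--(c) are essentially the paper's: a local $\epsilon$-move that hands the deviator the larger of their point's two side-shares deterministically while weakly shrinking every competitor's realized share, which strictly improves the lexicographic objective. Those parts are fine (modulo the caveat that your stated template---``expected share strictly rises, competitors' weakly fall''---is not by itself sufficient under an objective that puts win probability first; it works in (a)--(c) only because your deviator's new deterministic share equals their old \emph{maximum} possible share).

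The gap is in (d), in the subcase $b>1/2$, $k_a\ge 2$, and it is structural: you only ever deviate a candidate at $a$, but there are configurations in which no deviation from $a$ is profitable and property (d) holds only because a candidate at $b$ can deviate. Concretely, take two candidates at $a=0.23$ and two at $b=0.69$ (so $a+b<1$, $b>1/2$, $b=3a$, your threshold case). The side-shares at $a$ are $0.23$ and $0.23$, those at $b$ are $0.23$ and $0.31$, so the right-share candidate at $b$ wins and each $a$-candidate has payoff vector $(\text{win prob},\,v_i-v_{(1)},\dots)=(0,\,-0.08,\,0,\,0)$. Your $c_3=b+0^+$ gives the deviator $0.31-\epsilon/2$ but hands the partner, now alone at $a$, the combined share $(a+b)/2=0.46$ (so a competitor's share strictly rises, contradicting your verification condition), leaving win probability at $0$ and worsening the first margin to $-0.15$. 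The moves $a\pm\epsilon$ cap the deviator at $0.23$ while boosting the partner by $\epsilon/2$, worsening a margin at the first point of lexicographic difference; every other target is worse still. Hence no $c_i$ (indeed no move by an $a$-candidate) is profitable here. The configuration is nonetheless not a PSNE because a $b$-candidate stepping to $b+\epsilon$ secures share $0.31-\epsilon/2$ against everyone else's $\le 0.23+\epsilon/2$ and converts a $1/2$ win probability into a sure win. This is exactly the case your plan omits; the paper's proof of (d) is a two-case split---if the inner share $(b-a)/2$ is at least the outer share $a$, deviate an $a$-candidate inward, and otherwise (necessarily $(b-a)/2<1-b$, since $a<1-b$) deviate a $b$-candidate outward to capture the strictly largest share---and the second case cannot be dispensed with. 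Separately, your expected-share criterion can select a strictly harmful deviation even where one of your targets does work: with $k_a=3$, $a=0.2$, $b=0.75$, both your $c_1$ and $c_2$ conditions hold, but $c_1$ converts a $1/6$ win probability into a sure loss (only $c_2$ is profitable), so the test ``new deterministic share exceeds old expected share'' does not certify a lexicographic improvement.
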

\begin{proof}
  \begin{enumerate}[(a)]
    \item Otherwise, the candidate can move away from the boundary to increase their vote share and decrease an opponent's vote share.
    \item Otherwise, one of the candidates could move distance $\epsilon$ either to the right or left of the point to guarantee the maximum possible vote share (instead of having  probability $<1/3$ of being on that side). This only decreases other vote shares---except the new left- or rightmost candidate created, which only has vote share $\epsilon / 2$ (note this requires at least three candidates; with only two, moving increases the vote share of the partner). When adjacent to a boundary, this doesn't work---moving $\epsilon$ towards a boundary would decrease the vote share achieved (even if it's guaranteed), which could create a plurality loss, as in Equilibria~\ref{item:two-spike-eq} and~\ref{item:two-spike-eq-with-center} from \Cref{thm:psnes}. 
    \item If not, then one of the candidates can move $\epsilon$ towards the side with higher vote share to guarantee it. For small enough $\epsilon$, the deviant will have higher vote share than their former partner. This also decreases the vote share of the bordering candidates the deviant moved towards. Thus, this either increases the plurality win probability of the deviant or at least decreases the expected margin against the winner.
    \item Suppose not, and call the points $x$ and $y$. Assume without loss of generality that $x < y$ and $x < 1-y$. Let $z = (y - x) / 2$ be the vote share inner candidates get. If $z \ge y$, then a candidate at $x$ can move right by $\epsilon$ to improve their winning chances, getting certain vote share $z$ rather than a chance at it. If $z < 1-y$, then a candidate at $y$ can move right by $\epsilon$ to guarantee a win. Thus the points must be equidistant from $1/2$.
  \end{enumerate}
\end{proof}

\begin{theorem}
\label{thm:small-k-psnes}
 The following is a complete list of the PSNEs with uniform voters,complete plurality maximizing candidates, and left--right tie-breaking for small $k$:
  \begin{enumerate}[leftmargin=4\parindent,labelsep=4pt]
    \item[$k=2$:] $(1/2, 1/2)$
    
    \item[$k=3$:] $(1/2, 1/2, 1/2)$
    
    \item[$k=4$:]
    \begin{enumerate}[leftmargin=15pt,labelsep=4pt]
    \item $(1/2, 1/2, 1/2, 1/2)$
    \item $(1/4, 1/4, 3/4, 3/4)$
    \item $(x, x, 1-x, 1-x)$, for any $x \in (1/4, 1/2)$
    \end{enumerate} 
    
    \item[$k=5$:]
    \begin{enumerate}[leftmargin=15pt,labelsep=4pt]
    \item $(1/2, 1/2, 1/2, 1/2, 1/2)$
    \item $(1/4, 1/4, 1/2, 3/4, 3/4)$
    \item $(x, x, 1-x, 1-x, 1-x)$, for any $x \in (1/4, 1/2)$
    \item $(x, x, x, 1-x, 1-x)$, for any $x \in (1/4, 1/2)$.
    \end{enumerate} 
  \end{enumerate}
  
\end{theorem}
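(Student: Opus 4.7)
My plan is to verify that each listed configuration is a PSNE (forward direction) and to rule out every other candidate configuration (reverse direction), organizing the reverse argument by the number $m$ of distinct occupied points. I will obtain the forward direction directly from \Cref{thm:psnes}: item~1 of that theorem yields configuration~(a) for each $k$; item~2 yields $(x,x,1-x,1-x)$ for $k=4$ and, thanks to the ``last candidate at either $x$ or $1-x$'' clause for odd $k$, both $(x,x,1-x,1-x,1-x)$ and $(x,x,x,1-x,1-x)$ for $k=5$; item~3 yields $(1/4,1/4,1/2,3/4,3/4)$ for $k=5$; and item~4 (Cox's equilibrium for even $k$) yields $(1/4,1/4,3/4,3/4)$ for $k=4$.

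For the reverse direction, I will handle the case $m = 1$ first by noting that if all $k$ candidates sit at some $x \ne 1/2$, then a deviation to $1-x$ captures vote share $1/2$ deterministically, beating the $\max(x, 1-x)$ shared (via left--right tie-breaking) among the leftover cluster; hence $x = 1/2$ is forced. For $m = 2$, I will apply \Cref{lemma:psne}(d) to pin the two points as $x$ and $1-x$ with $x < 1/2$, and then enumerate over allocations $(a, k - a)$ with $a \le k - a$. For $k = 2$, a short move toward $1/2$ strictly increases vote share and wins deterministically, ruling out the symmetric split. For $k \in \{3, 4, 5\}$, vote-share bookkeeping (the four extremal ``roles'' at $x$ and $1-x$ receive shares in $\{x, 1/2 - x\}$ while any interior candidate receives $0$) combined with deviation checks shows that only the balanced (or near-balanced, for odd $k$) allocations with $x \in [1/4, 1/2)$ survive, exactly matching the listed two-point families---with $x = 1/4$ recovering Cox's equilibrium when $k = 4$.

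For $m \ge 3$, \Cref{lemma:psne}(a) forces every isolated candidate to occupy a boundary point $0$ or $1$. For $k = 3$ the only surviving structure is $(0, 1/2, 1)$, which I will rule out by showing each boundary candidate strictly improves their expected margin against the central winner at $1/2$ by shifting slightly inward---a profitable deviation under complete plurality maximization, since the win probability is already $0$ and the margin strictly improves. Analogous inward-deviation arguments eliminate the remaining $m \ge 3$ structures for $k = 4$. For $k = 5$, the analysis will pin the unique surviving $3$-point PSNE at $(1/4, 1/4, 1/2, 3/4, 3/4)$, while $4$- and $5$-point configurations are eliminated by similar boundary- and flanking-based deviations.

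The hardest step will be the $m = 2$, $k = 5$ subcase, which admits two allocations $(1, 4)$ and $(2, 3)$ across a continuum of positions $x$: I will need to show $(1, 4)$ is never a PSNE (the singleton's win probability or expected margin is always improvable by a small shift) and that $(2, 3)$ is a PSNE precisely on $x \in (1/4, 1/2)$. A secondary subtlety is that \Cref{lemma:psne}(b) as stated does not directly apply when all candidates occupy a single interior point such as $1/2$, because the ``move to guarantee the maximum vote share'' deviation used in that lemma's proof does not strictly beat the current outcome there; accordingly, I will verify the $m = 1$, $x = 1/2$ case by direct deviation computation rather than by invoking the lemma.
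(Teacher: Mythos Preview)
Your overall plan---invoke \Cref{thm:psnes} for sufficiency, then rule out all other configurations by casing on the number $m$ of distinct occupied points---matches the paper's approach. The principal gap is your reading of \Cref{lemma:psne}(a). You write that it ``forces every isolated candidate to occupy a boundary point $0$ or $1$,'' but this is backwards, and it even contradicts the equilibrium $(1/4,1/4,1/2,3/4,3/4)$ you are trying to recover, whose singleton sits at $1/2$. The lemma states that a singleton cannot lie between another occupied point and a boundary; its proof (slide inward to strictly raise your own share while cutting a neighbor's) applies just as well to a singleton at $0$ or $1$. The operative consequence, which the paper uses throughout, is that \emph{no singleton may be the leftmost or rightmost occupied point}. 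That single observation eliminates every $m\ge 2$ configuration for $k=3$, every $m\ge 3$ configuration for $k=4$ (four candidates cannot fill three or more positions with both endpoints of multiplicity $\ge 2$), the $(1,4)$ split you propose to argue away by hand for $k=5$, and every $m\ge 3$ configuration for $k=5$ except the $2$--$1$--$2$ pattern with a flanked singleton. Your detour through $(0,1/2,1)$ and other boundary configurations is therefore unnecessary and rests on a misquoted lemma.

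Two smaller slips. In the $m=1$ step, after one candidate moves to $1-x$ the leftover cluster's maximum share is $\max(x,\,1/2-x)$, not $\max(x,\,1-x)$; and for $k=2$ that move only produces a tie (both receive $1/2$), so it is not a profitable deviation---a move to $1/2$ is. With \Cref{lemma:psne}(a) applied correctly and these corrections made, the remaining work (pinning the two-point range to $x\in[1/4,1/2)$ for $k=4$ and $x\in(1/4,1/2)$ for $k=5$, and showing the $2$--$1$--$2$ structure for $k=5$ must land exactly at $(1/4,1/4,1/2,3/4,3/4)$) proceeds as in the paper.
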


\begin{proof}
  We know by \Cref{thm:psnes} that these are all Nash equilibria, so we only need to show no other equilibria exist.
   \begin{enumerate}[leftmargin=3\parindent]
    \item[$k=2$:] If a candidate is at a point other than $1/2$, then they can move to $1/2$ and do strictly better (regardless of their opponent's position), so no other equilibrium is possible.
    \item[$k=3$:] We know no point with one candidate can be adjacent to a boundary in equilibrium by \Cref{lemma:psne}. So all candidates must be at the same point. If that point is anything other than $1/2$, it would not be an equilibrium, so $(1/2, 1/2, 1/2)$ must be the unique equilibrium.
    \item[$k=4$:] There is no way to have a single-candidate point not adjacent to a boundary, since no partition of 4 that includes a 1 has two numbers larger than 1 to flank the single-candidate point. Thus, any equilibrium either has two points with two candidates each or one point with all four candidates. The latter type of equilibrium must be at $1/2$, so we only need to characterize the two-point equilibria. 
    
    We know by \Cref{lemma:psne} that in two-point equilibria, the points must be  equidistant from $1/2$ and so can be written as $x$ and $1-x$. Now, we can show that we must have $x \in [1/4, 1/2)$. If $x < 1/4$, then a candidate at $x$ can move right by $\epsilon$ to guarantee the winning inner vote share rather than a $1/2$ chance at it. Thus, the only two point equilibria are those claimed.
    
    \item[$k=5$:] A single-point equilibrium must be at $1/2$. 
    
    A two-point equilibrium cannot be a 1--4 split since the lone candidate would be adjacent to a boundary, so any two-point equilibrium must be a 2--3 split. By \Cref{lemma:psne}, the points must be equidistant from $1/2$, so call them $x$ and $1-x$. We cannot have $x < 1/4$, or else a candidate at $x$ would move right by $\epsilon$ to guarantee a winning vote share. Unlike for $k=4$, we also cannot have $x = 1/4$. If we did, consider the point with 3 candidates. One of them could move to $1/2$ to guarantee vote share $1/4$, which would be tied for the winning share, whereas they only had a $2/3$ chance of getting that vote share before. Thus the only two-point equilibria are those claimed.
    
    We cannot have four- or five-point equilibria, since we would then be forced to place single-candidate points adjacent to the boundary. However, we can have a three-point equilibrium with a 2-1-2 split (a 3-1-1 is impossible for the same boundary reason). So we only need to show that the claimed 2-1-2 equilibrium is the only one. First, the lone candidate must be at the midpoint of the two outer points to optimize its most competitive margin. Next, we'll show the outer points must be equidistant from the boundaries. Suppose not: say the outer points are $x$ and $y$ with $x < 1-y$. If the inner vote share at $x$ ($(y - x)/4$) is smaller than $x$, then a candidate at $x$ has no chance of winning. But by moving to $x- \epsilon$ for some small $\epsilon$, they can guarantee the larger vote share and reduce their expected losing margin. If the inner vote share at $x$ is larger than the outer vote share, then a candidate at $x$ loses to the lone inner candidate; but again, they can move to $x + \epsilon$ improve their expected losing margin. The only remaining option is that the inner and outer shares at $x$ are equal (so the inner share is $x$ and the middle candidate gets vote share $2x$). In that case, consider subcases based on $1-y$. If $1-y < 2x$, then the middle candidate always wins. Since $1-y > x$, a candidate at $y$ can move to $y + \epsilon$ to reduce their expected losing margin against the middle candidate. If $1-y > 2x$, then a candidate at $y$ can move to $y + \epsilon$ to guarantee a win rather than a $1/2$ chance. If $1-y = 2x$, then a candidate at $x$ can move to $y$, giving it a chance to enter the winning lottery for vote share $2x$ (note that the candidate they leave behind at $x$ now also gets vote share $2x$).  
    
    Now that we know the outer points are equidistant from the boundaries, the middle candidate must then be at $1/2$. We can now show that the only possible outer points $x$ and $1-x$ are given by $x = 1/4$. If $x > 1/4$, then the middle candidate cannot win; but they could move to $x$ to join a lottery for the winning vote share. If $x <1/4$, then a candidate at $x$ cannot win. If the inner vote share at $x$ is larger than $x$, a candidate at $x$ can move to $x+\epsilon$ to reduce their expected losing margin against the middle candidate. Symmetrically, if $x$ is larger than the inner vote share, then a candidate at $x$ can move to $x - \epsilon$ to reduce their expected losing margin. Finally, consider the case where the inner and outer vote shares are equal ($x = 1/6$). A candidate at $x$ can move into $(1/6, 1/2)$, keeping the same vote share $1/6$ while reducing the vote share of the winning candidate at $1/2$, thus improving their losing margin. Therefore, the only three-point equilibrium is the one claimed with $x = 1/4$. 
    \end{enumerate}
\end{proof}


\section{Formal definitions of variants}\label{app:variants}
To handle non-uniform voter distributions, we define $\plurality_{V}(x_1, \dots, x_k)$ to be the position of the plurality winner among $x_1, \dots, x_k$ if the voter distribution is $V$.
\begin{definition}
Given an initial candidate distribution $F_0$ and a candidate count $k$, and a distribution of voters $V$, the \emph{replicator dynamics for candidate positioning with voter distribution $V$} are, for all $t > 0$,
  \begin{align*}
  &F_{k, t}(x) = \Pr( \plurality_{V}(X_{1,t},\dots, X_{k,t}) \le x), \\
  & X_{i,t}\sim F_{k, t-1},\ \forall i = 1, \dots, k.
  \end{align*} 
\end{definition} 

\begin{definition}
Given an initial candidate distribution $F_0$, a candidate count $k$, and $m$ generations of memory, the \emph{replicator dynamics for candidate positioning with $m$ generations of memory} are, for all $t > 0$,
  \begin{align*}
  &F_{k, t}(x) = \Pr( \plurality(X_{1,t},\dots, X_{k,t}) \le x),\\
  & X_{i,t}\sim \begin{cases}
     F_{k, t-1} &\text{w.p.\ $\frac{1}{m}$}\\
     ...\\
     F_{k, t-m} &\text{w.p.\ $\frac{1}{m}$}.\\
  \end{cases}
  \end{align*} 
\end{definition} 

\begin{definition}
Given an initial candidate distribution $F_0$, a candidate count $k$, and a variance $\sigma^2 \in [0, 1]$, the \emph{replicator dynamics for candidate positioning with $\sigma^2$-perturbation noise} are, for all $t > 0$,
  \begin{align*}
  &F_{k, t}(x) = \Pr( \plurality(X_{1,t},\dots, X_{k,t}) \le x),\\
  & X_{i,t}\sim \min(1, \max(0, F_{k, t-1} + \mathcal{N}(0,\sigma^2))),\ \forall i = 1, \dots, k.
  \end{align*} 
\end{definition}

\begin{definition}
Given an initial candidate distribution $F_0$, and candidate count proportions $p_2, p_3, \dots, p_{k_\text{max}}$, the \emph{replicator dynamics for candidate positioning with variable candidate counts} are, for all $t > 0$,
  \begin{align*}
  &F_{t}(x) = \sum_{k = 2}^{k_\text{max}} p_k \cdot \Pr( \plurality(X_{1,t},\dots, X_{k,t}) \le x),\\
  & X_{i,t}\sim F_{t-1}.
  \end{align*} 
\end{definition}

Let $F^{(i)}_{k, t}$ denote the distribution of the $i$-th place candidate generation $t$ with $k$ candidates per election, where $i \leq k$. We define $F^{(i)}_{k, 0} = F_0$ for all $k$ and all $i$, although we typically write $F_0$ since the initial distribution does not depend on $k$. Under this notation $F^{(1)}_{k, t} = F_{k, t}$ where $F_{k, t}$ is the CDF of the winner distribution.  Then, 
\begin{definition}
Given an initial candidate distribution $F_0$, a candidate count $k$, and $h \leq k$, the \emph{replicator dynamics for candidate positioning with top-$h$ copying} are, for all $t > 0$,
  \begin{align*}
  &F_{k, t}(x) = \Pr( \plurality(X_{1,t},\dots, X_{k,t}) \le x),\\
  & X_{i,t}\sim \begin{cases}
     F^{(1)}_{k, t} &\text{w.p.\ $\frac{1}{h}$}\\
     ...\\
     F^{(h)}_{k, t} &\text{w.p.\ $\frac{1}{h}$}.\\ 
  \end{cases}
  \end{align*} 
\end{definition}

\end{document}